\newcommand{\tr}{\mathrm{tr}}
\newcommand{\cov}{\mathrm{cov}}
\newtheorem {theorem}{Theorem}
\newtheorem {assumption}{Assumption}
\newtheorem {corollary}{Corollary}
\newtheorem {definition}{Definition}
\newtheorem {example}{Example}
\newtheorem {lemma}{Lemma}
\newtheorem {proposition}{Proposition}
\newtheorem {remark}{Remark}
\newtheorem* {E1}{E1}
\newtheorem* {E2}{E2}
\newtheorem* {E3}{E3}
\numberwithin {equation}{section} 
\numberwithin {theorem}{section} 
\numberwithin {proposition}{section}
\numberwithin {lemma}{section} 
\numberwithin {corollary}{section} 
\providecommand{\BOXEDSPECIAL}[4]{\hbox to #2{\raise #3\hbox to #2{\null #1\hfil}}}
\chardef\@x10\chardef\@xv60
\def\tcitime{
\def\@time{%
  \@minute\time\@hour\@minute\divide\@hour\@xv
  \ifnum\@hour<\@x 0\fi\the\@hour:%
  \multiply\@hour\@xv\advance\@minute-\@hour
  \ifnum\@minute<\@x 0\fi\the\@minute
  }}%
\def\QCTOpt[#1]#2{%
  \def\QCTOptB{#1}
  \def\QCTOptA{#2}
}
\def\QCTNOpt#1{%
  \def\QCTOptA{#1}
  \let\QCTOptB\empty
}
\def\Qct{%
  \@ifnextchar[{%
    \QCTOpt}{\QCTNOpt}
}
\def\QCBOpt[#1]#2{%
  \def\QCBOptB{#1}
  \def\QCBOptA{#2}
}
\def\QCBNOpt#1{%
  \def\QCBOptA{#1}
  \let\QCBOptB\empty
}
\def\Qcb{%
  \@ifnextchar[{%
    \QCBOpt}{\QCBNOpt}
}
\def\PrepCapArgs{%
  \ifx\QCBOptA\empty
    \ifx\QCTOptA\empty
      {}%
    \else
      \ifx\QCTOptB\empty
        {\QCTOptA}%
      \else
        [\QCTOptB]{\QCTOptA}%
      \fi
    \fi
  \else
    \ifx\QCBOptA\empty
      {}%
    \else
      \ifx\QCBOptB\empty
        {\QCBOptA}%
      \else
        [\QCBOptB]{\QCBOptA}%
      \fi
    \fi
  \fi
}
\def\GRAPHICSPS#1{%
 \ifcase\GRAPHICSTYPE%\GRAPHICSTYPE=0
   \special{ps: #1}%
 \or%\GRAPHICSTYPE=1
   \special{language "PS", include "#1"}%
%%%\or%\GRAPHICSTYPE=2
%%%  #1%
 \fi
}%
\def\graffile#1#2#3#4#5{%
    \bgroup
    \leavevmode
    \@ifundefined{bbl@deactivate}{\def~{\string~}}{\activesoff}
    \raise -#4 \BOXTHEFRAME{%
%%DAvB 96/03/07
%%        \hbox to #2{\raise #3\hbox to #2{\null #1\hfil}}}%
       \BOXEDSPECIAL{#1}{#2}{#3}{#5}}%
    \egroup
}%
\def\draftbox#1#2#3#4{%
 \leavevmode\raise -#4 \hbox{%
  \frame{\rlap{\protect\tiny #1}\hbox to #2%
   {\vrule height#3 width\z@ depth\z@\hfil}%
  }%
 }%
}%
\newif\ifwasdraft
\def\GRAPHIC#1#2#3#4#5{%
 \ifnum\draft=\@ne\draftbox{#2}{#3}{#4}{#5}%
%%DAvB 06/03/07
%%  \else\graffile{#1}{#3}{#4}{#5}%
  \else\graffile{#1}{#3}{#4}{#5}{#2}%
  \fi
 }%
\def\addtoLaTeXparams#1{%
    \edef\LaTeXparams{\LaTeXparams #1}}%
\newif\ifBoxFrame \BoxFramefalse
\newif\ifOverFrame \OverFramefalse
\newif\ifUnderFrame \UnderFramefalse
\def\BOXTHEFRAME#1{%
   \hbox{%
      \ifBoxFrame
         \frame{#1}%
      \else
         {#1}%
      \fi
   }%
}
\def\doFRAMEparams#1{\BoxFramefalse\OverFramefalse\UnderFramefalse\readFRAMEparams#1\end}%
\def\readFRAMEparams#1{%
 \ifx#1\end%
  \let\next=\relax
  \else
  \ifx#1i\dispkind=\z@\fi
  \ifx#1d\dispkind=\@ne\fi
  \ifx#1f\dispkind=\tw@\fi
  \ifx#1t\addtoLaTeXparams{t}\fi
  \ifx#1b\addtoLaTeXparams{b}\fi
  \ifx#1p\addtoLaTeXparams{p}\fi
  \ifx#1h\addtoLaTeXparams{h}\fi
  \ifx#1X\BoxFrametrue\fi
  \ifx#1O\OverFrametrue\fi
  \ifx#1U\UnderFrametrue\fi
  \ifx#1w
    \ifnum\draft=1\wasdrafttrue\else\wasdraftfalse\fi
    \draft=\@ne
  \fi
  \let\next=\readFRAMEparams
  \fi
 \next
 }%
\def\IFRAME#1#2#3#4#5#6{%
      \bgroup
      \let\QCTOptA\empty
      \let\QCTOptB\empty
      \let\QCBOptA\empty
      \let\QCBOptB\empty
      #6%
      \parindent=0pt%
      \leftskip=0pt
      \rightskip=0pt
      \setbox0 = \hbox{\QCBOptA}%
      \@tempdima = #1\relax
      \ifOverFrame
          % Do this later
          \typeout{This is not implemented yet}%
          \show\HELP
      \else
         \ifdim\wd0>\@tempdima
            \advance\@tempdima by \@tempdima
            \ifdim\wd0 >\@tempdima
               \textwidth=\@tempdima
               \setbox1 =\vbox{%
                  \noindent\hbox to \@tempdima{\hfill\GRAPHIC{#5}{#4}{#1}{#2}{#3}\hfill}\\%
                  \noindent\hbox to \@tempdima{\parbox[b]{\@tempdima}{\QCBOptA}}%
               }%
               \wd1=\@tempdima
            \else
               \textwidth=\wd0
               \setbox1 =\vbox{%
                 \noindent\hbox to \wd0{\hfill\GRAPHIC{#5}{#4}{#1}{#2}{#3}\hfill}\\%
                 \noindent\hbox{\QCBOptA}%
               }%
               \wd1=\wd0
            \fi
         \else
            %\show\BBB
            \ifdim\wd0>0pt
              \hsize=\@tempdima
              \setbox1 =\vbox{%
                \unskip\GRAPHIC{#5}{#4}{#1}{#2}{0pt}%
                \break
                \unskip\hbox to \@tempdima{\hfill \QCBOptA\hfill}%
              }%
              \wd1=\@tempdima
           \else
              \hsize=\@tempdima
              \setbox1 =\vbox{%
                \unskip\GRAPHIC{#5}{#4}{#1}{#2}{0pt}%
              }%
              \wd1=\@tempdima
           \fi
         \fi
         \@tempdimb=\ht1
         \advance\@tempdimb by \dp1
         \advance\@tempdimb by -#2%
         \advance\@tempdimb by #3%
         \leavevmode
         \raise -\@tempdimb \hbox{\box1}%
      \fi
      \egroup%
}%
\def\DFRAME#1#2#3#4#5{%
 \begin{center}
     \let\QCTOptA\empty
     \let\QCTOptB\empty
     \let\QCBOptA\empty
     \let\QCBOptB\empty
     \ifOverFrame 
        #5\QCTOptA\par
     \fi
     \GRAPHIC{#4}{#3}{#1}{#2}{\z@}
     \ifUnderFrame 
        \nobreak\par\nobreak#5\QCBOptA
     \fi
 \end{center}%
 }%
\def\FFRAME#1#2#3#4#5#6#7{%
 \begin{figure}[#1]%
  \let\QCTOptA\empty
  \let\QCTOptB\empty
  \let\QCBOptA\empty
  \let\QCBOptB\empty
  \ifOverFrame
    #4
    \ifx\QCTOptA\empty
    \else
      \ifx\QCTOptB\empty
        \caption{\QCTOptA}%
      \else
        \caption[\QCTOptB]{\QCTOptA}%
      \fi
    \fi
    \ifUnderFrame\else
      \label{#5}%
    \fi
  \else
    \UnderFrametrue%
  \fi
  \begin{center}\GRAPHIC{#7}{#6}{#2}{#3}{\z@}\end{center}%
  \ifUnderFrame
    #4
    \ifx\QCBOptA\empty
      \caption{}%
    \else
      \ifx\QCBOptB\empty
        \caption{\QCBOptA}%
      \else
        \caption[\QCBOptB]{\QCBOptA}%
      \fi
    \fi
    \label{#5}%
  \fi
  \end{figure}%
 }%
\def\makeactives{
  \catcode`\"=\active
  \catcode`\;=\active
  \catcode`\:=\active
  \catcode`\'=\active
  \catcode`\~=\active
}
   \gdef\activesoff{%
      \def"{\string"}
      \def;{\string;}
      \def:{\string:}
      \def'{\string'}
%%DAvB      \def~{\string~}%This def causes ~spaces in figure captions to print as ~ in Babel dutch
      %\bbl@deactivate{"}%
      %\bbl@deactivate{;}%
      %\bbl@deactivate{:}%
      %\bbl@deactivate{'}%
    }
\def\FRAME#1#2#3#4#5#6#7#8{%
 \bgroup
 \ifnum\draft=\@ne
   \wasdrafttrue
 \else
   \wasdraftfalse%
 \fi
 \def\LaTeXparams{}%
 \dispkind=\z@
 \def\LaTeXparams{}%
 \doFRAMEparams{#1}%
 \ifnum\dispkind=\z@\IFRAME{#2}{#3}{#4}{#7}{#8}{#5}\else
  \ifnum\dispkind=\@ne\DFRAME{#2}{#3}{#7}{#8}{#5}\else
   \ifnum\dispkind=\tw@
    \edef\@tempa{\noexpand\FFRAME{\LaTeXparams}}%
    \@tempa{#2}{#3}{#5}{#6}{#7}{#8}%
    \fi
   \fi
  \fi
  \ifwasdraft\draft=1\else\draft=0\fi{}%
  \egroup
 }%
\def\TEXUX#1{"texux"}
\def\func#1{\mathop{\rm #1}\nolimits}%
\long\def\QQQ#1#2{%
     \long\expandafter\def\csname#1\endcsname{#2}}%
\long\def\QQA#1#2{}%
\newcommand{\QTR}[2]{\csname text#1\endcsname{#2}}%%DAvB
\def\EXPAND#1[#2]#3{}%
\def\NOEXPAND#1[#2]#3{}%
\def\LaTeXparent#1{}%
\def\ChildStyles#1{}%
\def\ChildDefaults#1{}%
\def\QTagDef#1#2#3{}%
  \providecommand{\UNICODE}[2][]{}
\def\QQfnmark#1{\footnotemark}
 \def\abstract{%
  \if@twocolumn
   \section*{Abstract (Not appropriate in this style!)}%
   \else \small 
   \begin{center}{\bf Abstract\vspace{-.5em}\vspace{\z@}}\end{center}%
   \quotation 
   \fi
  }%
   \def\registered{\relax\ifmmode{}\r@gistered
                    \else$\m@th\r@gistered$\fi}%
 \def\r@gistered{^{\ooalign
  {\hfil\raise.07ex\hbox{$\scriptstyle\rm\text{R}$}\hfil\crcr
  \mathhexbox20D}}}}{}%
\newdimen\theight
\def\Column{%
 \vadjust{\setbox\z@=\hbox{\scriptsize\quad\quad tcol}%
  \theight=\ht\z@\advance\theight by \dp\z@\advance\theight by \lineskip
  \kern -\theight \vbox to \theight{%
   \rightline{\rlap{\box\z@}}%
   \vss
   }%
  }%
 }%
\def\qed{%
 \ifhmode\unskip\nobreak\fi\ifmmode\ifinner\else\hskip5\p@\fi\fi
 \hbox{\hskip5\p@\vrule width4\p@ height6\p@ depth1.5\p@\hskip\p@}%
 }%
\def\miss{\hbox{\vrule height2\p@ width 2\p@ depth\z@}}%
\def\tcol#1{{\baselineskip=6\p@ \vcenter{#1}} \Column}  %
\def\newfmtname{LaTeX2e}
  \DeclareOldFontCommand{\rm}{\normalfont\rmfamily}{\mathrm}
  \DeclareOldFontCommand{\sf}{\normalfont\sffamily}{\mathsf}
  \DeclareOldFontCommand{\tt}{\normalfont\ttfamily}{\mathtt}
  \DeclareOldFontCommand{\bf}{\normalfont\bfseries}{\mathbf}
  \DeclareOldFontCommand{\it}{\normalfont\itshape}{\mathit}
  \DeclareOldFontCommand{\sl}{\normalfont\slshape}{\@nomath\sl}
  \DeclareOldFontCommand{\sc}{\normalfont\scshape}{\@nomath\sc}
  \newcounter{equationnumber}  
  \def\mathletters{%
     \addtocounter{equation}{1}
     \edef\@currentlabel{\theequation}%
     \setcounter{equationnumber}{\c@equation}
     \setcounter{equation}{0}%
     \edef\theequation{\@currentlabel\noexpand\alph{equation}}%
  }
    \def\BibTeX{{\rm B\kern-.05em{\sc i\kern-.025em b}\kern-.08em
                 T\kern-.1667em\lower.7ex\hbox{E}\kern-.125emX}}}{}%
\def\AmS{{\protect\usefont{OMS}{cmsy}{m}{n}%
                A\kern-.1667em\lower.5ex\hbox{M}\kern-.125emS}}}{}%
\def\@@eqncr{\let\@tempa\relax
    \ifcase\@eqcnt \def\@tempa{& & &}\or \def\@tempa{& &}%
      \else \def\@tempa{&}\fi
     \@tempa
     \if@eqnsw
        \iftag@
           \@taggnum
        \else
           \@eqnnum\stepcounter{equation}%
        \fi
     \fi
     \global\tag@false
     \global\@eqnswtrue
     \global\@eqcnt\z@\cr}
\def\TCItag{\@ifnextchar*{\@TCItagstar}{\@TCItag}}
\def\@TCItag#1{%
    \global\tag@true
    \global\def\@taggnum{(#1)}}
\def\@TCItagstar*#1{%
    \global\tag@true
    \global\def\@taggnum{#1}}
\renewcommand\subsection{\@startsection{subsection}{2}%
  \z@{0.2\linespacing\@plus 0.2\linespacing}{-.5em}%
  {\normalfont\bfseries}}
\begin{document}

\newgeometry{left=1.25in, right=1.25in,bottom=1.25in,top=1.25in}
\setstretch{1.25}
\title[Robust Inference]{Robust Inference on Infinite and Growing Dimensional Time Series Regression} 

\date{\today  \  \   \
We thank the co-editor and three referees for many insightful comments that substantially improved the paper, as well as Donald Andrews, Xiaohong Chen, Jean-Jacques Forneron, \v{S}\'{a}rka Hudecov\'{a}, Yuichi Kitamura, Joon Park, Pierre Perron, Zhongjun Qu and Yixiao Sun,  audiences at Yale, Boston University, GOFCP 2019 (Trento), MEG 2019 (Columbus), LAMES 2019 (Puebla), Cambridge, ESWC 2020 and Binghamton. We are grateful to Joel Tropp for guiding us to useful results and Ekaterina Oparina for excellent research assistance. \\
\ $^{\maltese}$ Department of Economics, University of Essex, Wivenhoe Park, Colchester, CO4 3SQ, UK. Email: a.gupta@essex.ac.uk. Research supported by British Academy/Leverhulme
Trust grant SRG$\backslash 170956$.\\
 $^{\S}$ Corresponding author.  
 $^{\dag}$ Department of Economics, Seoul National University, Gwan-Ak Ro 1, Seoul, Korea. Email: myunghseo@snu.ac.kr. Research supported by the Ministry of Education of the Republic of Korea, National Research Foundation of Korea (NRF-2018S1A5A2A01033487) and LG Yonam Foundation, and partly written while
visiting the Cowles Foundation at Yale University, whose hospitality is
greatly acknowledged.
}
\author{Abhimanyu Gupta$^{\maltese} $}
\author{Myung Hwan Seo$^{\S\dag}$ }

\begin{abstract}
        We develop a class of tests for time series models such as multiple regression with growing dimension, infinite-order autoregression and nonparametric sieve regression. Examples include the Chow test and general linear restriction tests of growing rank $p$. Employing such increasing $p$ asymptotics, we introduce a new scale correction to conventional test statistics which accounts for a high-order long-run variance (HLV) that emerges as $ p $ grows with sample size.
        %   \textcolor{red}{DELETE?: It echoes the heteroskedasticity-autocorrelation robust inference procedure in the low-dimensional time series models. }
        We also propose a bias correction via a null-imposed bootstrap to alleviate finite sample bias without sacrificing power unduly.      
        A simulation study shows the importance of robustifying testing procedures against the HLV even when $ p $ is moderate.     
        The tests are illustrated with an application to the oil regressions in \cite{Hamilton2003}.
        
        \bigskip
        
        \noindent \textbf{Keywords: 
                Growing number of restrictions, High-order Long-run Variance (HLV), 
                Nonparametric regression, Infinite-order autoregression.
        }  
        
\end{abstract}

\maketitle   

%\listoftodos

\setstretch{1.5}

\section{Introduction}

\allowdisplaybreaks

\label{sec:intro}
\allowdisplaybreaks

This paper develops asymptotically valid tests for inference on infinite-order and growing dimensional time series regression models, revealing the presence of an hitherto undetected \emph{nonlinear serial dependence}  or \emph{high-order long-run variance} (HLV) factor. This factor depends on the model error and regressors in a nonlinear fashion, and can appear in limit distributions when the data exhibit dependence and the number of restrictions grows. Chow tests and tests for linear restrictions are both covered. Our theory, simulations and empirical results show the deleterious effect of ignoring the HLV term, and  we propose a testing procedure that is robust to its presence. This is shown to possess desirable finite sample properties. While the HLV factor is revealed by our increasing dimension asymptotics, it can contaminate inference even in multiple regressions with a moderate number of covariates. Such specifications are ubiquitous in practice. Thus, the findings and recommendations of this paper are important for practitioners wishing to make correct inferences when data are dependent.

Models of infinite or growing dimension have been widely studied in the recent econometric literature, reflecting modern applications with rich sets of variables. For example, the asset pricing literature has suggested hundreds of potential risk
factors to explain returns, see \cite{feng2019taming}. With a larger number of observations accumulating over time, it is natural to include more of these variables as covariates even without resorting to penalized estimation methods. In fact, an attitude that permits the number of covariates to grow as a function of sample size is tacitly adopted in the literature. 
In a survey, \cite{koenker1988asymptotic} observed that the number of regressors in empirical work increases as the sample size $n$ increases, roughly like $n^{1/4}$, suggesting that practitioners implicitly treat model complexity as a function of sample size. Finally, nonparametric methods such as  series estimation have found wide applications in the economics and finance literature, see e.g. \cite{jorda2005estimation}, \cite{chen2007}, \cite{Chen2015}. These methods involve the approximation of an infinite-order model with a sequence of growing dimensional models. Taken together, this proliferation of models highlights the importance of developing appropriate techniques for their study.

%\todo{need to make explicit our setting of growing number of restrictions and short discussio on why we do not pursue many restrcitions setting in \cite{kline2020leave}: we can mention that the covariates are not as diverse as in the cross-section to begin with and they often contain their own lagged variables and are more likely to be correlated to each other in the time series setting. Thus, the multicollinearity becomes more problematic as the number of variables grows than in the cross-section.}

Our approach is to develop tests for null hypotheses that involve a growing number of restrictions $p$ in time series regression, with $p$ increasing slower than sample size. 
As a leading example, we consider the Chow test, due to \cite{Chow1960}, to test for a structural break at a prescribed time. This has the advantage of being a simple exclusion restriction test with wide applicability. 
After examining the key issues in this simple context, we present results for the testing of general linear restrictions. This extends specification tests with slowly growing $p$, see e.g. \cite{Hong1995} and \cite{Gupta2018}, to time series regression. However, our testing problem is distinct from the so-called `many restrictions' setting in e.g. \cite{Calhoun2011}, \cite{Anatolyev2012,Anatolyev2019}, \cite{kline2020leave}, amongst others, where the number of restrictions grows proportionally to sample size.

%Allowing for dependent data, we approximate our infinite-dimensional testing problem with a sequence of finite-dimensional problems. In nonparametric regression, our approach is based on series estimation which entails approximating the nonparametric regression function with a linear combination of increasingly many basis functions. Because the sequence of basis functions that underlies series estimation is not of fixed dimension, one cannot simple apply the theory of finite-dimensional parametric problems. Instead, the conventional parametric test statistics must be adjusted to take into account the increasing dimensional nature of the problem. \cite{Hong1995} propose an appropriately standardized test statistic that provides a consistent specification test based on series estimation, while more recently \cite{Gupta2018} provides results that extend the classical trinity of tests to settings with increasing dimension.
%When a finite number of linear restrictions is concerned, a standard chi-squared test is still valid.  When the dimension of the restriction itself is infinite as in e.g. the nonparametric specification testing by Hong and White (1995) or  Gupta (2018), a suitable modification of the chi-squared test has been proposed.  

 We derive the asymptotic distribution of the Chow test Wald statistic centered by $ p $ and normalized by $ \sqrt{2p} $.
This yields asymptotic normality with an unknown asymptotic variance $ \mathcal{V} $, which we term the HLV, provided that $ p $ meets certain growth conditions.  The HLV factor $ \mathcal{V} $ captures high-order autocovariances of the regressors and disturbances, echoing the long-run variance that appears in fixed dimensional time series regression, and vanishes under simplifying assumptions that remove these high-order autocovariances. The new HLV factor $ \mathcal{V} $ does not appear in fixed $ p $ asymptotic regimes, nor does it appear in the independent data setting of  \cite{Hong1995}, who use the same transformation and obtain asymptotic standard normality. 
%We study $ \mathcal{V} $ in Section \ref{sec:V_robust} in greater detail and illustrate that a simple ARCH process in the error can drive $\mathcal{V} $ away from unity.
%Figure \ref{fig:V} reports various values that it can achieve for designs that correspond to the three leading cases that motivate our paper: growing dimensional linear regression, AR($\infty$) models and nonparametric sieve regression. 
%In our Monte Carlo simulation, we demonstrate that using the traditional critical values, thereby ignoring the factor that we discover, can seriously distort the size of the tests. Thus, we develop a correction for the nonlinear serial dependence factor based on the fixed-$b$ approach and provide a concrete testing procedure and recommendations for practitioners. 

We robustify the Chow test against the HLV by a random scaling because of numerical evidence that asymptotic normal inference based on consistent estimators of HLV performs poorly in finite samples, reported in Section \ref{sec:mc}. The random scaling is motivated by heteroskedasticity autocorrelation robust (HAR) inference, see e.g. \cite{kiefer2002heteroskedasticity}, \cite{Sun2014}, and \cite{lazarus2018har}, just to name a few. The resulting asymptotic distribution is pivotal. However, unlike conventional HAR inference, where the standard Wiener process characterizes mixed normality, our limit distribution is represented by two dependent centered Gaussian processes $ W(r) $ and $ \bar{W}(r) $ such that $ EW(r)^2 = r^2 $ and $ E\bar{W}(r)^2 = (1-r)^2 $.   
Although pivotal, the asymptotic distribution depends on the location of the hypothesized break date and thus we provide R code to compute the p-values. Similarly, we robustify the general linear restrictions Wald test to the HLV and provide suitable R code.

Finite sample bias in the Wald statistic, or in quadratic statistics more generally, is a serious issue when $ p $ is large. See e.g. \cite{kline2020leave} for more discussion and a bias correction proposal that works well even when $ p $ is proportional to the sample size but under independent sampling. Our simulations document that the problem is even worse in time series regression. Thus, we propose a bootstrap bias correction which imposes the null hypothesis in the resampling so as not to sacrifice power unduly. Even a small number of bootstrap iterations appear sufficient to reduce the bias, making computation easily manageable.  Based on these findings, we recommend a bias-corrected and HLV-robust test to practitioners.

 In simulations for a range of settings across regression with many covariates, long AR fits  and sieve regression, we demonstrate that our statistic exhibits excellent size control without sacrificing power excessively. Failure to correct for the HLV can seriously affect inference, in general leading to over-rejection and often severely so. Such a pattern is shown to persist for the two types of tests that we provide: Chow tests and exclusion restrictions. In an empirical example based on \cite{Hamilton2003,Hamilton2009}, we show that using our bias-corrected and HLV-robust tests can yield inferences that lead to new conclusions when considering the relation between oil prices and economic activity.

The paper is organized as follows. Section \ref{sec:model} introduces the model and the Chow test, along with some basic assumptions and examples. 
In Section \ref{sec:theory} we provide an asymptotic theory while Section \ref{sec:V_robust} introduces our HLV-robust and bias-corrected test statistic. The testing of general linear restrictions is covered in Section \ref{sec: extension}.
Section \ref{sec:mc} contains a Monte Carlo study of finite sample performance, and
Section \ref{sec:emp examples} demonstrates our test with real data. 
All the proofs of theorems and lemmas are collected in two further appendices, the second of which is available online. Throughout the paper, cross-referenced items prefixed with `S' can be found in this online supplementary appendix.
%, and specifically  Appendix \ref{sec:partial_sum}provides a result on partial sums of random matrices that may be useful more generally. 
An R-package to reproduce the simulations and empirical example is available in the replication files.

\section{Chow Test in Growing and Infinite-Order Regression}
\label{sec:model} 
We consider the issue of testing for a structural break at a known point in the
conditional mean function of $y_{t}$ given the information available up to $%
t-1$, ie $E\left( y_{t}|\mathcal{F}_{t-1}\right) , $
where $\mathcal{F}_{t-1}$ denotes the filtration up to time $t-1$. In
nonparametric regression, $\mathcal{F}_{t-1}$ typically consists of a finite
number of observable covariates $z_{t}$. In the context of the infinite
order autoregressive AR$\left( \infty \right) $ model,  $\mathcal{F}_{t-1}$
is the collection of all the lagged dependent variables, $\left\{y_{t-j}\right\}_{j\geq 1}$. Alternatively, it can be viewed as a genuine high-dimensional regression model
which may contain an infinite number of covariates and their lags. We allow for 
array structure but we do not introduce further notation to denote it unless necessary.

Given a sample of size $n$, we estimate the unknown regression function
via a growing-dimensional (or truncated) linear regression
\begin{equation}
y_{t}=x_{nt}^{\prime }\beta _{n}+e_{nt},  \label{model_n}
\end{equation}%
where $x_{nt}$ and $\beta _{n}$ are $p$-dimensional vectors and $%
p\rightarrow \infty $ as $n\rightarrow \infty $ to estimate $E\left( y_{t}|%
\mathcal{F}_{t-1}\right) $ consistently. To be more precise, let 
\[
\varepsilon _{t}=y_{t}-E\left( y_{t}|\mathcal{F}_{t-1}\right) ,
\] 
$\beta _{n}$
be the best linear predictor of $y_{t}$ given $x_{nt}$, and $r_{nt}=E\left(
y_{t}|\mathcal{F}_{t-1}\right) -x_{nt}^{\prime }\beta _{n}$. Then, $%
e_{nt}=r_{nt}+\varepsilon _{t}$. Throughout the paper, let $C$ ($c$) denote
a generic positive and finite constant, arbitrarily large (small) but independent of $n$, and `a.s.' stand for `almost surely'. Introduce the following assumptions:

\begin{assumption}
\label{ass:errors} The martingale difference sequence $\left\{ \varepsilon
_{t}\right\} $ satisfies  $\sigma _{t}^{2}\leq C,$ where $E\left(
\varepsilon _{t}^{2}|\mathcal{F}_{t-1}\right) =\sigma _{t}^{2},$ and $%
E\left( \varepsilon _{t}^{4}|\mathcal{F}_{t-1}\right) \leq C$, a.s.
\end{assumption}

The theory presented in the paper may not hold if in fact we only have $E\left(x_{nt}\varepsilon_t\right)=0$ as the long run variance of $x_{nt}\varepsilon_t$ will then appear in the type of quadratic statistics that we consider.

\begin{assumption}
\label{ass:aprx0} For $a=1,2$, 
\begin{equation}
\sup_{t}E\left( r_{nt}^{2a}\right) =o\left( n^{-1}\right) .  \label{r_order}
\end{equation}
\end{assumption}

We discuss this assumption on the negligibility of the approximation error in more detail in Section \ref{sec:mc}, where specific examples are introduced. The subscript $n$ will now usually be dropped, although we will emphasize this occasionally to remind the reader of the $n$-dependence of certain quantities. 
%\begin{example}[Sparse High-Dimensional Regression]
%       Certain high dimensional regressions with sparsity comes with a strong oracle procedure such as the SCAD or the adaptive lasso.  
%\end{example}

Introduce a potential structural break for these models at a given time, say $ t=[n \gamma] $, $\gamma \in \Gamma \subset (0,1)$, with  
$\Gamma$ compact and 
$[\cdot]$ denoting the integer part of the argument.  That is, $
\beta = {\beta _{1}\ \ {\text{if }}t/n\leq \gamma }$ and $\beta={\beta _{2}\ \ 
{\text{if }}t/n>\gamma }.
$
We write the model as
\begin{equation}
y_{t} =x_{t}^{\prime }\beta _{1}1\left\{ t/n\leq \gamma \right\}
+x_{t}^{\prime }\beta _{2}1\left\{ t/n>\gamma \right\} +e_{t} 
 =x_{t}^{\prime }\delta _{1}+x_{t}^{\prime }\delta _{2}1\left\{ t/n>\gamma
\right\} +e_{t},\label{our_strucb_model}
\end{equation}
where $\delta _{1}=\beta _{1}$, $\delta _{2}=\beta _{2}-\beta _{1}$, and $1\left\{\cdot\right\}$ denotes the indicator function. Consider the Wald test for the exclusion restriction $ \delta_{2} = 0 $, namely the Chow test for the presence of a structural break at a known date. 

Let $ \hat{\delta }\left( \gamma \right) $ and $ \hat{e}_t\left( \gamma \right) $ denote the OLS estimate and the OLS residuals, respectively, and  $x_{t}\left( \gamma \right) :=\left(x_{t}^{\prime },x_{t}^{\prime }1\left\{ t/n>\gamma \right\} \right) ^{\prime}$.  Also, let
$ 
\hat{M}\left( \gamma \right) ={n}^{-1}\sum_{t=1}^{n}x_{t}\left( \gamma \right) x_{t}\left( \gamma \right) ^{\prime },
$
and $\hat{\Omega}\left( \gamma \right) $ denote an estimator of $%
E\varepsilon _{t}^{2}x_{t}\left( \gamma \right) x_{t}^{\prime }\left( \gamma
\right) $. For instance, $\hat{\Omega}\left( \gamma \right) $ can be set as $n^{-1}\sum_{t=1}^{n}x_{t}\left( \gamma \right) x_{t}\left(
\gamma \right) ^{\prime }\hat{e}_{t}\left( \gamma \right) ^{2}$ (the Eicker-White formula) or, assuming conditional homoskedasticity, it can be
$\hat{\sigma}\left( \gamma \right) ^{2}%
\hat{M}\left( \gamma \right) $, where $\hat{\sigma}^{2}\left( \gamma \right)
=n^{-1}\sum_{t=1}^{n}\hat{e}_{t}\left( \gamma \right) ^{2}$. The choice
depends on the case being considered. 
Then, the Wald statistic for the familiar Chow test is defined as
\begin{equation}
W_{n}\left( \gamma \right) :=n\hat{\delta}_{2}\left( \gamma \right) ^{\prime
}\left( R\hat{M}\left( \gamma \right) ^{-1}\hat{\Omega}\left( \gamma \right) 
\hat{M}\left( \gamma \right) ^{-1}R^{\prime }\right) ^{-1}\hat{\delta}%
_{2}\left( \gamma \right) ,  \label{wald_def}
\end{equation}%
where $R=\left( 0_{p\times p}:I_{p}\right) $ is a selection matrix.

When the dimension $ p $ of $ x_t $ grows with the sample size $ n $, the Wald statistic diverges as it is approximately chi-squared distributed with $p$ degrees of freedom. Thus, a conventional approach, as used e.g. by \cite{DeJong1994} and \cite{Hong1995}  in the cross-sectional (independent data) framework is to introduce a new centering and scaling to define
\begin{equation}
{\mathcal{Q}}_{n}\left (\gamma \right ):=\left({W_{n}\left (\gamma \right )-p%
}\right)/{\sqrt{2p}}, \label{Q_n}
\end{equation}
since the mean and variance of a chi-square distribution with $ p $ degrees of freedom are $ p $ and $ 2p $, respectively. Furthermore, it has been established that the standard normal approximation of $ \mathcal{Q}_n $ is valid in their settings. Subsequent sections investigate how this conventional approach fails in the context of growing or infinite dimensional time series models, mirroring the failure of time series inference procedures without  heteroskedasticity and autocorrelation correction or robustification.

\section{Asymptotic Distribution of $ \mathcal{Q}_n $}

\label{sec:theory} This section provides the
asymptotic distribution of the Chow test statistic under the null and also shows
that the statistic has non-trivial power against local alternatives at an appropriate nonparametric rate.

There has been some recent interest in the so-called many regressor setting
where $p $ is allowed to be proportional to $n $, see e.g. \cite%
{cattaneo2018inference} and \cite{kline2020leave}. 
We do not permit such a large $p $ as our hypothesis of interest concerns a $p $-dimensional restriction and the design matrix of time series data faces more difficulties in satisfying the rank condition. 
In this regard, \cite{Chen2001} provide an interesting example from an
ANOVA design where the weak convergence of the empirical distribution of
residuals from the linear regression with growing dimension fails when the
dimension $p$ is of order $n^{1/3}$. They compare various growth conditions
for $p$ in the literature and conclude that $p^{3}\log ^{2}p=o\left(
n\right) $ is nearly necessary for a general stochastic design.
Heuristically, a\ hypothesis represented through the empirical distribution
function imposes an infinite number of restrictions, like our structural break
testing also does, and valid testing of such a hypothesis demands a tighter control
on the growth rate of $p$.

\subsection{Asymptotic Null Distribution}

Define $\left\Vert A\right\Vert =\left\{ \overline{\lambda }(A^{\prime
}A)\right\} ^{\frac{1}{2}}$ for a generic matrix $A$, where $\underline{%
\lambda }$ (respectively $\overline{\lambda }$) denotes the smallest
(largest) eigenvalue of a symmetric nonnegative definite matrix. 
Any limit stated as `$n\rightarrow\infty$' is taken as both $ n $ and $ p $ grow to infinity simultaneously unless specified otherwise. We also introduce the $p\times p$ non-stochastic matrix sequences $M$ and $ \Omega $ and define 
\begin{equation*}
	M(\gamma )=\left[ 
	\begin{array}{cc}
		M & (1-\gamma )M \\ 
		(1-\gamma )M & (1-\gamma )M%
	\end{array}%
	\right] ,\;\;\Omega (\gamma )=\left[ 
	\begin{array}{cc}
		\Omega & (1-\gamma )\Omega \\ 
		(1-\gamma )\Omega & (1-\gamma )\Omega%
	\end{array}%
	\right] .
\end{equation*}%

\begin{assumption}
\label{ass:M_diff} 
\sloppy (i) $\sup_{i,t}Ex_{ti}^{4}<\infty $. \\
(ii) For $ r\in\Gamma\cup \{1\} $, 
\begin{eqnarray*}
	%\sup_{\gamma\in\Gamma\cup\{1\}}
	\left\Vert n^{-1}\sum_{t=1}^{[nr]} x_{t}x_{t}^{\prime }-rM\right\Vert +
	%\sup_{\gamma\in\Gamma\cup\{1\}}
	\left\Vert n^{-1}\sum_{t=1}^{[nr]}x_{t}x_{t}^{\prime }\sigma _{t}^{2}-r \Omega \right\Vert
	&=&O_{p}\left(
	\varkappa _{p}\right),\\  
%	\sup_{\gamma\in\Gamma}
	\left\Vert \hat{\Omega}\left( r \right)
	-\Omega \left( r \right) \right\Vert &=&O_{p}\left( v_{p}\right) ,\\
	\underline {\lambda }
	\left ({M}\right )> \lambda_n,&\underline {\lambda }
	\left ({\Omega}\right )&> \lambda_n,
\end{eqnarray*}
for some positive sequences of numbers $ \varkappa_{p} $, $v_p$ and $\lambda_n$ satisfying
\begin{equation}
	\lambda_n^{-4} \sqrt{p}\left( \lambda_n^{-1}\varkappa _{p}+v_{p}\right) \rightarrow 0  
	\text{ and }  \lambda_n^6 p \to \infty.  \label{rate:Q_weak_conv}
\end{equation}
(iii)  $\varlimsup _{n\rightarrow \infty }\overline {%
        \lambda }\left (M\right )<\infty $, 
$\varlimsup _{n\rightarrow \infty }%
\overline {\lambda }\left (\Omega \right )<\infty $.

\end{assumption}

Several factors determine the bound $\varkappa _{p}$ for nonparametric series regression. It is proportional to $\sqrt{p/n}$ or $ p/\sqrt{n} $ up to logarithmic
factors with iid data, depending on the choice of basis functions. 
For dependent data, the mixing decay rate also contributes to $ \varkappa_{p} $.      
The exact rate $v_{p}$ depends on a particular example. We formally introduce our examples of multiple linear regression, AR($\infty$) and nonparametric sieve regression in Section \ref{sec:mc}. Primitive conditions and expressions for $\varkappa _{p}$ and $v_p$ are given in Propositions \ref{thm:Berk} and \ref{thm:CC}  in Appendix \ref{sec:primitive}, using the results of \cite{peligrad1982invariance}, \cite{Newey1997}, \cite{gonccalves2007bootstrapping} and \cite{Chen2015}.

%\textcolor{red}{With $\otimes$ denoting Kronecker product, note that 
%\[ 
%M(\gamma)=M\otimes  \left[\begin{array}{cc}
%	1 & 1-\gamma  \\ 
%	1-\gamma  & 1-\gamma 
%\end{array}\right],\;\;\;\Omega(\gamma)=\Omega\otimes  \left[\begin{array}{cc}
%1 & 1-\gamma  \\ 
%1-\gamma  & 1-\gamma 
%\end{array}\right],
%\]
%the second matrix factor above being symmetric and positive definite with eigenvalues $1\pm \sqrt{5\gamma^2-8\gamma+4}/2-\gamma/2$, where $1/\sqrt{5}\leq \sqrt{5\gamma^2-8\gamma+4}/2\leq 1$ on $[0,1)$ with the minimum attained at $\gamma=4/5$ and the maximum at $\gamma=0$. Thus, both eigenvalues are uniformly bounded above and below on the compact $\Gamma$ by positive constants. }

Recall that the eigenvalues of the Kronecker product of two symmetric matrices are the products of their eigenvalues, and $ \gamma $ is bounded away from zero and one. Thus, $M(\gamma)$ and $\Omega(\gamma)$ inherit the eigenvalue restrictions on $M$ and $\Omega$ in Assumption \ref{ass:M_diff} $(ii)$ and $(iii)$, up to positive constants.

%By the Cauchy interlacing theorem (see e.g. \cite{Smith1992}, Theorem 1), conditions (ii) and (iii) in Assumption \ref{ass:M_diff} imply that $\underline{\lambda }\left( {M}\right) > \lambda_n $, $\underline{\lambda }\left( {\Omega}\right) > \lambda_n $ and $\varlimsup_{n\rightarrow \infty }
%\overline{\lambda }\left( M\right) <\infty $, $\varlimsup_{n\rightarrow \infty }
%\overline{\lambda }\left( \Omega\right) <\infty $.  

To develop the distributional limit of ${\mathcal{Q}}_{n}(\gamma )$ where both $ n $ and $ p $
diverge simultaneously, we introduce more conditions. Now, for convenience we let $\xi_{t}=\Omega^{-1/2}x_{t}\varepsilon_{t}$,
$ \mathcal{G}_t $ denote a filtration for $ \xi_{t} $, $\Upsilon_t=E\left(\xi_{t} \xi_{t}'|\mathcal{G}_{t-1}\right)$,  and $\Xi_s=\sum_{t_1=1}^{s-1}\sum_{t_2=1}^{s-1} \xi_{t_1} \xi_{t_2}'$. The filtration $ \mathcal{G}_t $ need not be $ \mathcal{F}_t $ but a simpler one as long as it makes $ \xi_t $ a mds.  Indeed, some conditions may be easier to verify  under simpler filtrations. 
The next assumption introduces the HLV factor $\mathcal{V}$ formally.

\begin{assumption}\label{ass:MCLT}
       \sloppy Suppose that $\max_{1\leq t \leq n}\overline\lambda\left(\Upsilon_t\right)=o_p \left(n^\nu\right) $, for some $\nu\in[0,1/3)$, $\max_{1\leq t\leq n} E((\xi_t '\xi_t)^2|\mathcal{G}_{t-1}) = o_p\left(n^\omega\right)$, for some $\omega\in[0,1-\nu)$, $\sum_{t=1}^n\sum_{s=1}^{t-1}\cov\left(\tr\left(\Upsilon_{t}\Xi_{t}\right),\tr\left(\Upsilon_{s}\Xi_{s}\right)\right)=o(n^{4}p^{2})$,
and there exists $ \mathcal{V} $ such that for $ l=0 $ or $ [n\gamma] $ and for $ m $ that is proportional to $ n $
\begin{equation} 
\lim_{n \rightarrow \infty } \frac{1}{mp}tr\sum_{t_{1}=1}^{m-1}%
\sum_{t_{2}=1}^{m-1 }E\left( \xi_{m+l}\xi_{m+l}^{\prime
}\xi_{t_{1}+l}\xi_{t_{2}+l}^{\prime }\right)  
= {\mathcal{V}}. 
\label{V_partial_def}
\end{equation}%

\end{assumption}

 The first condition can be met if moments of $ \overline{\lambda}\left(\Upsilon_t\right) $ of an order higher than $ 1/\nu $  are bounded for all $ t $. 
The restriction on the summability rate of $\cov\left(\mathrm{tr}\left(\Upsilon_{t}\Xi_{t}\right),\mathrm{tr}\left(\Upsilon_{s}\Xi_{s}\right)\right)$ is related to the dimension $p$. To gain some insight, consider the case where the conditional moment $\Upsilon_{t}$ is homogeneous, so that $\Upsilon_{t}=I_{p}$ for all $t$. 
	% Then, $\cov\left(\mathrm{tr}\left(\Upsilon_{s}\Xi_{s}\right),\mathrm{tr}\left(\Upsilon_{l}\Xi_{l}\right)\right)=E\left(\sum_{t_{1},t_{2}<s}\left(\xi_{t_{1}}'\xi_{t_{2}}-\mu_{t_{1},t_{2}}\right)\sum_{t_{3},t_{4}<l}\left(\xi_{t_{3}}'\xi_{t_{4}}-\mu_{t_{3},t_{4}}\right)\right)$,	where $\mu_{t_{1},t_{2}}=p1\left\{ t_{1}=t_{2}\right\} $ and $\xi_{t}=\Omega^{-1/2}x_{t}\varepsilon_{t}$. If $t_{1}\neq t_{2}$, $t_{3}=t_{4}$, and $t_{1}$ or $t_{2}$ is bigger than $t_{3}$, then $E\xi_{t_{1}}'\xi_{t_{2}}\left(\xi_{t_{3}}'\xi_{t_{3}}-p\right)=0$ 	since $\xi_{t}$ is mds. If $t_{1}\neq t_{2}$, $t_{3}=t_{4}$, and both $t_{1}$ and $t_{2}$ are smaller than $t_{3}$, then $E\xi_{t_{1}}'\xi_{t_{2}}\left(\xi_{t_{3}}'\xi_{t_{3}}-p\right)=0$ 	due to the homogeneity of the conditional moment $E[\xi_{t}'\xi_{t}|\mathcal{F}_{t-1}]=p$. Similarly, 	$E\left(\xi_{t_{1}}'\xi_{t_{1}}-p\right)\left(\xi_{t_{3}}'\xi_{t_{3}}-p\right)=0$ for $t_{1}\neq t_{3}$. On the other hand, $E\xi_{t_{1}}'\xi_{t_{2}}\xi_{t_{3}}'\xi_{t_{4}}\neq 0 $ for $ t_1 = t_3 \neq t_2 = t_4 $. Thus, the covariance reduces to $$\sum_{t_{1}<\min\left(s,l\right)}E\left(\xi_{t_{1}}'\xi_{t_{1}}-p\right)^{2} + \sum_{t_{1}\neq t_2 <\min\left(s,l\right)}E\left(\xi_{t_{1}}'\xi_{t_{2}}\right)^{2}=O\left(n^2 p\right),$$ since $E\left(\xi_{t_{1}}'\xi_{t_{2}}\right)^{2}=p $ under the homogeneity of the conditional moment. 
Then, some tedious algebra yields that $\cov\left(\mathrm{tr}\left(\Upsilon_{t}\Xi_{t}\right),\mathrm{tr}\left(\Upsilon_{s}\Xi_{s}\right)\right)= O(n^2 p) $ uniformly over all $s,t$ with $s < t $. This implies that the double sum of the covariances is $O\left(n^4p\right)$ and thus meets the required condition as $ p\to \infty $.
	Our assumption says that more generally this double sum over covariances must be $o\left(n^4p^2\right)$ as $n,p\rightarrow\infty$.

Also, note that under the special case where $\left\{
x_{t}\varepsilon _{t}\right\} $ is an iid sequence, we have 
$
{\mathcal{V}}=\lim_{m,p\rightarrow \infty } {m}^{-1}\sum_{t=1}^{m-1}%
p^{-1}{\mathrm{tr}}E\left( \xi_{m}\xi_{m}^{\prime }\right) E\left( \xi_{t}\xi_{t}^{\prime }\right) =1,
$
thus ${\mathcal{V}}$ is an extra factor that appears in the limit due to nonlinear dependence in the data. In particular, it captures a {\textit{high-order serial correlation}} of $\xi_{t}$, while $\xi_{t}$ itself does not have serial correlation since it is a martingale difference sequence.

For mean zero random variables $a_{1i},a_{2j},a_{3k},a_{4l}$, let $\mathrm{cum}_{ijkl}\left(a_{1i},a_{2j},a_{3k},a_{4l}\right)$ denote the fourth cumulant. 
	\begin{assumption}\label{ass:autocovandcumulant}
        $\left\{x_{ti}\varepsilon_t\right\}_{t\in\mathbb{Z}}$ is fourth order stationary for all $i=1,\ldots,p$. Furthermore, $\sup_{i,j=1,\ldots,p}\sum_{t=-\infty}^{\infty}\left\vert c_{ij}(t)\right\vert<\infty$, where $c_{ij}(t)=E\left(x_{r,i}\varepsilon_{r}x_{r+t,j}\varepsilon_{r+t}\right)$ for integer $r$, and $\sup_{i,j,k,l=1,\ldots,p}\sum_{t_1,t_2,t_3=-n}^n \left\vert\mathrm{cum}_{ijkl}\left(x_{0,i}\varepsilon_{0},x_{t_1,j}\varepsilon_{t_1},x_{t_2,k}\varepsilon_{t_2},x_{t_3,l}\varepsilon_{t_3}\right)\right\vert=O\left(n^2\right)$.
\end{assumption}
This assumption controls the temporal dependence in $\left\{x_t\varepsilon_t\right\}$ and is discussed in \cite{Andrews1991a}, for example, wherein sufficient conditions for it to hold are also provided. The following theorem establishes distributional convergence for a given $ \gamma $.

\begin{theorem}
        \label{thm:null_Chow} Let Assumptions \ref{ass:errors}- \ref{ass:autocovandcumulant} %\ref{ass:MCLT},
        and $\mathcal{H}_{0}$ hold.
        Then $
        {\mathcal{Q}}_{n}(\gamma )\overset{d}\rightarrow \mathcal{N}(0,\mathcal{V})
        $, for a given $\gamma\in\Gamma$.
\end{theorem}
Theorem \ref{thm:null_Chow} highlights the distinctive feature of testing growing number of restrictions in time series regressions. Unlike the independent cross-sectional case, we have to robustify the test against the HLV term $ \mathcal{V} $. The provenance of this term can be illustrated by some formulae, details of which are contained in the full proofs. These proofs first establish (Theorem \ref{lemma:wald_approx}) that
\begin{equation}\label{WeqRop1}
	{\mathcal{Q}}_{n}\left (\gamma \right ):=\frac{W_n(\gamma)-p}{\sqrt{2p}}=\frac{\mathcal{R}_n(\gamma)-p}{\sqrt{2p}}+o_p(1), 
\end{equation}
where
\begin{eqnarray}
{\mathcal{R}_{n}}(\gamma )=\left[{ \gamma \left( 1-\gamma \right) n}\right]^{-1}
\left\Vert \sum_{t=1}^{[n\gamma ]}\xi_{t}-\gamma
\sum_{t=1}^{n}\xi_{t}\right\Vert^2= \left\Vert n^{-1/2}\sum_{t=1}^n\psi_t(\gamma)\xi_t\right\Vert^2,\label{fancyRdef}
\end{eqnarray}
and $\psi_t(\gamma)=\left({1}\left(t/n\leq \gamma\right)-\gamma\right)/\sqrt{\gamma(1-\gamma)}$. 
Also note that $ n^{-1}\sum_{t=1}^n\psi_t(\gamma)^2 \to 1 $. Thus, just as for the familiar Wald statistic, we have a quadratic form structure for $\mathcal{R}_n(\gamma)$. When $p$ is fixed and there is no approximation error, we note that (\ref{WeqRop1}) has also been established by \cite{Andrews1993}, \cite{Cho2017} and \cite{Sun2021}.  

This then yields the approximation
\begin{equation}\label{RSop1}
\frac{\mathcal{R}_n(\gamma)-p}{\sqrt{2p}}=\mathcal{S}_n(\gamma)+o_p(1),
\end{equation}
where
\[
\mathcal{S}_n(\gamma)=\frac{2}{\sqrt{2p}}\frac{1}{n}\sum_{t=2}^n\psi_t(\gamma)\xi_t'\sum_{s<t}\psi_s(\gamma)\xi_s=\frac{\sqrt{2}}{\sqrt{n}}\sum_{t=2}^n v_t(\gamma),
\]
say, by Lemma \ref{lemma:diag_terms_neg}. Then $\mathcal{V}=\lim_{n,p\rightarrow\infty}2n^{-1}\sum_{s,t=2}^n cov \left(v_s(\gamma),v_t(\gamma)\right)$, i.e. the limiting variance of $\mathcal{S}_n(\gamma)$. Note that the $v_t(\gamma)$ are defined as products of terms of the type $ x_t \varepsilon_t $ and the cumulative sum of their lags, implying that the variances of the $ v_t(\gamma) $ themselves contain high-order covariance terms. This explains why we call $ \mathcal{V} $ a HLV despite the mds property of the $ v_t (\gamma)$, which implies that $\{v_t(\gamma)\}$ is uncorrelated. 

The next section establishes that the test based on $ \mathcal{Q}_n $ has nontrivial local power under suitable sequences of local alternatives, following which we study more detailed characteristics of $ \mathcal{V} $ and develop a HLV-robust test.

\subsection{Local Alternatives}

We consider a sequence of local alternatives that converge to the null at $%
p^{1/4}/\sqrt{n}$-rate to study the local power properties of the test.
This is slower than the usual $1/\sqrt{n}$ parametric rate and has been employed by a number of other authors, e.g. \cite{DeJong1994}, \cite{Hong1995}, \cite{Gupta2018}. 
It is a cost of the growing-dimensional nature of the problem. Our sequence of local alternatives is: 
\begin{equation}
\mathcal{H}_{\ell }:\delta _{2\ell }=2^{1/4}\tau p^{1/4}/\sqrt{n},
\label{local_alternatives}
\end{equation}%
where $\tau $ is a unit length $p\times 1$ vector. 
\begin{theorem}
        \label{thm:local_power} \sloppy Suppose that Assumptions \ref{ass:errors}- \ref{ass:autocovandcumulant} and $\mathcal{H}_{\ell}$ hold and let $\tau _{\infty }=\lim_{n\rightarrow \infty }\tau ^{\prime
        }M\Omega ^{-1}M\tau $. Then, 
        $
        {\mathcal{Q}}_{n}(\gamma )\overset{d}\rightarrow
        \mathcal{N}\left(\tau _{\infty } {\gamma (1-\gamma )},\mathcal{V}\right)
        $.
\end{theorem}
Note that the noncentrality term is positive, implying nontrivial power of the test since the critical region is formed by $ \mathcal{Q}_n (\gamma) $ being greater than equal to a critical value.
Also, $
\left\vert \tau ^{\prime }M\Omega ^{-1}M\tau \right\vert \leq \left\Vert
\tau \right\Vert \left\Vert M\right\Vert ^{2}\left\Vert \Omega
^{-1}\right\Vert =\overline{\lambda }(M)^{2}/\underline{\lambda }(\Omega )
$
for any $n$. 
As Assumption \ref{ass:M_diff} assumes that the numerator $ \bar{\lambda}(M) $ is bounded but the denominator may not be bounded away from zero,  $\tau _{\infty }=\lim_{n\rightarrow \infty }\tau ^{\prime}M\Omega ^{-1}M\tau$ may diverge to positive infinity to imply more power. 

\section{$\mathcal{V}$ Robust Testing} \label{sec:V_robust}
%\todo{unify name for V, nonlinear serial dependence; high-order autocorrelation; long-run covariance ? }
In this section we provide a detailed study of the HLV $\mathcal{V}$ that our analysis has discovered. In particular, we present some alternative representations of $\mathcal{V}$ that shed more light on its structure. 
%We also provide some simulated values of $\mathcal{V}$ for the examples in Section \ref{sec:model} with various parametrizations and different degrees of temporal dependence. 

\subsection{Discussion}
\sloppy We first examine the relevance of $\mathcal{V}$. Specifically, we analyze the `pre-limiting' quantity 
$
\mathcal{V}_n  =2var\left(n^{-1}\sum_{t=2}^{n}\xi_{t}'p^{-1/2}\text{\ensuremath{\sum_{s=1}^{t-1}\xi_{s}}}\right)$. 
This can be rewritten as 
\[ 
\mathcal{V}_n =2n^{-1}\sum_{i=1}^{n-1}\left(\gamma\left(i,0\right)(n-i)/{n}+2\sum_{j=1}^{n-i}\gamma\left(i,j\right)(n-i-j)/{n}\right),
\] 
where 
\[
\gamma\left(i,j\right)=p^{-1}E\left(\xi_{t}'\xi_{t-i}\xi_{t}'\xi_{t-i-j}\right)=p^{-1}E\left(x_{t}'\Omega^{-1}\text{\ensuremath{x_{t-i}}}x_{t}'\Omega^{-1}\text{\ensuremath{x_{t-i-j}\varepsilon_{t}^2\varepsilon_{t-i}\varepsilon_{t-i-j}}}\right).
\]
This is a high-order autocovariance and captures a nonlinear serial
dependence in the sequence $ x_t \varepsilon_t $, which disappears
entirely for $ j>0 $ in independent cross sectional data. We encounter $\mathcal{V}_n\rightarrow \mathcal{V}\neq 1$ when $n^{-1}\sum_{i=1}^{n-1}\sum_{j=1}^{n-i}\gamma\left(i,j\right)(n-i-j)/{n}$ has a nonzero limit, with terms arising that are fourth-order cross-moments of the $\varepsilon_{t}$ and $ x_t $. Thus, the behaviour of such cross-moments is the key to obtaining non-unity $\mathcal{V}$. \cite{Robinson1991}, studying time series specification testing, encountered 
a term of the form $ E\left(\varepsilon_t^2 \varepsilon_{t-i} \varepsilon_{t-i-j} \right)$, somewhat different from ours albeit also of cross-moment type, but imposed conditions that nullify it when $ j>0 $. 

The Wald statistic is a quadratic form in the moment process. To establish the limit of the Wald statistic when the number of variables (i.e., the number of moments) grows with the sample size, we need to account for the variance of the quadratic form, hence the appearance of  fourth-order dependence of a certain type in the moment process. A form of fourth-order dependence has also been encountered in HAR testing, see e.g. \cite{Lobato2002}. In Section \ref{sec:mc}, we present some figures to show how $\mathcal{V}$ can vary for various designs and deviate significantly from unity.

\subsection{HLV-Robust Test Statistic} \label{sec:HAC}

This section propose a random scaling approach to robustify our test statistic against the unknown HLV term $\mathcal{V}$. We opt for this because our numerical experiments in Section \ref{sec:mc} (specifically Figure \ref{fig:size_HAC} and its discussion therein) reveal poor finite sample performance of the standard sample variance of $ q_t = \left( np\right)^{-1/2} x_{t}'\hat{\Omega}^{-1} \hat{e}_{t} \text{\ensuremath{\sum_{s=1}^{t-1}x_{s}\hat{e}_{s}}} $. The presence of the cumulative sums $ \sum_{s=1}^{t-1} $ and the estimated quantities $ \hat{\Omega}^{-1}$ and $ \hat{e}_{s} $  in the construction of $ q_t $ are likely to contribute to poor finite sample behavior of its sample variance. 

The random scaling approach has been employed when consistent estimators of the asymptotic variance perform poorly in finite samples. For instance, heteroskedasticity and autocorrelation consistent (HAC) estimators, e.g. \cite{newey1987simple}, \cite{Andrews1991a}, to name but two examples, have been followed by the fixed-bandwidth kernel approach to obtain an asymptotically pivotal and mixed-normal test, see e.g. \cite{kiefer2000simple} and \cite{lazarus2018har} for a recent review. 
In the machine learning literature, \cite{Lee_Liao_Seo_Shin_2022} also employ the random scaling approach for computationally efficient on-line inference based on the stochastic gradient descent algorithm. 
 
While a simple random scaling can be implemented by the integral of the square of the partial sum process of centered $ q_t $, 
that is, $ n^{-2} \sum_{t=1}^n \sum_{s=1}^{t} (q_t- n^{-1}\sum_{i=1}^n q_i )^2$, 
we present a class of more general random scaling methods following the heteroskedasticity and autocorrelation robust (HAR) inference literature. 
We note that the resulting pivotal distributions differ from the HAR literature, however. 

Introduce a kernel function $ k(\cdot) $ that meets the following conditions. 
\begin{assumption}
        \label{ass:kernel} (1) For all $x\in {\mathbb{R}}$, $k(x)=k(-x)$ and $%
        \left
        \vert k(x)\right \vert \leq 1$; $k(0)=1$; $k(x)$ is continuous at
        zero and almost everywhere on ${\mathbb{R}}$; $\int _{{\mathbb{R}}%
        }\left
        \vert k(x)\right \vert dx<\infty $. (2)
        For
        any $b\in(0,1]$ and $\rho\geq1$, $k_{b}\left(x\right)=k\left(x/b\right)$
        and $k^{\rho}\left(x\right)$ are symmetric, continuous, piecewise
        monotonic, and piecewise continuously differentiable on $\left[-1,1\right]$.
        (3) $\int_{[0,\infty )}\bar{k}%
        (x)<\infty $, where $\bar{k}(x)=\sup_{y\geq x}\left\vert k(y)\right\vert $.
\end{assumption}

% To motivate our HAR correction it is convenient to rewrite $\Gamma\left(i,j\right)=\left(np\right)^{-1}E\left[\varepsilon_{i}x_{i}'\Omega^{-1}\sum_{t=i+1}^{n-1}\sum_{s=j+1}^{n-1}x_{t}\varepsilon_{t}x_{s}'\Omega^{-1}\varepsilon_{s}x_{j}\varepsilon_{j}\right],$ using the martingale property of the $x_t\varepsilon_t$. We can then write $ \mathcal{V}=\lim_{n\to\infty}var\left(n^{-1}p^{-1/2}\sum_{t=2}^{n}x_{t}'\Omega^{-1}\varepsilon_{t}\text{\ensuremath{\sum_{s=1}^{t-1}x_{s}\varepsilon_{s}}}\right)=\lim_{n\to\infty}n^{-1}\sum_{i=1}^{n}\sum_{j=1}^{n}\Gamma\left(i,j\right) $.

% For a kernel function $k(\cdot)$ and bandwidth $h$, we now introduce a  pseudo-estimate of $\Gamma\left(i,j\right)$, $ \tilde{\Gamma}\left(i,j\right)=\varepsilon_{i}x_{i}'\Omega^{-1}\left[\sum_{t=i+1}^{n-1}\sum_{s=j+1}^{n-1}k\left(\frac{t-s}{n/h}\right)x_{t}\varepsilon_{t}\varepsilon_{s}x_{s}'\Omega^{-1}\right]x_{j}\varepsilon_{j}, $  and that of $\mathcal{V}$, $ \tilde{\mathcal{V}}=\frac{1}{n}\sum_{i=1}^{n}\sum_{j=1}^{n}\tilde{\Gamma}\left(i,j\right).  $   Exchanging the order of the summations yields $ \tilde{\mathcal{V}}=n^{-1}\text{\ensuremath{\sum_{t=1}^{n}\text{\ensuremath{\sum_{s=1}^{n}k\left(\frac{t-s}{n/h}\right)q_{t}q_{s}}}}}, $ where $q_{t}=\left(np\right)^{-1/2}x_{t}'\Omega^{-1}\varepsilon_{t}\text{\ensuremath{\sum_{s=1}^{t-1}x_{s}\varepsilon_{s}}}$. 

Since $\varepsilon_{t}$ and $\Omega$ are not directly observable in practice, we replace them with the least squares estimates as in Section \ref{sec:model} and introduce $ q_t = \left( np\right)^{-1/2} x_{t}'\hat{\Omega}^{-1} \hat{e}_{t} \text{\ensuremath{\sum_{s=1}^{t-1}x_{s}\hat{e}_{s}}} $ and its demeaned version,  $\bar{q}_{t}= q_t - n^{-1}\sum_{t=2}^{n} q_t$. Then, define a feasible estimate of $\mathcal{V}$ by 
\begin{equation} \label{eq:Vhat}
\hat{\mathcal{V}}=\frac{2}{n}\text{\ensuremath{\sum_{t=2}^{n}\text{\ensuremath{\sum_{s=2}^{n}k\left(\frac{t-s}{n b}\right)\bar{q}_{s}}}\bar{q}_{t}}}.
\end{equation}
Thus, we have a seemingly long-run variance estimate, analogous to traditional HAC/HAR inference, of a nonlinear transformation of the primitive variables.

The choice of bandwidth $ b $ has been a topic of much discussion in the
HAC literature. Since  $\mathcal{V}$ captures high-order autocovariances in the growing dimensional vector $x_{t}\varepsilon_{t}$,
the finite sample variation in the estimate $\hat{\mathcal{V}}$ is generally
larger than in more familiar long-run variances, and the moment
condition is more expensive. Motivated by this, we follow a fixed bandwidth approach, as in \cite{Sun2014}. 

Our estimator is based on the weighting function $ K_{h}\left(r,s\right) = k\left(h\left(r-s\right)\right)$, where $ h=1/b $. We present numerical results in this paper with  $k\left(u\right)=\left(1-\left|u\right|\right)^{h}1\left\{ \left|u\right|<1\right\} $, employing the Bartlett kernel case with $h=1$. \cite{Sun2014} terms this the sharp kernel estimator. Other options include the steep quadratic kernel estimator and the orthonormal series estimator with $K$ basis function, of which \cite{Sun2014} contains a more detailed discussion. \cite{Sun2014} also shows that the centering in $ \bar{q}_t $ can be conveniently represented through a centered version of $ K_h(\cdot) $, that is, $K_{h}^{*}\left(r,s\right)  =K_{h}\left(r,s\right)-\int_{0}^{1}K_{h}\left(\tau,s\right)d\tau-\int_{0}^{1}K_{h}\left(r,\tau\right)d\tau +\int_{0}^{1}\int_{0}^{1}K_{h}\left(\tau_{1},\tau_{2}\right)d\tau_{1}d\tau_{2} $. 

%Following \cite{Sun2014}, we may write \[ \hat{\mathcal{V}}  =\frac{2}{n}\text{\ensuremath{\sum_{t=1}^{n}\text{\ensuremath{\sum_{s=1}^{n}Q_{h}^{*}\left(\frac{t}{n},\frac{s}{n}\right)q_{s}}}q_{t}}}\\  =\sum_{j=1}^{\infty}\lambda_{j}\left(\frac{1}{\sqrt{n}}\ensuremath{\sum_{t=1}^{n}q_{t}}\Phi_{j}\left(\frac{t}{n}\right)\right)^{2}, \] where  $Q_{h}^{*}\left(r,s\right)  =Q_{h}\left(r,s\right)-\int_{0}^{1}Q_{h}\left(\tau,s\right)d\tau-\int_{0}^{1}Q_{h}\left(r,\tau\right)d\tau +\int_{0}^{1}\int_{0}^{1}Q_{h}\left(\tau_{1},\tau_{2}\right)d\tau_{1}d\tau_{2}, $
Building on the representation in Lemma 1 of \cite{Sun2014}, where the estimate $ \hat{\mathcal{V}} $ is not consistent, we characterize the joint weak limit of $ \hat{\mathcal{V}} $ and $ \mathcal{Q}_n (\gamma) $. For real numbers $a$ and $b$, let $a\vee b$ ($a\wedge b$) denote their maximum (minimum), and introduce a process 
\begin{equation}  \label{Q_gamma_def}
\mathcal{Q}(\gamma )=\frac{W(\gamma )}{\gamma }+
\frac{\bar { W}\left (\gamma \right )}{\left (1-\gamma \right )}-W(1),
\end{equation}
where $\left (W\left (r \right ),\bar {W}\left (r
\right
)\right
)^{\prime }$, $r\in[0,1]$, is a bivariate Gaussian process that does not depend on any model parameters including the break point $\gamma$, and has covariance kernel 
\begin{equation}  \label{C_def}
{\mathcal{C}}\left (r _{1},r _{2}\right )=\left ( 
\begin{array}{cc}
\left (r _{1}\wedge r_{2}\right )^{2} & 1\left \{r
_{1}>r _{2}\right \}\left (r _{1}-r _{2}\right )^{2} \\ 
1\left \{r _{1}<r _{2}\right \}\left (r _{1}-r
_{2}\right )^{2} & \left (1-\left (r _{1}\vee r _{2}\right )\right
)^{2}%
\end{array}
\right ).
\end{equation}
For any given $\gamma\in\Gamma$, the marginal distribution of $ \mathcal{Q}(\gamma ) $ is  standard normal. Thus, the conclusion of Theorem \ref{thm:null_Chow} can be expressed as $\mathcal{Q}_n(\gamma)\overset{d}{\rightarrow}\sqrt{\mathcal{V}}\mathcal{Q}(\gamma)$, pointwise in $\gamma\in\Gamma$.  By taking a suitable ratio, we obtain a pivotal variable as in the following theorem, which is the basis of our test statistic.

\begin{theorem}\label{thm:fixedbVdist}
Let Assumptions \ref{ass:errors}-\ref{ass:kernel} hold, together with 
\begin{equation}
 \lambda_n^{-2}p\left( v _{p}+ 
\frac{p}{\sqrt{n}}\right) \rightarrow 0{\text{ as }}n\rightarrow \infty .
\label{rate:Vhat}
\end{equation}
Under $\mathcal{H}_0$, we have $ \hat{\mathcal{V}}\overset{d}\rightarrow \mathcal{V}\int_{0}^{1}\int_{0}^{1}K_{h}^{*}\left(r,s\right)dW\left(r\right)dW\left(s\right) $ and 
\[ 
\mathcal{T}_n(\gamma):=
\frac{ \mathcal{Q}_n (\gamma)}{\sqrt{\hat{\mathcal{V}}}} 
\overset{d}\rightarrow \frac{\mathcal{Q}(\gamma)}{\sqrt{\int_{0}^{1}\int_{0}^{1}K_{h}^{*}\left(r,s\right)dW\left(r\right)dW\left(s\right)}}. \]
The numerator in the limit becomes $ \mathcal{Q}(\gamma) +\tau _{\infty } {\gamma (1-\gamma )} $ under $ \mathcal{H}_{\ell}$.
\end{theorem}
The asymptotic null distribution is mixed normal and pivotal. The critical values can be tabulated for each $ \gamma $ via Monte Carlo simulation and the replication files provide R code. Note that the same Gaussian process $W(\cdot)$ occurs in both the limiting numerator and denominator, and this process is different from the Brownian motion in \cite{Sun2014}. In fact, it can be represented by the partial sum of $ \sqrt{t/n}$ times an iid normal sequence. Since the limit also involves another variable $ \bar{W}(\cdot) $, the critical values will be different from those previously tabulated in the literature.

%\textcolor{red}{The noncentrality term under $ \mathcal{H}_{\ell} $ is positive since $ \tau_{\infty} $ is quadratic and $ \gamma $ lies within $ (0,1) $. As we reject the null if the test statistic is larger than a given critical value, the test is nontrivial. }

\subsection{Bias Correction} \label{sec:Bias} The degrees of freedom $p$  provide a correct centering for $ W_n (\gamma) $ in first order asymptotic analysis. However, in the finite sample experiments given in Section \ref{sec:mc}, e.g. Figure \ref{fig:TSbiasAR} and Figure \ref{fig:TSWaldbiasmult}, we find that the bias in $ \mathcal{Q}_n (\gamma) $ gets bigger for typical values of $ p $ in nonparametric regression. 
Therefore, we propose a bootstrap bias correction of $ \mathcal{Q}_n (\gamma) $. To estimate the bias, we implement the \textit{null-imposed} wild bootstrap by generating 
\begin{equation} \label{eq:y^star}
y_t^{\star} = x_t ' \hat{\delta}_1 (\gamma) + \hat{e}_t(\gamma) u_t , \quad t=1,...,n,
\end{equation}
where $ u_t $ is an iid sequence of centered and normalized variables, e.g. the Rademacher variables, to compute $ \mathcal{Q}_n^{\star} (\gamma) $. It is worthwhile to note that the bootstrap DGP \eqref{eq:y^star} imposes the null hypothesis $ \delta_2 =0, $ so as not to sacrifice the power of the test.
See also \cite{gonccalves2007bootstrapping} for a thorough discussion on the wild bootstrap for infinite order autoregression. 
Iterating this $ B $ times, we obtain $  \bar{\mathcal{Q}}_n^{\star} (\gamma) = B^{-1} \sum_j^{B} \mathcal{Q}_n^{\star, j} (\gamma)  $, the bootstrap estimate of the bias. 
In our experiment, $ B=200 $ suffices and thus the bootstrap is not computationally expensive. Therefore, we suggest the following bias corrected test statistic:
\begin{equation}\label{eq:Tnb}
\mathcal{T}_n^b (\gamma) := 
\frac{ \mathcal{Q}_n (\gamma) - \bar{\mathcal{Q}}_n^{\star} (\gamma) }{\sqrt{\hat{\mathcal{V}}}}.
\end{equation}
The numerical experiments in Section \ref{sec:mc} show that the bootstrap bias corrected test controls the type I error reliably without sacrificing power unduly.\footnote{It is worth noting  that the wild bootstrap may not be valid to approximate the quantiles of $ \mathcal{Q}_n (\gamma) $ as it does not capture the high-order dependence embodied in $ \mathcal{V} $.} Now, with the superscript $ \star $ indicating the bootstrap analogue, we have the following result. 
\begin{theorem}\label{thm:bootstrap} Under Assumptions \ref{ass:errors}-\ref{ass:M_diff} and $ \mathcal{H}_0 $,
	\begin{equation}
		%\sup_{\gamma \in \Gamma}
		\left\vert	E^{\star}W_{n}^{\star}(\gamma )-p \right\vert = o_p\left(p^{1/2}\right), \label{bootthmorig}
	\end{equation}
and
	\begin{equation} 
\mathcal{T}^b_n(\gamma)\overset{d}\rightarrow \frac{\mathcal{Q}(\gamma)}{\sqrt{\int_{0}^{1}\int_{0}^{1}K_{h}^{*}\left(r,s\right)dW\left(r\right)dW\left(s\right)}}. \label{bootthmnew}
	\end{equation}
\end{theorem}
	Theorem \ref{thm:bootstrap} implies that the bootstrap bias correction is first-order correct	but does not imply any higher-order improvement. We demonstrate its merits not analytically but numerically in Section \ref{sec:mc}, which is common with the wild bootstrap, see e.g. \cite{gonccalves2004bootstrapping} and references therein. Details of the components of $W_{n}^{\star}(\gamma )$ are left to Section \ref{sec:bootstrap_proof} of the online supplement.

% % % % % % % % % % % % % % % % % % % % % % % % % % % % % % % % % % % % % % % % % % % % % % 
% % % % % % % % % % % % % % % % % % % % % % % % % % % % % % % % % % % % % % % % % % % % % % 

\section{Wald test for general linear restrictions of growing rank}\label{sec: extension}

For a linear regression model $y_t=x_t' \beta+\varepsilon_t$, we consider testing linear restrictions  $\mathcal{H}_0^e:R^e\beta=r $, where $R^e$ is a matrix of rank $p\leq dim(\beta)$. For the usual Wald statistic 
\begin{equation}
W^e_{n}:=n\left(R^e\hat\beta-r\right) ^{\prime
}\left( R^e \hat{M} ^{-1}\hat{\Omega}
\hat{M}^{-1}R'^e \right) ^{-1}\left(R^e\hat\beta-r\right)
,  \label{wald_def_excl}
\end{equation}
 $\hat{M}=n^{-1}\sum_{t}x_tx_t'$ and $\hat\Omega$ is an estimator of $E\varepsilon_t^2x_tx_t'$, define ${\mathcal{Q}}_{n}^e:=\left({W_{n}^e-p}\right)/{\sqrt{2p}}$. Although the test statistic appears to be very similar to the Chow test,  the next theorem shows that the numerator and denominator in our corrected test statistic are related differently, calling for different critical values.  Furthermore, the HLV is now obtained by replacing $\Omega^{-1}$ in Assumption \ref{ass:MCLT} with $L=M^{-1}R^{e\prime}\left(R^e M^{-1}\Omega M^{-1}R^{e\prime}\right)^{-1}R^eM^{-1}$, and the resulting limit denoted $\mathcal{V}^e$.

To estimate $\mathcal{V}^e$ and employ bootstrap bias correction, it is convenient to reformulate the restriction as an exclusion restriction of growing dimension, without loss of generality. Indeed,  let $ S $ be the orthogonal complement of $ R^e $ and $ Q=(S, R^e) $. Then, let $ \mathring{x}_t = Q'^{-1}x_t $, $ \delta = Q \beta - (0',r')' $ such that $ \delta = (\delta_1',\delta_2')' $, with $\mathring x_t=\left(\mathring x_{1t}',\mathring x_{2t}'\right)'$ conformably partitioned, $ \delta_2 = R^e \beta - r $ and $ \tilde{y}_t = y_t - \mathring x_{2t} ' r $. We can now test the null hypothesis $ \mathcal{H}_0^e : \delta_2 = 0 $ in the regression of $ \tilde{y}_t $ on $ \mathring x_t $.

This transformation makes it particularly convenient to impose the null in the bootstrap resampling at the bias correction stage.
 Let $\bar{\mathcal{Q}}_n^\star$ denote the bootstrap bias correction factor for $W_n^e$. This yields the bias-corrected statistic $\mathcal{T}_n^{e,b} =    {\left(\mathcal{Q}_n^e  - \bar{\mathcal{Q}}_n^{\star}\right) }/{        \sqrt{\hat{\mathcal{V}^e}}}$,  where $\hat{\mathcal V}^e$ is defined analogously to $\hat{\mathcal{V}}$, but now with  
$ q_t = \left( np\right)^{-1/2} \tilde{x}_{2t} '\hat{\Omega}^{e-1} \hat{e}_{t}
\text{\ensuremath{\sum_{s=1}^{t-1}\tilde{x}_{2t} \hat{e}_{s}}} $, where $ \tilde{x}_{2t} $ denotes the residuals from the regression of $\mathring x_{2t} $ on $\mathring x_{1t} $ and $\hat\Omega^e=n^{-1}\sum_{t}\tilde{x}_{2t} \tilde{x}_{2t} '\hat{e}_t^2$. Then, with $ W(\cdot) $ defined in \eqref{C_def}, we have the following theorem: 
\begin{theorem}\label{thm:exclusion_thm}
        Let Assumptions \ref{ass:errors}-\ref{ass:kernel} hold with the following modifications: (1) $L$ replacing $\Omega^{-1}$ in Assumption \ref{ass:MCLT} and the resulting limit denoted $\mathcal{V}^e$, (2) The conditions in Assumption \ref{ass:M_diff}$(ii)$ hold for $ r=1 $. Also suppose that         \eqref{rate:Vhat} holds. 
        % $ p\left( v_{p}+\frac{p}{\sqrt{n}}\right) \rightarrow 0        $. 
        Then, under $\mathcal{H}_{0}^e$, 
               \begin{equation}
                        \mathcal{T}_n^{e,b} =    \frac{\mathcal{Q}_n^e-\bar{\mathcal{Q}}_n^\star   }{        \sqrt{\hat{\mathcal{V}^e}}} \overset{d}{\to} \frac{W (1)}{\sqrt{\int_{0}^{1}\int_{0}^{1}K_{h}^{*}\left(r,s\right)dW\left(r\right)dW\left(s\right)}}.
                \end{equation}
                Under $\mathcal{H}_{\ell}^e:R^e\beta-r=2^{1/4}\tau p^{1/4}/\sqrt{n}$,  the numerator in the limit becomes 
        $ W (1) +\tau _{\infty }^e$, where $\tau _{\infty }^e=\lim_{n\rightarrow \infty }\tau ^{\prime}\left(R^e M^{-1}\Omega M^{-1}R^{e\prime}\right)^{-1}\tau $.
\end{theorem}

The limiting distribution is mixed normal and pivotal but different from the limit in Theorem \ref{thm:null_Chow}. This is because the Chow test considers a quadratic form in $n^{-1/2}\sum_{t=1}^n\psi_t(\gamma)\xi_t$, which differs from this section by introducing a trend into the regressors via the factor $\psi_t(\gamma)$. Due to this difference, the partial sum processes converge to Gaussian processes with different covariance kernels. An R code to compute the critical values is available in the replication files.

\section{Monte Carlo Experiments}\label{sec:mc}

This section examines the finite sample properties of our bias corrected HLV-robust test $ \mathcal{T}_n^b $ compared to the standard chi-square test $ W_n $, which does not account for growing $ p $, and the unscaled $\mathcal{Q}_{n}$ statistic with standard normal critical values, which does not account for $ \mathcal{V} $, in terms of bias, size and power. 

We will consider the examples below in our Monte Carlo experiments. In Appendix \ref{sec:primitive} we check our assumptions for these settings.

\begin{E1}[Multiple Regression of Growing Dimension]
	% Stock Return Predictive Regression ? Chen and Hong, others. 
	\cite{koenker1988asymptotic} found through his metastudy that it is common practice in
	econometrics to increase the number of regressors as the sample size $n$ grows,
	at a rate of roughly $O\left(n^{1/4}\right) $. In this case, the approximation error $r_t$ is not explicitly modeled and may be set as zero. Practitioners thus adopt a flexible approach to modelling, where the assumed model becomes richer with more covariates and with more lagged terms to account for the dynamic effect in the spirit of the \textit{distributed lag model}, as illustrated in e.g. \cite{Stock2015introduction}. 
\end{E1}

\begin{E2}[Infinite-Order Autoregression]
	This model is one of the most fundamental models in time
	series analysis, see e.g. \cite{brockwell1991time} or \cite{Hamilton2020time}. For the process to be stationary, the coefficients $\left\{
	b _{j}\right\} $ in the AR$\left( \infty \right) $ model 
	$
	y_{t}=b _{0}+\sum_{j=1}^{\infty }b _{j}y_{t-j}+\varepsilon _{t}
	$
	are assumed to obey a certain decay rate. Specifically, the tail sum of the
	coefficients satisfies Assumption \ref{ass:aprx0} if  $\sum_{j=p}^{\infty}\left\vert b_{j}\right\vert =o\left( n^{-1/2}\right) $. While we take $p$ as given in our analysis, for practical purposes various methods based on information criteria are available to choose the truncation lag $p$, see e.g. \cite{Shibata1980} and references therein. \cite{wang2007regression} propose a lasso-based autoregressive order selection rule while \cite{lee2018asymptotic} propose a lag selection rule in an infinite order panel autoregression. 
	For expositional ease, we assume that the observations begin from $ t=1-p $ and $ x_1 = (1,y_0,...,y_{1-p}) $.
	% and $ \varepsilon_t $ is a stationary martingale difference sequence. 
\end{E2}

\begin{E3}[Nonparametric Series Regression]
	In case of the nonparametric series least squares estimation of $E\left( y_{t}|z_{t}\right) $,  there exists a sequence of transformations of the covariates $z_{t}$ given by 
	$
	x_{nt}:=x_{n}\left( z_{t}\right) :{\mathbb{R}}^{k}\mapsto {\mathbb{R}}^{p},
	$
	and coefficients $\beta _{n}$ such that 
	$
	E\left( y_{t}|\mathcal{F}_{t-1}\right) =f\left( z_{t}\right) =x_{nt}^{\prime
	}\beta _{n}+r_{n}\left( z_{t}\right),
	$
	where $r_{nt}=r_{n}\left( z_{t}\right) $ meets Assumption \ref{ass:aprx0} for a broad class of functions $ f $, see e.g. \cite{Andrews1991}, \cite%
	{Newey1997}, \cite{chen2007} and \cite{Lee2016}. 
	By Lemma 1 of \cite{Lee2016}, it is met if 
	$ |r_t|_{\infty} = O(p^\alpha) $ for some $ \alpha < 0 $ and $p^{2\alpha }\leq
	n^{-1}$.  Depending on the smoothness of the
	nonparametric function $f(\cdot )$, the regressor support dimension $k$, and
	the type of basis functions used, different values of $\alpha $ may be
	implied, see e.g. \cite{Newey1997}, \cite{chen2007}, p. 5573, for examples
	and further references. 
	Often, the condition (\ref{r_order}) holds under the so-called undersmoothing selection
	of $p$. Another closely related example is the partially linear regression model, e.g. \cite{engle1986semiparametric} and \cite{robinson1988root}. Again, while we do not consider data-dependent $p$, for practical purposes the literature proposes methods for the choice of $p$ using cross validation or information criteria, see e.g. p. 5623 of \cite{chen2007} for a list of references.
\end{E3}

 The tests are applied to the setting of the Chow test and testing general linear restrictions. 
We consider various sample sizes $ n $ and dimensions $ p $ from the three examples, E1-E3, with the error generated from a bounded ARCH process 
\begin{equation}\label{eq:ARCH}
\varepsilon_{t}  =\sigma_{t}\eta_{t},\;\;\;\;\;\;\;\;\;\;\;\;\;\sigma_{t}^{2}  =\left(1-\alpha\right)+\alpha \phi ( \varepsilon_{t-1} ) ,
\end{equation}
where $ \phi(x)  = x^2 1\{|x|\leq c\} +c^2 1\{|x|> c\} $, $\eta_{t}=\left(u_{t}-Eu_{t}\right)/\sqrt{var\left(u_{t}\right)}$,
and $\left\{ u_{t}\right\} $ is an iid sequence from the \cite{Marron1992} normal mixture distributions of type 1-3, which we refer to as error 1, 2 and 3. Their error 1 is standard normal. For a standard normal vector $\left(Z_{1},...,Z_{k}\right)$ and multinomial vector $\left(d_{1},...,d_{k}\right)$ with probability $\left(1/5,1/5,3/5\right)$, error 2 skewed unimodal variate is $u_{t}=Z_{1}d_{1}+\left(2Z_{2}/3+1/2\right)d_{2}+d_{3}\left(5Z_{3}/9+13/12\right)$, while error 3 strongly skewed variate is $u_{t}=\sum_{l=0}^{7}d_{l+1}\left(Z_{l+1}\left(2/3\right)^{l}+3\left(\left(2/3\right)^{l}-1\right)\right)$ with equally likely $ d_i $'s. 
We report results using \eqref{eq:ARCH} with $\alpha\in\left\{ 0.3,0.4,0.5,0.55,0.57\right\} $ and $ c = 2.5 $. Results from $ c=3 $ and $ \infty $ are similar and omitted.

More specifically, for multiple regression, E1, the regressors $ x_t $ consist of independent AR(1) processes with coefficient $ \alpha_x $ and ARCH innovation as in \eqref{eq:ARCH} and their lags of order up to 3. That is, we consider the distributed lag model with a growing number of variables. The first five elements of $\beta$ in \eqref{model_n} are set as $d_0\left(5^{-1/2},...,5^{-1/2}\right)p^{1/4}n^{-1/2}$ and the others as zeros. When there is a break, all the values become zero after the break so that the value $ d_0 $ controls the magnitude of the change. We vary  $ p \in \{5,9,13\} $ to examine the effect of the dimension on our tests.

For the infinite order AR regression, E2, we generate the sample from the MA(1) model $ y_{t}=\varepsilon_{t}+\theta1\left\{ t\leq\mu\right\} \varepsilon_{t-1} $, with $\mu=n$ for the size experiment and $\mu=[n\gamma]$ for the power evaluation, and estimate the AR($ p $) model with $p=9$ for $ n = 250 $ and $ p = 13 $ for $ n = 500 $.
For the sieve regression, E3, we consider two variables $ \zeta_{1t} $ and $ \zeta_{2t} $ and their lags $ \zeta_{1,t-1} $ and $ \zeta_{2,t-1} $ as regressors, denoted by $ z_{1t},\cdots,z_{4t}  $, after transforming them as $2\arctan\left(\zeta_{it}\right)/\pi$. Each $\zeta_{it}$ follows an AR(1) process with ARCH error. The regression function is set as $f\left(z_{1},\cdots,z_{4}\right) =d_0\left(1,z_{1},\cdots,z_{4},z_{1}^{2},...,z_{4}^{p_{1}}\right)\left(1^{-2},...,p_2^{-2}\right)' +\sqrt{\left|z_{1}\right|/n}$ with $p_{1}=\left\lfloor n^{1/4}\right\rfloor $ and $ p_2=1+4(p_1-1) $.  To estimate the regression function, we construct $ x_t $ from polynomial basis functions and its dimension $ p=p_2 $.

We first employ these DGPs to simulate pre-limit values of $ \mathcal{V} $ in \eqref{V_partial_def} with $ n=m=500, l=0 $ and $ p $ as described above for each case, which are plotted in Figure \ref{fig:V}, reporting averages from 10000 iterations. This serves as a useful illustration to observe visually that pre-limiting $ \mathcal{V} $ deviates from unity for various specifications. A broad observation we make is that the deviation is bigger with larger ARCH coefficients and bigger autocorrelation in $ x_t $, although this feature is not always monotone. To conclude, we observe that the nonlinear serial autocorrelation factor can induce serious distortion in inference without a suitable robustifying treatment, as we provide in section \ref{sec:HAC}.  

Also, Appendix \ref{sec:primitive} gives the verification of the high-level conditions in Assumptions 1-4 for these examples. 

\subsection{Chow Test} We consider three candidate break points as proportions of the sample sizes, $ \gamma \in \{0.2,0.3,0.5\} $. We begin by examining the bias of $ \mathcal{Q}_n (\gamma) $, conventionally centered by the degrees of freedom $ p $, under the null hypothesis. Note that a severe bias in $ \mathcal{Q}_n (\gamma) $ also implies that the size of the Wald test $ W_n (\gamma) $ can be distorted severely. We report the results in Figure \ref{fig:TSbiasmult}, in which the line with dot markers shows the bias in $\mathcal{Q}_n(\gamma)$ for $n=250$ and $n=500$. For E1, (Figures \ref{fig:Chowbiasmult250}, \ref{fig:Chowbiasmult500}), each vertical partition (marked by a dotted vertical line) corresponds to a specific value of $p$. Within each vertical partition the  DGP parameters change along the horizontal axes as $(\text{error type}, \alpha)$, in lexicographic order. As $p$ grows, we observe that the $\mathcal{Q}_{n}\left(\gamma\right)$ statistic exhibits severe finite sample bias for all values of the DGP parameters.

A similar visualization of bias in $\mathcal{Q}_n(\gamma)$ for E2 is presented in Figures \ref{fig:ChowbiasARMA250}, \ref{fig:ChowbiasARMA500}. Rather than report values for different $p$, here we focus on the case $p=9$ for $ n = 250 $ and $ p = 13 $ for $ n = 500 $ and allow the values of $\alpha$ and $\theta$ to vary along the horizontal axis lexicographically as $(\text{error type}, \alpha\text{ or }\theta)$, as detailed in the caption. A substantial bias in $\mathcal{Q}_n(\gamma)$ is observed for all cases, regardless of $n=250$ or $n=500$, albeit the biases are generally smaller in the latter case. Finally, Figures \ref{fig:ChowbiasSieve250}, \ref{fig:ChowbiasSieve500} show the bias in $\mathcal{Q}_n(\gamma)$ for E3, with the same $ p $ as for E2,  to mimic the asymptotic regime of a sieve regression, and parameters as in E2. We observe a similar pattern of substantial bias for both sample sizes.

As discussed above, Figures \ref{fig:Chowbiasmult250}-\ref{fig:ChowbiasSieve500} clearly show that the biases present in $\mathcal{Q}_{n}\left(\gamma\right)$
are severe. In these figures we also plot the bias of the bias-corrected HLV-robust statistic $\mathcal{T}^b_n(\gamma)$, shown in black with square markers. The bootstrap bias correction seems to work well for
all the cases, substantially alleviating bias. In Figure \ref{fig:TSbiasSieve}, we observe that $\mathcal{T}^b_n(\gamma)$ can still exhibit some bias for specific cases but for E1 and E3, unlike the bias of $\mathcal{Q}_n(\gamma)$, this is centered around zero, while for E2 it is generally smaller in absolute value. Thus we recommend the use of the bootstrap bias correction in practice especially when faced with large values of $p$.

We now study the finite sample rejection frequencies of four competing
tests: $\mathcal{T}_{n}^{b}(\gamma),\mathcal{T}_{n}(\gamma),\mathcal{Q}_{n}(\gamma),$ and $W_{n}(\gamma)$, with specific parameter values as given in the respective figure captions. As shown earlier, the unknown HLV scaling factor $\mathcal{V}$ varies along different ARCH
parameters. This motivates our approach of experimenting with different $\alpha$ values and innovations. The Monte Carlo sizes resulting from the experiment are plotted in Figure \ref{fig:TSsizemult}, wherein we place a horizontal dotted line to mark the nominal size of 5\%. We report results for $\gamma=0.3$. The vertical partitions in each panel of Figure \ref{fig:TSsizemult} correspond, as discussed earlier, to increasing values of $p$ from left to right in E1. We cover multiple regression (Figures \ref{fig:Chowsizemult250gm3}, \ref{fig:Chowsizemult500gm3}), AR fits (Figures \ref{fig:ChowsizeARMA250gm3}, \ref{fig:ChowsizeARMA500gm3}) and sieve regression (Figures \ref{fig:ChowsizeSieve250gm3}, \ref{fig:ChowsizeSieve500gm3}) for $n=250, 500$.

For all DGPs, the usual Wald statistic $W_n(\gamma)$ (diamond markers) over-rejects. Simply standardizing the test statistic $W_n(\gamma)$ to $\mathcal{Q}_n(\gamma)$, hence ignoring the HLV $\mathcal{V}$, does not improve matters. In fact, it usually worsens the problem of over-rejection. This can be seen in the lines with triangle markers. Our HLV-robust statistic $\mathcal{T}_n(\gamma)$ does much better, as the lines with dot markers indicate. While this shows the importance of the correction for $\mathcal{V}$ that we stress in the paper, there is still a tendency to over-reject. On the other hand, applying the bootstrap bias correction and using the bias corrected HLV-robust statistic $\mathcal{T}_n^b(\gamma)$ achieves excellent size control, as can be seen in the line with square markers. The discussion holds regardless of whether $n=250$ or $n=500$. Thus the importance of our proposed testing procedure is clearly visible.

We now analyze the power features of the competing test statistics for the proposed DGPs, allowing for breaks of different magnitudes and setting $\gamma=0.5$. After the break all the coefficients become zero so that the values of $d_0$ govern the size of the breaks in E1 and E3, while the  values of $\theta$ do so for E2. The power performance is plotted in Figure \ref{fig:TSpowerARMAandSieve}, where to conserve space we report results only for $n=400$. Again, we use $p=9$ for E2 and E3, while a range of $p$ is employed for E1. The line marker schemes for each of the competing tests are as described earlier. Examining the figure, the power of our HLV-robust statistics $\mathcal{T}_n(\gamma)$ (dots) and $\mathcal{T}_n^b(\gamma)$ (squares) tracks that of the uncorrected ones as the break size increases for both E2 (center panel) and E3 (right panel). For E1 (left panel), we only report results for $d_0=2$ for clarity. We observe that $W_n(\gamma)$ tends to have the highest power but our statistics still perform reasonably well with power in excess of 80\% even for large $p$. Recall that our size experiments earlier indicate that $W_n(\gamma)$ over-rejects, a phenomenon of which high power is likely an artefact. Thus we conclude that our test is able to control size without sacrificing power to an undue extent.

Finally, Figure \ref{fig:size_HAC} reports the size distortions when the sample variance of $ q_t $ multiplied by 2 is employed instead of $\hat{\mathcal{V}} $, noting that $ q_t $ is a martingale difference array. While we discussed the potential reasons for this severe size distortion in Section \ref{sec:HAC}, the investigation on a more precise  approximation to the finite sample distribution of this statistic is an interesting issue but out of the scope of this paper due to the complex nature of the statistic. Nevertheless, Figure  \ref{fig:size_HAC} provides numerical evidence that the typical scaling by variance is not sufficient to control size. Specifically, define the test statistic $\mathcal{T}_n^{b,2}(\gamma)$ exactly like $\mathcal{T}_n^{b}(\gamma)$ but with random scaling replaced by twice the variance of $q_t$ and the standard normal approximation. It is clear that the test (triangle markers) is oversized relative to our recommended test $\mathcal{T}_n^{b}(\gamma)$. 

\subsection{Testing Linear Restrictions}\label{sec:extension}
This section presents the outcomes of bias, size and power experiments for testing general linear restrictions, analogous to those for the Chow test in the preceding discussion. We use the reparameterization of the linear restrictions to the exclusion restrictions ${\delta}_2=0$, as discussed in Section \ref{sec: extension}. We focus on E1 with $n=400$, $p=8,12,16$, $d_0=1$, error 1 and 2 disturbances and $\alpha=0.4,0.55$. The results are displayed in Figure \ref{fig:TSWaldbiasmult}, with the same marking scheme as before and three test statistics employed: $W_n^e$, $\mathcal{Q}_n^e$ and $\mathcal{T}_n^{e,b}$. In all three figures, each vertical partition marks a different value of $p$, increasing from left to right.

The left panel of Figure \ref{fig:TSWaldbiasmult} shows that the bootstrap bias correction indeed improves matters, as was the case for the Chow test. The center panel again demonstrates the importance of our proposed corrections for size control. $W_n^e$ and $\mathcal{Q}_n^e$ tend to over-reject, becoming worse as $p$ increases. $\mathcal{T}_n^{e,b}$ controls size very well for medium to large $p$, while still outperforming $W_n^e$ and $\mathcal{Q}_n^e$ for smaller $p$. The right panel shows that $\mathcal{T}_n^{e,b}$ sacrifices some power relative to $W_n^e$ and $\mathcal{Q}_n^e$, but not unduly so.

\section{Empirical example}

\label{sec:emp examples} 

We revisit structural stability in the \cite{Hamilton2003} study of the effect of oil shocks on economic activities. The autoregressive distributed lag model, ADL$(p,p)$, with quarterly
time series of outputs and several oil price measures is employed. For real output, the quarterly growth rate of chain-weighted real GDP is used, while the oil price is the nominal crude oil producer price index, seasonally unadjusted. As in Hamilton, three oil price measures were considered: the growth rate $o_{t}$ from the previous quarter, the rectified linear unit, $o_{t}^{+}=o_{t}1\left\{ o_{t}>0\right\} $, and the net oil price increase, $o_{t}^{n}$, defined as the amount by which log oil prices in quarter $t$ exceed their peak value over the previous 12 months. If it does not exceed the previous peak, then $o_t^n$ is taken to be zero. We extend the original sample using the FRED database at the St. Louis Fed to obtain a sample from January 1949 to October 2019. 

First, we re-evaluate structural stability of the GDP dynamics using AR$(p)$ fits, and that of the regression function of GDP growth on oil price change using an ADL$(p,p)$ model with the three alternative measures of oil price change. Following Table 4 in \cite{Hamilton2003}, we investigate four exogenous disruptions in world petroleum supply. These are: the Arab–Israel War (November 1973), the Iranian Revolution (November 1978), the Iran–Iraq War (October 1980)
and the Persian Gulf War (August 1990). 

The p-values of the tests are reported in sub-tables (a) and (b) in Table \ref{table:examplestrucbreakGDP}, where for $ W(\gamma)$ these are computed using the ordinary Chow test. We observe that in many cases the usual Chow test supports a structural break in both regressions more strongly than our recommended $\mathcal{T}^b_n(\gamma)$ test.  Thus, the evidence for structural instability is often no longer as strong. In fact, the p-values that we calculate using our test exceed those of the standard Chow test in 22 out of 32 cases. 
\begin{table}[ht]
	\centering
	\small{        \begin{tabular}{c|cc|cc|cc|cc|c|cc|cc}
			&\multicolumn{8}{l|}{Stability test}   &\multicolumn{5}{l}{Exclusion test} \\
			&\multicolumn{2}{l}{(a) AR($p$)}&\multicolumn{6}{l|}{(b) ADL($p,p$) } &\multicolumn{5}{l}{(c) ADL($p,p$) }\\
			\midrule        
			\textbf{GDP}&&&\multicolumn{2}{c}{$o_t$}&\multicolumn{2}{c}{$o_t^+$}&\multicolumn{2}{c}{$o_t^n$}& \multicolumn{3}{|c}{\ \ \ all oil}&\multicolumn{2}{c}{NL}\\
			\midrule
			lags $p$ & 4 & 6 &  4 & 6 & 4 & 6 & 4 &6&& 4 &6& 4 &6\\ 
			\midrule 
			&\multicolumn{8}{c|}{Arab-Israel War, November 1973}\\
			\midrule
			$\mathcal{T}_n^{b}(\gamma)$ & 51.3	&43.9&	3.5	&0.52	&1.64	&1.33&	1.58	&0
			&$\mathcal{T}_n^{e,b}$&8.7&26.9&7.6&19 \\ 
			${W}_n(\gamma)$ & 52.9&	38	&0.5	&0	&0.58&	0&	0.7	&0
			 &$W_n^e$&9&4.4&2.8&4.4\\
			\midrule
			&\multicolumn{8}{c|}{Iranian Revolution, November 1978}\\
			\midrule 
			$\mathcal{T}_n^{b}(\gamma)$ & 41.8	&52.8	&1.04	&0.03&	7	&0.1	&0	&0
			 \\ 
			${W}_n(\gamma)$ & 48.5	&53.2&	0.6&	0&	7.57	&0	&0.1	&0	
			\\
			\midrule
			&\multicolumn{8}{c|}{Iran-Iraq War, October 1980}\\
			\midrule 
			$\mathcal{T}_n^{b}(\gamma)$ & 13.1	&28.2	&0.3	&0.4&	0.33	&0.64	&0.33	&0
			 \\ 
			${W}_n(\gamma)$ & 10.6	&19.7	&0	&0&	0.1	&0	&0	&0
			\\
			\midrule
			&\multicolumn{8}{c|}{Persian Gulf War, August 1990}\\
			\midrule 
			$\mathcal{T}_n^{b}(\gamma)$ & 54.6	&53.6&	73.3	&43.7&	17.8&	31.2	&4.53&	4.85
			 \\ 
			${W}_n(\gamma)$ &56.2	&65	&39.5	&16	&17.1	&14.4&	1.9	&1.17
			\\
			\bottomrule \end{tabular}}
	\caption{$ 100 \times $p-values of Chow tests and exclusion restriction tests for full sample. (a) Tests for stability of GDP dynamics via AR$(p)$ fits. (b) Tests for stability of ADL$(p,p)$ regressions of GDP on $o_t$, $o_t^+$ or $o_t^n$. (c) Tests for exclusion restrictions on all oil price measures ($o_t$, $o_t^+$, $o_t^n$) or nonlinear oil price measures ($o_t^+$, $o_t^n$.) in  ADL$(p,p)$ regressions of GDP on oil prices. \label{table:examplestrucbreakGDP}}
\end{table}
\begin{table}[ht]
        \centering
\small{        \begin{tabular}{c|cc|cc|cc|cc|c|cc|cc}
			&\multicolumn{8}{l|}{Stability test}   &\multicolumn{5}{l}{Exclusion test} \\
&\multicolumn{2}{l}{(a) AR($p$)}&\multicolumn{6}{l|}{(b) ADL($p,p$) } &\multicolumn{5}{l}{(c) ADL($p,p$) }\\
\midrule
\textbf{IP}&&&\multicolumn{2}{c}{$o_t$}&\multicolumn{2}{c}{$o_t^+$}&\multicolumn{2}{c}{$o_t^n$}& \multicolumn{3}{|c}{\ \ \ all oil}&\multicolumn{2}{c}{NL}\\
\midrule
lags $p$ & 12 & 18 &  12 & 18 & 12 & 18 & 12 &18&& 12 &18& 12 &18\\ 
 \midrule 
  &\multicolumn{8}{c|}{Arab-Israel War, November 1973}\\
  \midrule
$\mathcal{T}_n^{b}(\gamma)$  & 20.9	&32.9	&33.9	&28.4	&24.9	&18.9	&10.6	&17.8
&$\mathcal{T}_n^{e,b}$&27&30.9&11.9&18.5 \\ 
${W}_n(\gamma)$ & 8.46	&26.4	&5.18	&1.71	&3.93&	0.78	&1.23&	1.65
&$W_n^e$&1.7&0.8&0.5&0.5\\
\midrule
&\multicolumn{8}{c|}{Iranian Revolution, November 1978}\\
\midrule 
$\mathcal{T}_n^{b}(\gamma)$ & 11	&21.5&	1.39	&0.74	&0.7	&0.26&	0.15&	0.04
\\ 
${W}_n(\gamma)$ & 0.96	&4.3	&0&	0&	0&	0&	0	&0
\\
\midrule
&\multicolumn{8}{c|}{Iran-Iraq War, October 1980}\\
\midrule 
$\mathcal{T}_n^{b}(\gamma)$ &3.31	&9.34	&1.96	&1.87	&1.15	&1.68&	0.08	&0.46
 \\ 
${W}_n(\gamma)$ & 0 & 0.04&0&0&0&0&0&0\\
\midrule
&\multicolumn{8}{c|}{Persian Gulf War, August 1990}\\
\midrule 
$\mathcal{T}_n^{b}(\gamma)$ & 5.82	&14.6	&3.95	&19.4	&2.03	&12.8&	0.94	&8.04
 \\ 
${W}_n(\gamma)$ & 0.08	&1.13	&0.21	&2.98	&0.23	&2.43	&0.02	&0.22
\\
\midrule
 \bottomrule \end{tabular}}
 \caption{$ 100 \times $p-values of Chow tests and exclusion restriction tests for full sample. (a) Tests for stability of IP dynamics via AR$(p)$ fits. (b) Tests for stability of ADL$(p,p)$ regressions of IP on $o_t$, $o_t^+$ or $o_t^n$. (c) Tests for exclusion restrictions on all oil price measures ($o_t$, $o_t^+$, $o_t^n$) or nonlinear oil price measures ($o_t^+$, $o_t^n$.) in  ADL$(p,p)$ regressions of IP on oil prices. \label{table:examplestrucbreakIP}}
\end{table}

Second, we explore the relevance of the oil price measures and of the nonlinear transformations ($o_t^+$, $o_t^n$) by testing two exclusion restrictions in the ADL$(p,p)$ regression that include all the three oil price measures as covariates. The first exclusion restriction is to set the coefficients of all the measures as zero and the second is to set those of the nonlinear transformations ($o_t^+$, $o_t^n$) to zero. This yields 12 and 8 df, respectively, when $ p=4 $ and 18 and 12 df, respectively, when $ p=6 $. As shown in sub-table (c) of Table \ref{table:examplestrucbreakGDP}, our recommended test $T^{e,b}_n$ produces p-values bigger than 5\% for all cases, suggesting the effect of oil price as measured by these transformations is not statistically significant, nor are the nonlinear transformations. The standard Wald test for the exclusion restrictions is more supportive of their inclusion but may lack robustness with large df. Overall, our tests produce larger p-values than the standard Chow or Wald tests in 72\% of cases (26 out of 36) cases in Table \ref{table:examplestrucbreakGDP}.

As another measure of economic activity we now consider the industrial production (IP) index. This is available at monthly frequency and thus we consider ADL(12,12) and ADL(18,18) to include lags of one year and one and a half years, respectively. 
With monthly data, the dimensionality becomes more important: 
the number of restrictions we test varies from 13 and 25 in the structural break test for the AR(12) and ADL(12,12) regressions to 36 and 48 for the exclusion tests in the ADL(18,18) regression.  

The results in Table \ref{table:examplestrucbreakIP} illustrate much stronger differences in the conclusions of our test versus the standard Chow test, compared to the GDP study in Table \ref{table:examplestrucbreakGDP}. The rejection of the null of no structural break is now often overturned at reasonable significance levels. For example, for the Arab-Israel War, the ADL$(p,p)$ model fails to reject the null of no structural break for any significance level below 10.6\% when using our test $\mathcal{T}^b_n(\gamma)$, in contrast to the standard Chow test ${W}_n(\gamma)$. Conclusions are likewise overturned for the AR$(p)$ model and the Iranian Revolution and Iran-Iraq War, and indeed for both the ADL$(p,p)$ and AR$(p)$ in several cases for the Persian Gulf War.

The results in Tables \ref{table:examplestrucbreakGDP} and \ref{table:examplestrucbreakIP} are not surprising given our simulation evidence. Indeed, our Monte Carlo simulation illustrates the effect of the degrees of freedom (df) on finite sample properties of the two tests, $ W_n(\gamma)$ tends to have larger p values in the AR case ($p+1$ df) than in the ADL case ($2p+1$ df), while $\mathcal{T}^b_n(\gamma)$ would be the opposite. These are exactly the patterns that we also observe. Finally, sub-table (c) of Table \ref{table:examplestrucbreakIP} shows even stronger differences than Table \ref{table:examplestrucbreakGDP}(c), with all conclusions on the exclusion restrictions overturned at significance levels below 11.6\%. Overall, our tests produce larger p-values in every single case considered in Table \ref{table:examplestrucbreakIP} when compared to the standard Chow or Wald tests.

%Building on studies on major oil shocks such as \cite{Hamilton2009}, we also consider two subsamples: SS1 starts after the 1980 oil shock while SS2 ends before the 2007 oil shock. As for the lag order $ p $, we try both $ p=4 $ and $ p=6 $ for robustness. Furthermore, we also test subsample time series and regression stability as well as exclusion restrictions for the dependent variable IP, using $p=12$ and $p=18$ as before. The p-values are reported in Table \ref{table:examplesubsamp}, with the each panel covering GDP and IP and containing subpanels for SS1 and SS2. Again, we note that evidence against the null becomes weaker when using our recommended tests $\sup \mathcal{T}^b_n(\gamma)$ or $\mathcal{T}_n^{e,b}$, as compared to the standard Wald statistic approach. Very often the conclusion of the test is changed when using our approach, but even when this is not the case there can be large differences in p-values.

\bibliographystyle{chicago}
\setstretch{1}
\bibliography{AGmasterref}

\newpage \newgeometry{left=.25in, right=.25in,bottom=.05in,top=.05in}
% censored/figR3/

\allowdisplaybreaks 
\begin{figure}
	\centering 
	\begin{subfigure}{0.3\textwidth}
		\includegraphics[width=\linewidth]{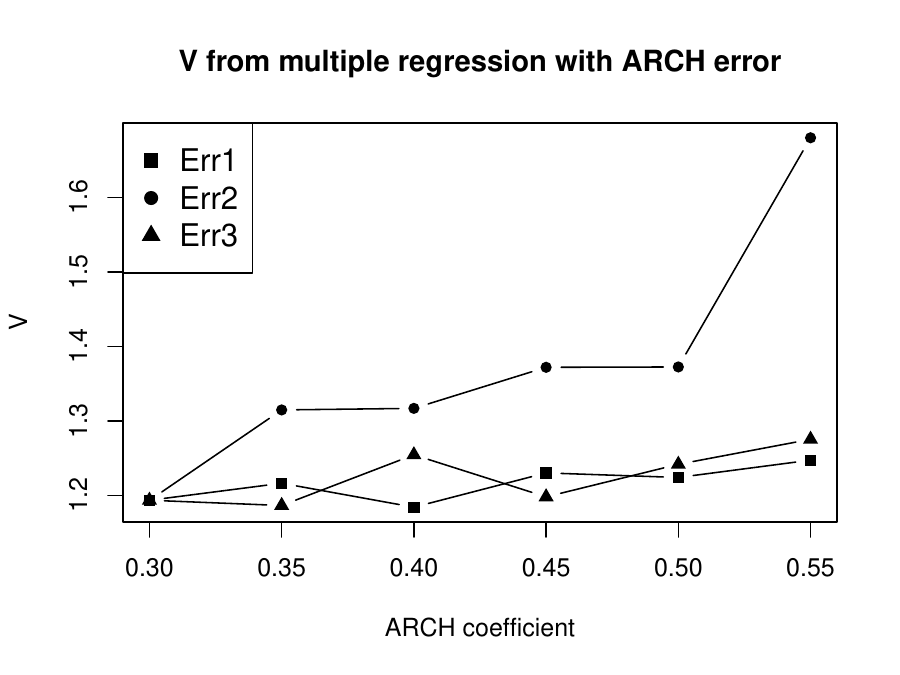}
		\subcaption{\tiny{E1 with $p=12, \alpha_x =0.7 $}\label{fig:Vmult}}
	\end{subfigure}
	\begin{subfigure}{0.3\textwidth}
		\includegraphics[width=\linewidth]{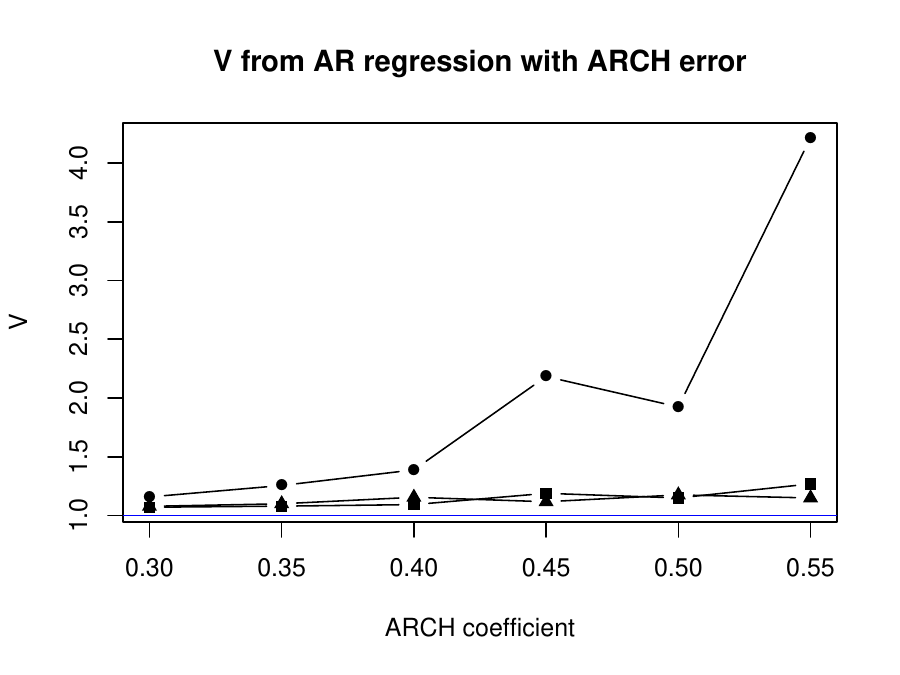}
		\subcaption{\tiny{E2 with $p=12, \theta=-0.5$.}\label{fig:Var1}}
	\end{subfigure}
	\begin{subfigure}{0.3\textwidth}
		\includegraphics[width=\linewidth]{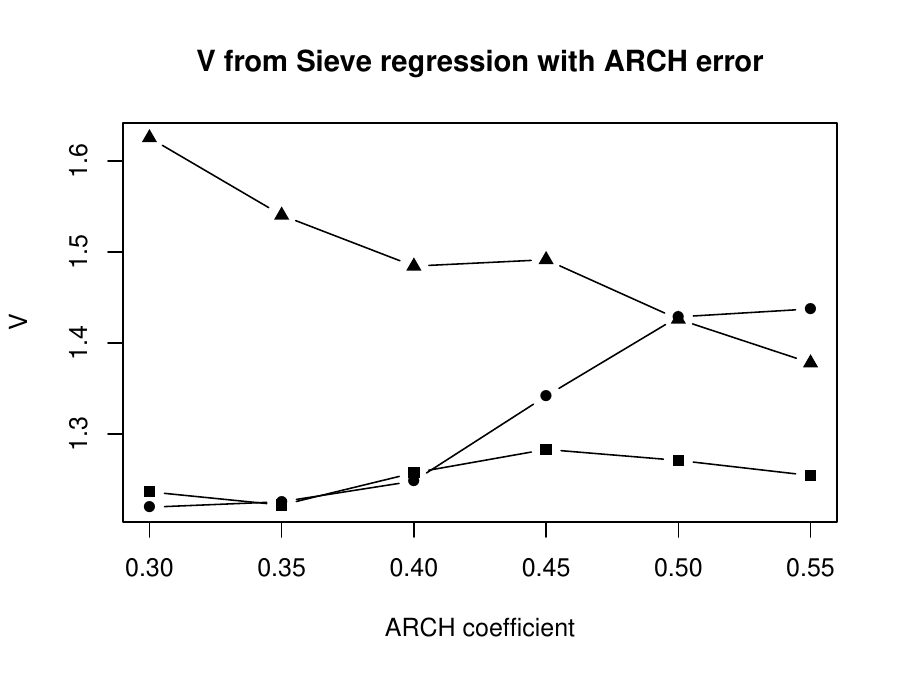}
		\subcaption{\tiny{E3 with $p=13, \alpha_x =0.7$}\label{fig:Var5}}
	\end{subfigure}
	\caption{\scriptsize{Simulated pre-limit of $\mathcal{V}$ for $ n=m=500 $ and $l=0$. Error 1: square; Error 2: dot; Error 3: triangle.}} \label{fig:V}
\end{figure}

\begin{center}
	\begin{figure}
		\begin{subfigure}{0.3\textwidth}
			\includegraphics[width=\linewidth]{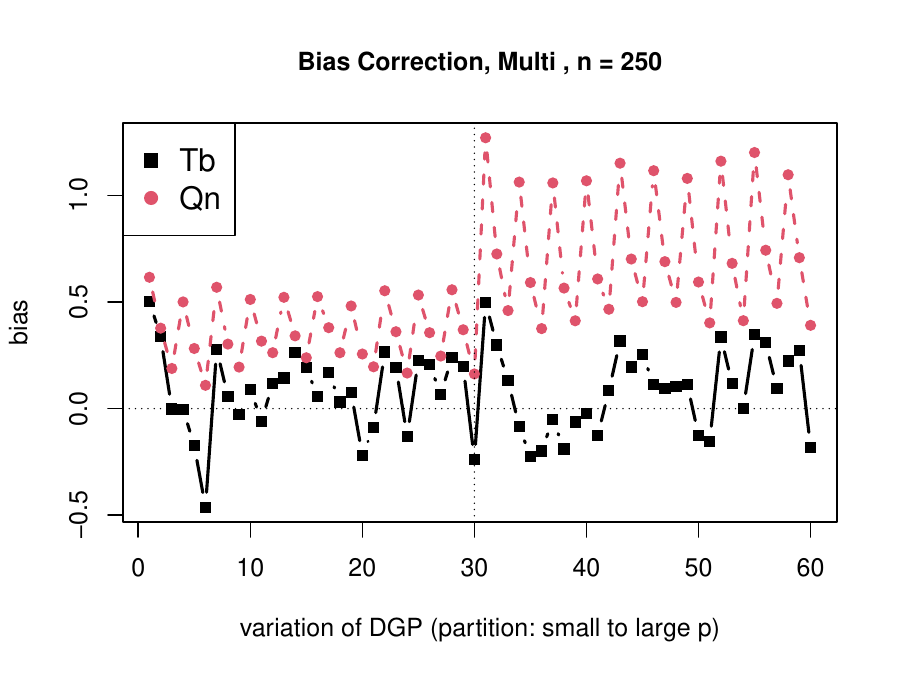}
			\subcaption{\tiny{Chow test bias, E1: $n=250$.}\label{fig:Chowbiasmult250}}
		\end{subfigure}
		\begin{subfigure}{0.3\textwidth}
			\includegraphics[width=\linewidth]{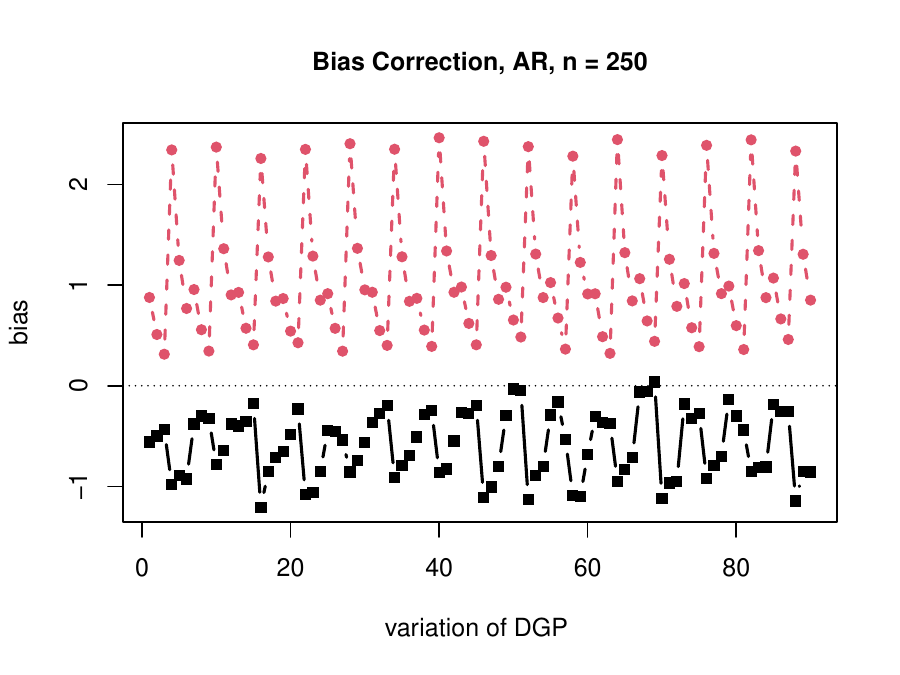}
			\subcaption{\tiny{Chow test bias, E2: $n=250$.}\label{fig:ChowbiasARMA250}}
		\end{subfigure}
		\begin{subfigure}{0.3\textwidth}
			\includegraphics[width=\linewidth]{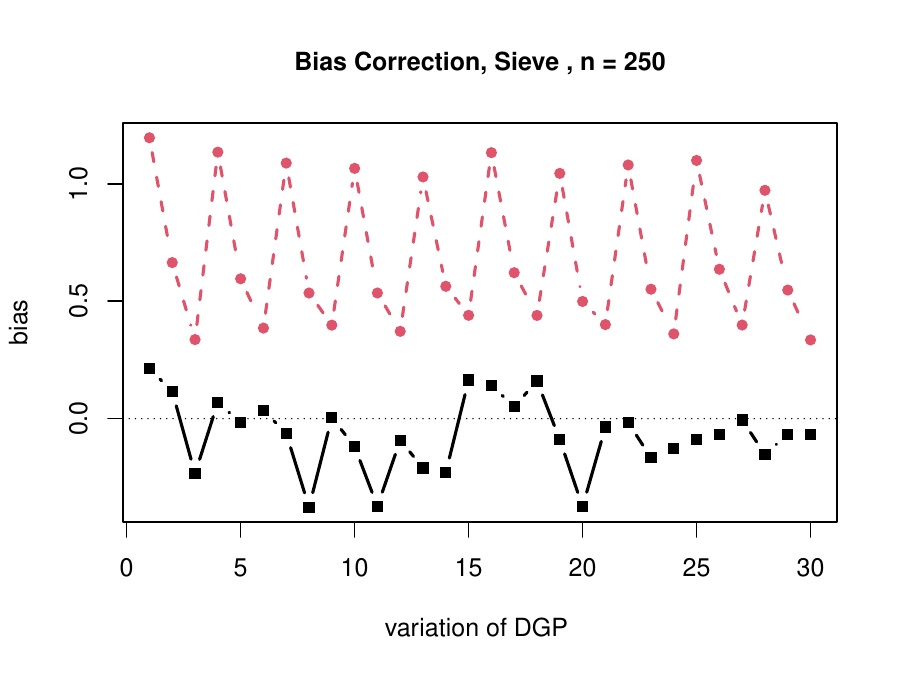}
			\subcaption{\tiny{Chow test bias, E3: $n=250$.}\label{fig:ChowbiasSieve250}}
		\end{subfigure}
		\begin{subfigure}{0.3\textwidth}
			\includegraphics[width=\linewidth]{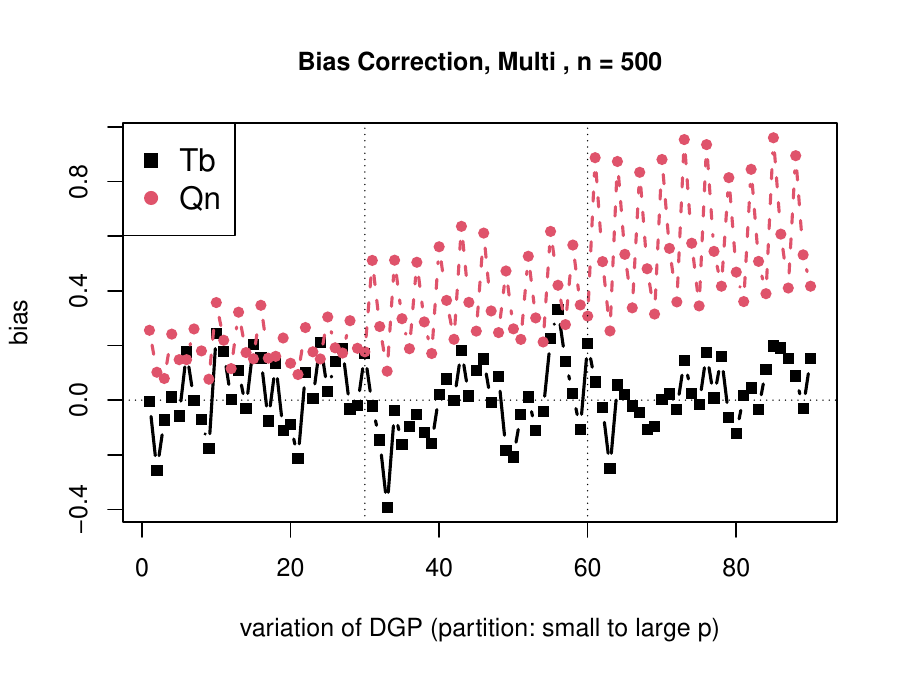}
			\subcaption{\tiny{Chow test bias, E1: $n=500$.}\label{fig:Chowbiasmult500}}
		\end{subfigure}
		\begin{subfigure}{0.3\textwidth}
			\includegraphics[width=\linewidth]{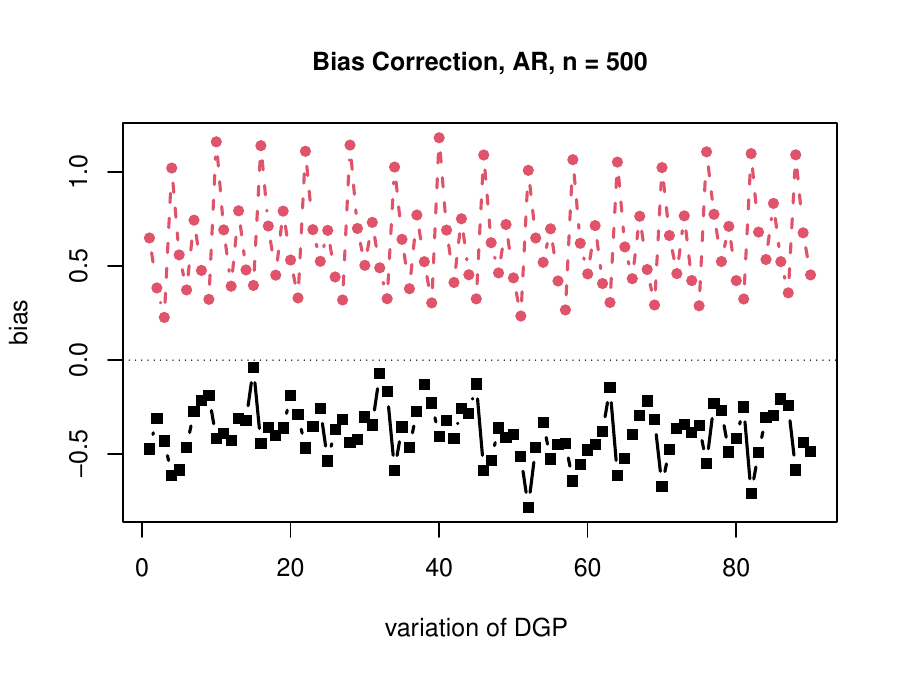}
			\subcaption{\tiny{Chow test bias, E2: $n=500$.}\label{fig:ChowbiasARMA500}}
		\end{subfigure}
		\begin{subfigure}{0.3\textwidth}
			\includegraphics[width=\linewidth]{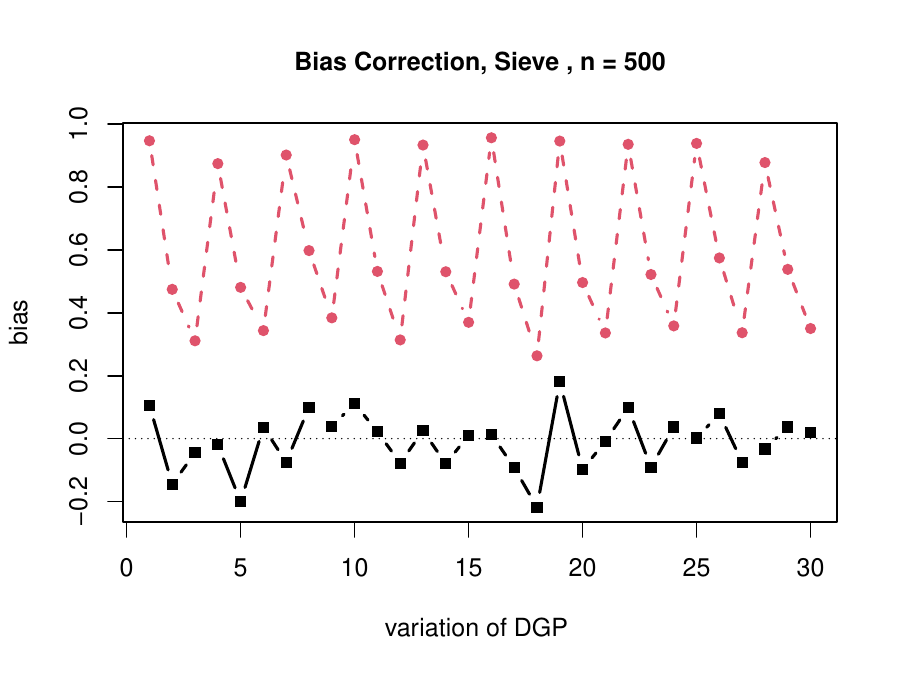}
			\subcaption{\tiny{Chow test bias, E3: $n=500$.}\label{fig:ChowbiasSieve500}}
		\end{subfigure}
		\caption{\scriptsize{Bias in $\mathcal{Q}_n(\gamma)$ (dot) and $\mathcal{T}_n^b(\gamma)$ (square, black). For E1, the vertical partitions in (a)  and (d) correspond to $ p=5,9 $ and $p=5,9,13$, respectively. Within each vertical partition results are ordered lexicographically as $(\gamma\in\{0.2,0.3,0.5\}, \text{error } \in\{1,2\},\alpha \in \{0.3,0.57\})$. E2 and E3: $p=9$ for $ n=250 $ and $ p=13 $ for $ n=500 $. Results horizontally ordered lexicographically as $(\gamma, \text{error },\alpha\text{ or }\theta)$.  \label{fig:TSbiasSieve}\label{fig:TSbiasAR}\label{fig:TSbiasmult}}}
	\end{figure}
\end{center}

%\ref{fig:TSbiasAR}

\begin{center}
	\begin{figure}
		\begin{subfigure}{0.3\textwidth}
			\includegraphics[width=\linewidth]{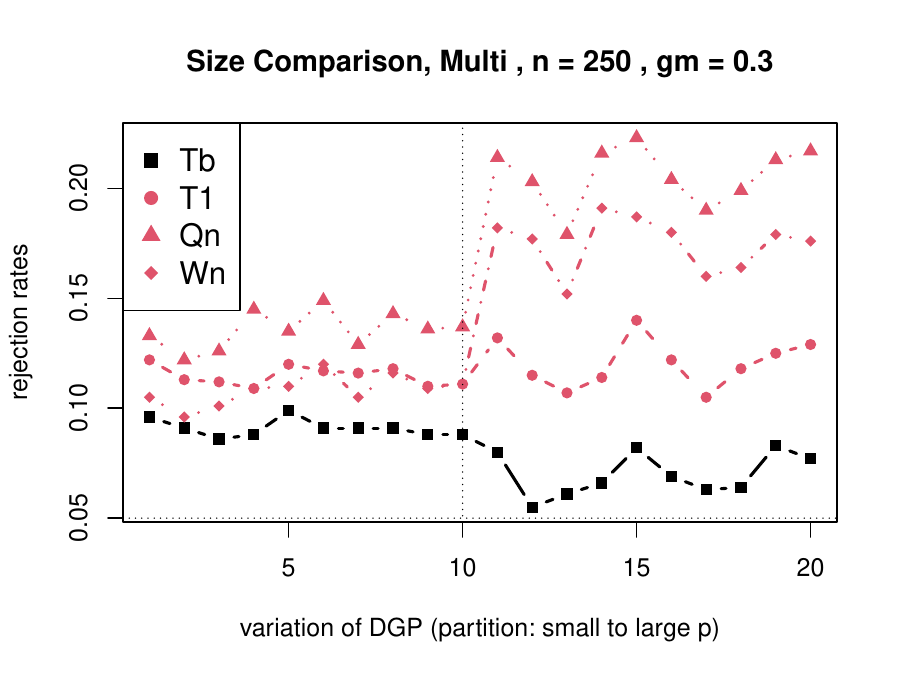}
			\subcaption{\tiny{Chow test size, E1: $n=250$, $\gamma=0.3$.}\label{fig:Chowsizemult250gm3}}
		\end{subfigure}
		\begin{subfigure}{0.3\textwidth}
			\includegraphics[width=\linewidth]{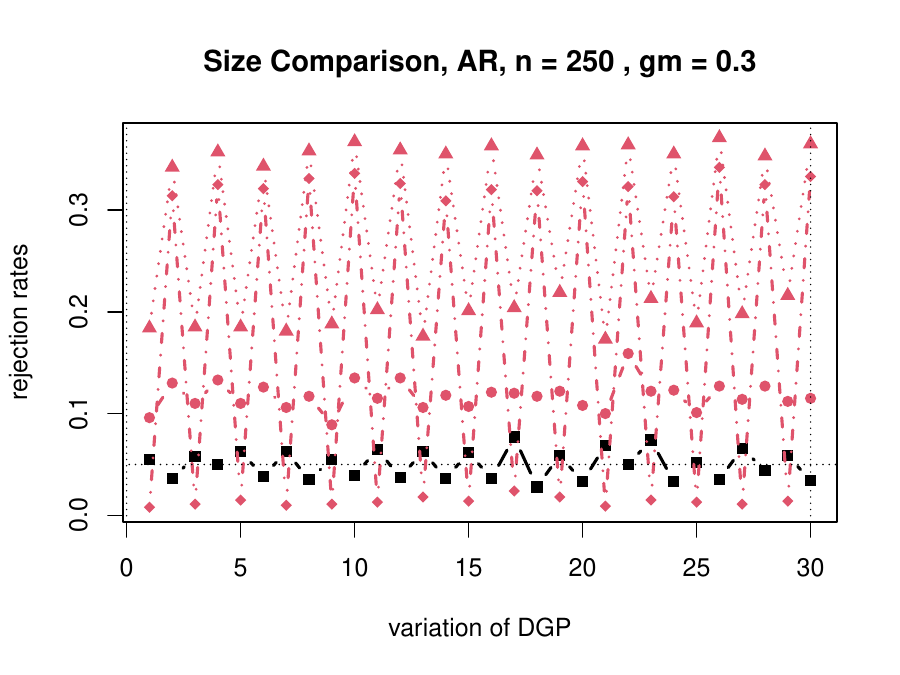}
			\subcaption{\tiny{Chow test size, E2: $n=250$, $\gamma=0.3$.}\label{fig:ChowsizeARMA250gm3}}
		\end{subfigure}
		\begin{subfigure}{0.3\textwidth}
			\includegraphics[width=\linewidth]{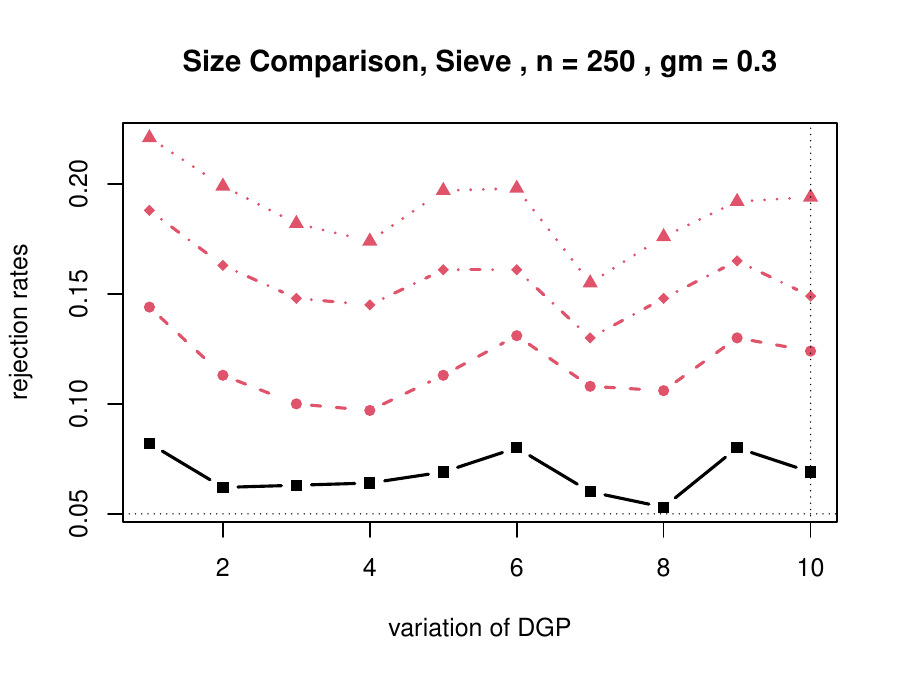}
			\subcaption{\tiny{Chow test size, E3: $n=250$, $\gamma=0.3$.}\label{fig:ChowsizeSieve250gm3}}
		\end{subfigure}	
		\begin{subfigure}{0.3\textwidth}
			\includegraphics[width=\linewidth]{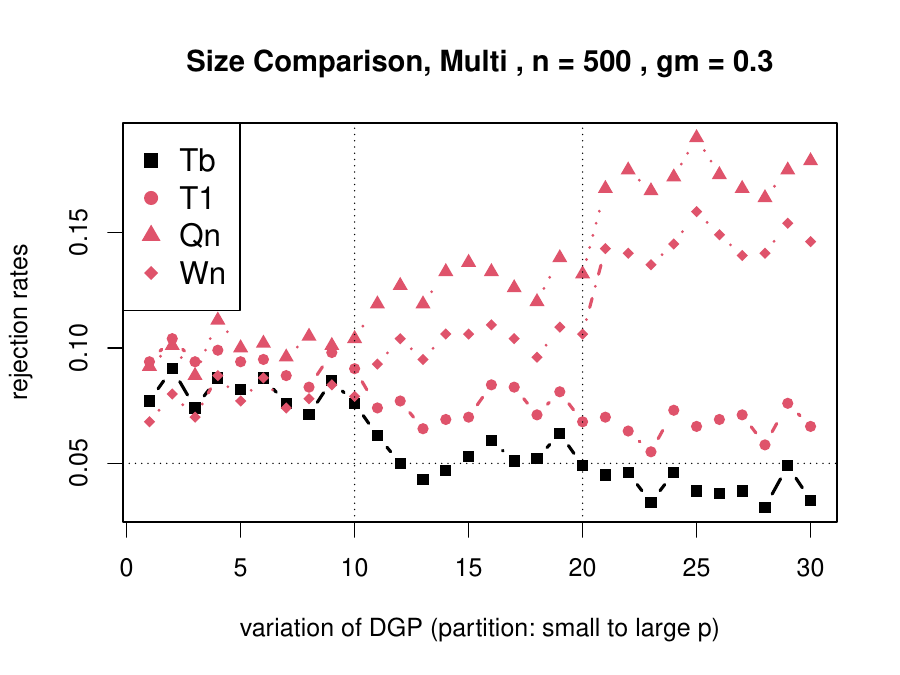}
			\subcaption{\tiny{Chow test size, E1: $n=500$, $\gamma=0.3$.}\label{fig:Chowsizemult500gm3}}
		\end{subfigure}
		\begin{subfigure}{0.3\textwidth}
			\includegraphics[width=\linewidth]{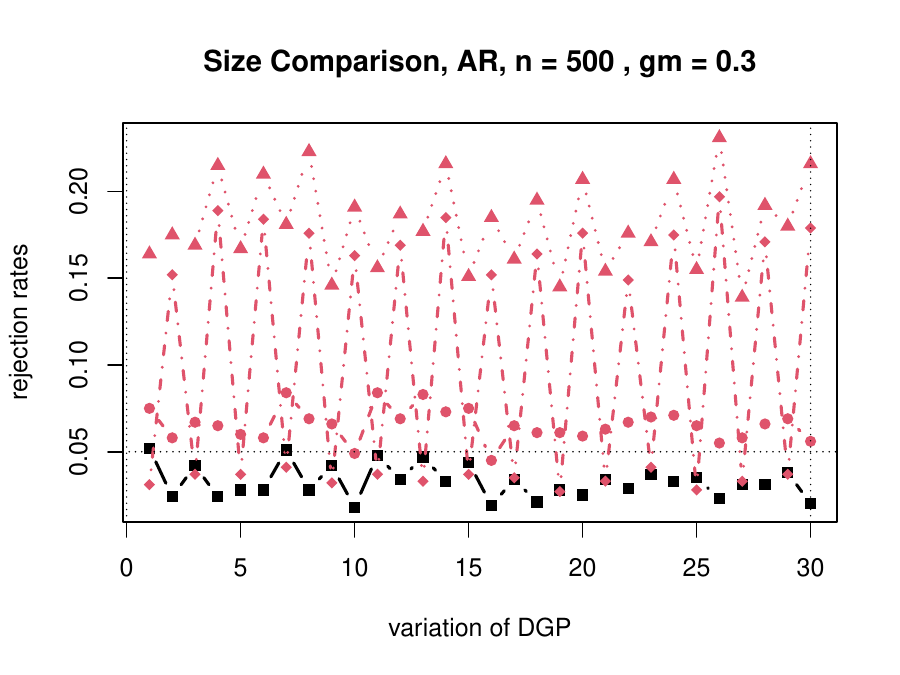}
			\subcaption{\tiny{Chow test size, E2: $n=500$, $\gamma=0.3$.}\label{fig:ChowsizeARMA500gm3}}
		\end{subfigure}
		\begin{subfigure}{0.3\textwidth}
			\includegraphics[width=\linewidth]{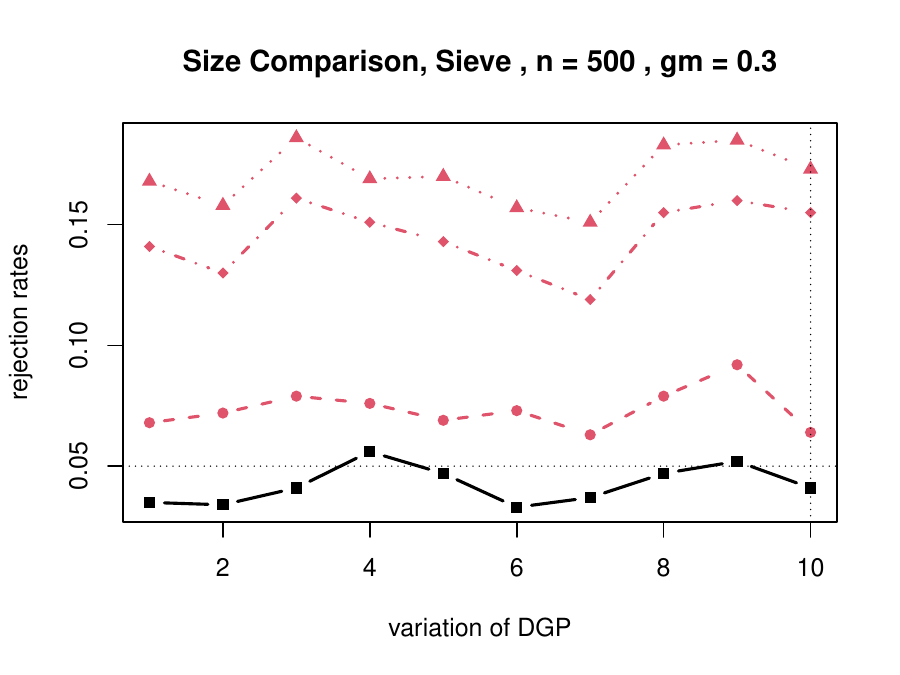}
			\subcaption{\tiny{Chow test size, E3: $n=500$, $\gamma=0.3$.}\label{fig:ChowsizeSieve500gm3}}
		\end{subfigure}
		
		\caption{\scriptsize{Size of Chow tests with $\gamma = 0.3 $ : ${W}_n(\gamma)$ (diamond), $\mathcal{Q}_n(\gamma)$ (triangle), $\mathcal{T}_n(\gamma)$ (dot) and $\mathcal{T}_n^b(\gamma)$ (square, black). Nominal size is 5\%. For E1, vertical partitions in (a) correspond to $n=250$ and $p=5,9$ and those in (d) correspond to $n=500$ and $p=5,9,13$. Within each vertical partition results are ordered lexicographically as $(\text{error } \in\{1,2\},\alpha \in \{0.3,0.57\})$. 
				For E2,
				$p=9$ for $ n=250 $ and $ p=13 $ for $ n=500 $. 
				Results horizontally ordered lexicographically as $(\text{error } \in\{1,2\},\alpha \in \{0.3,0.57\}\text{ or }\theta \in \{-0.5,-0.1,0.5\})$.
				For E3, 
				$p=9$ for $ n=250 $ and $ p=13 $ for $ n=500 $.  Results horizontally ordered lexicographically as $(\text{error } \in\{1,2\},\alpha \in \{0.3,...,0.57\})$.				
				\label{fig:TSsizemult}}
		}
	\end{figure}
\end{center}

\begin{center}
	\begin{figure}		
		\includegraphics[width=0.3\linewidth]{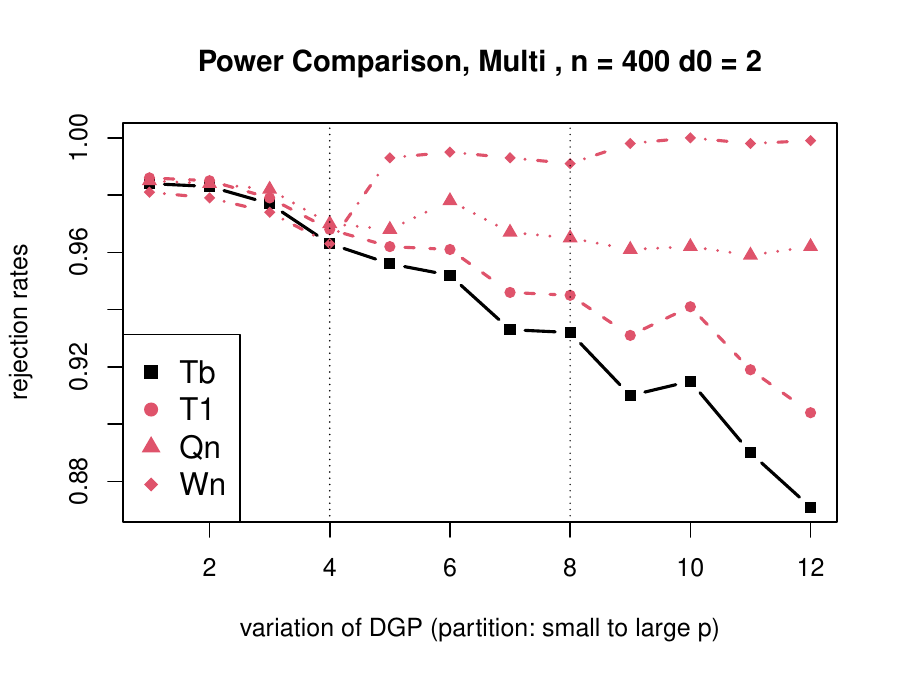}
		\includegraphics[width=0.3\linewidth]{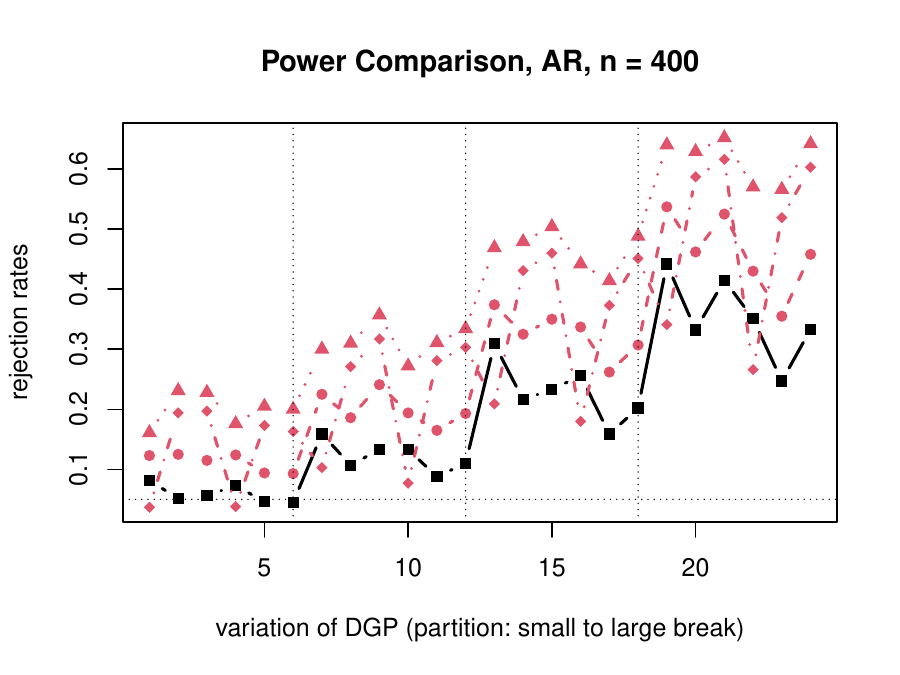}
		\includegraphics[width=0.3\linewidth]{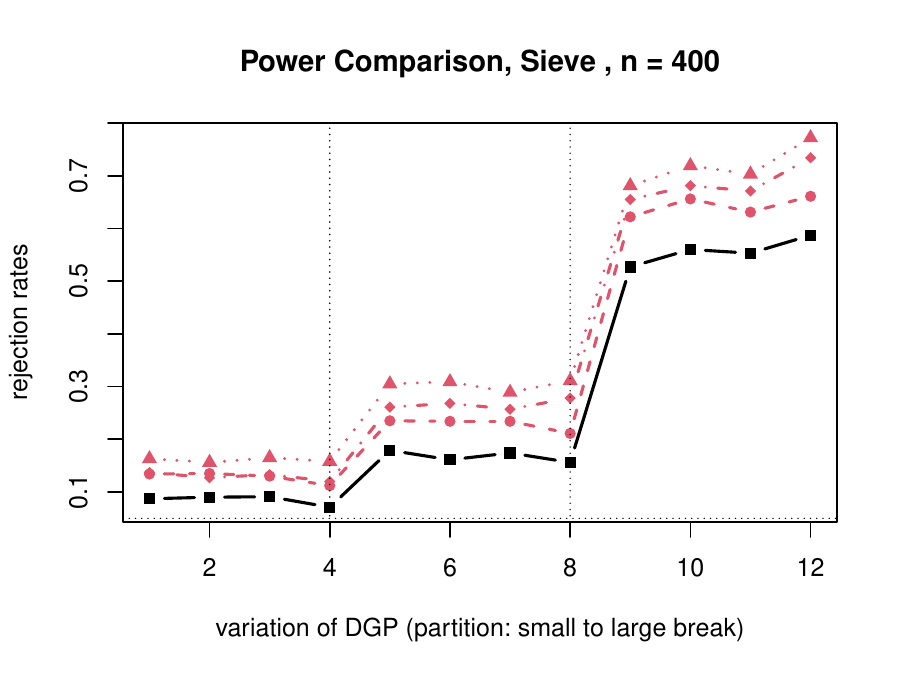}
		\caption{\scriptsize{Power of Chow tests, ${W}_n(\gamma)$ (diamond), $\mathcal{Q}_n(\gamma)$ (triangle), $\mathcal{T}_n(\gamma)$ (dot) and $\mathcal{T}_n^b(\gamma)$ (square, black): 
				E1 (left), E2 (center) and E3 (right), $n=400$, $\gamma=0.5$. 
				Vertical partitions correspond to $p=5,9,13$ (left), $\theta=0.2,0.4,0.6,0.8$ (center) and $\delta=0.5p^{1/4}/n^{1/2}(1,5,10)$ (right). Within each vertical partition results are ordered lexicographically as  $(\text{error } \in\{1,2,3\},\alpha \in \{0.3,0.5\})$ for E1 and $(\text{error } \in\{1,2,3\},\alpha \in \{0.3,0.5\}\text{ or }\theta)$ for E2 and E3.
				\label{fig:TSpowerARMAandSieve}}}
	\end{figure}
\end{center}

\begin{figure}
	\centering
%	\captionsetup{width=.8\linewidth}
	\includegraphics[width=.3\linewidth]{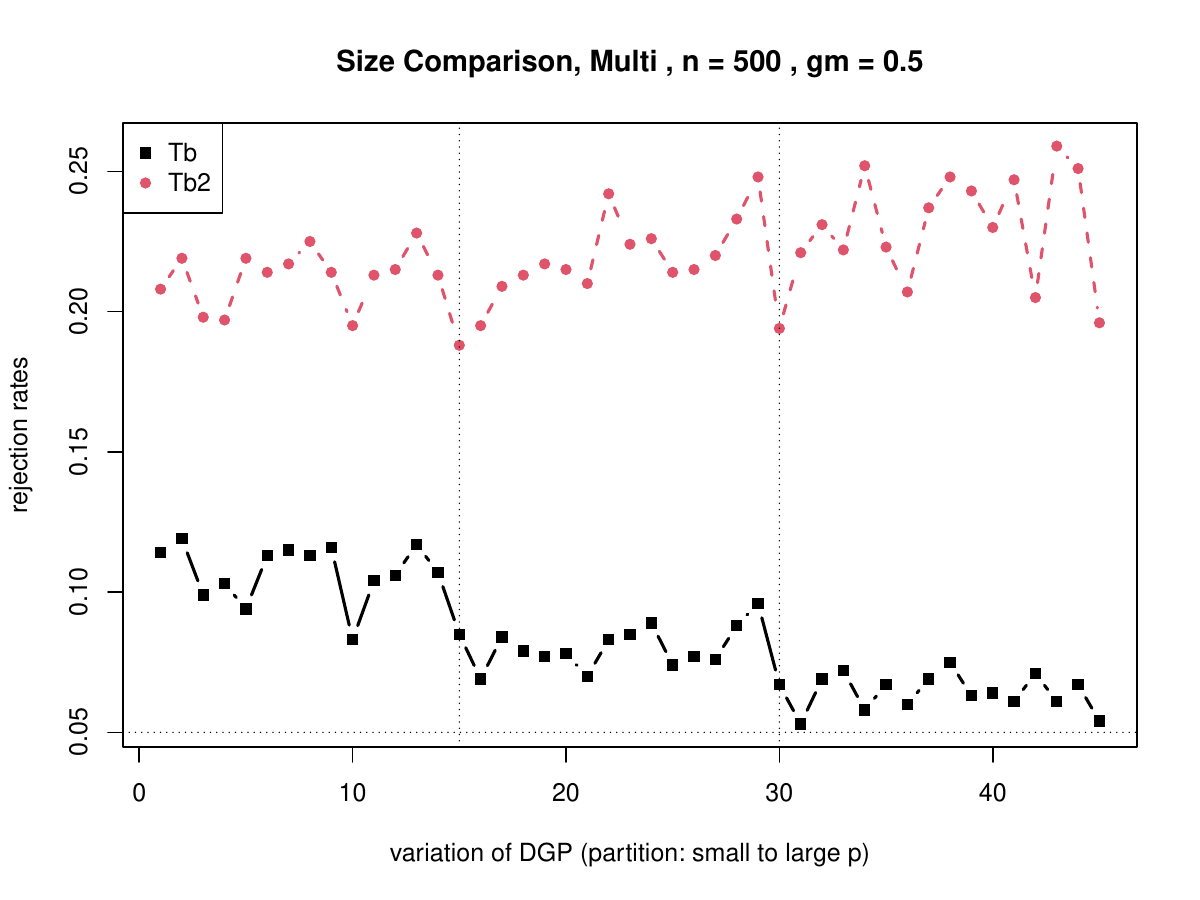}	
	\includegraphics[width=.3\linewidth]{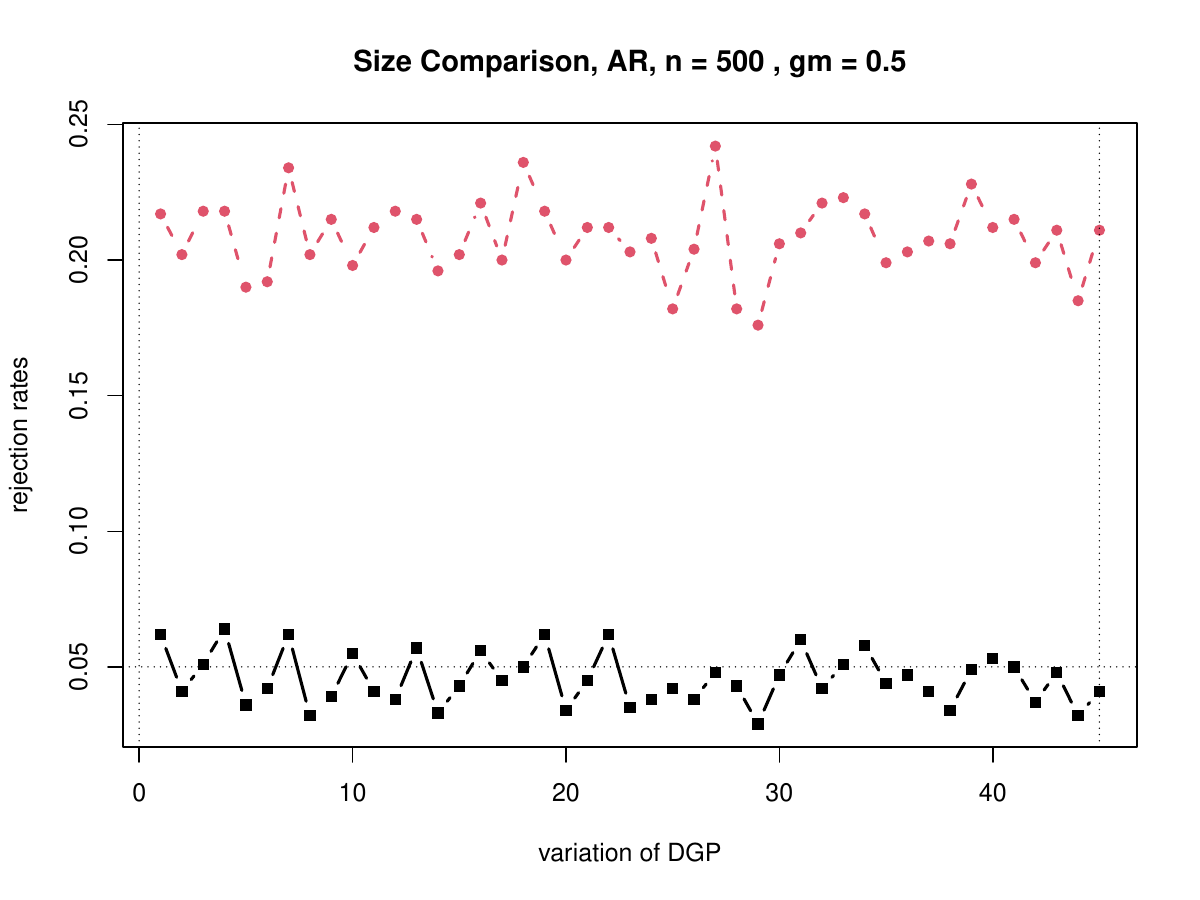}	
	\includegraphics[width=.3\linewidth]{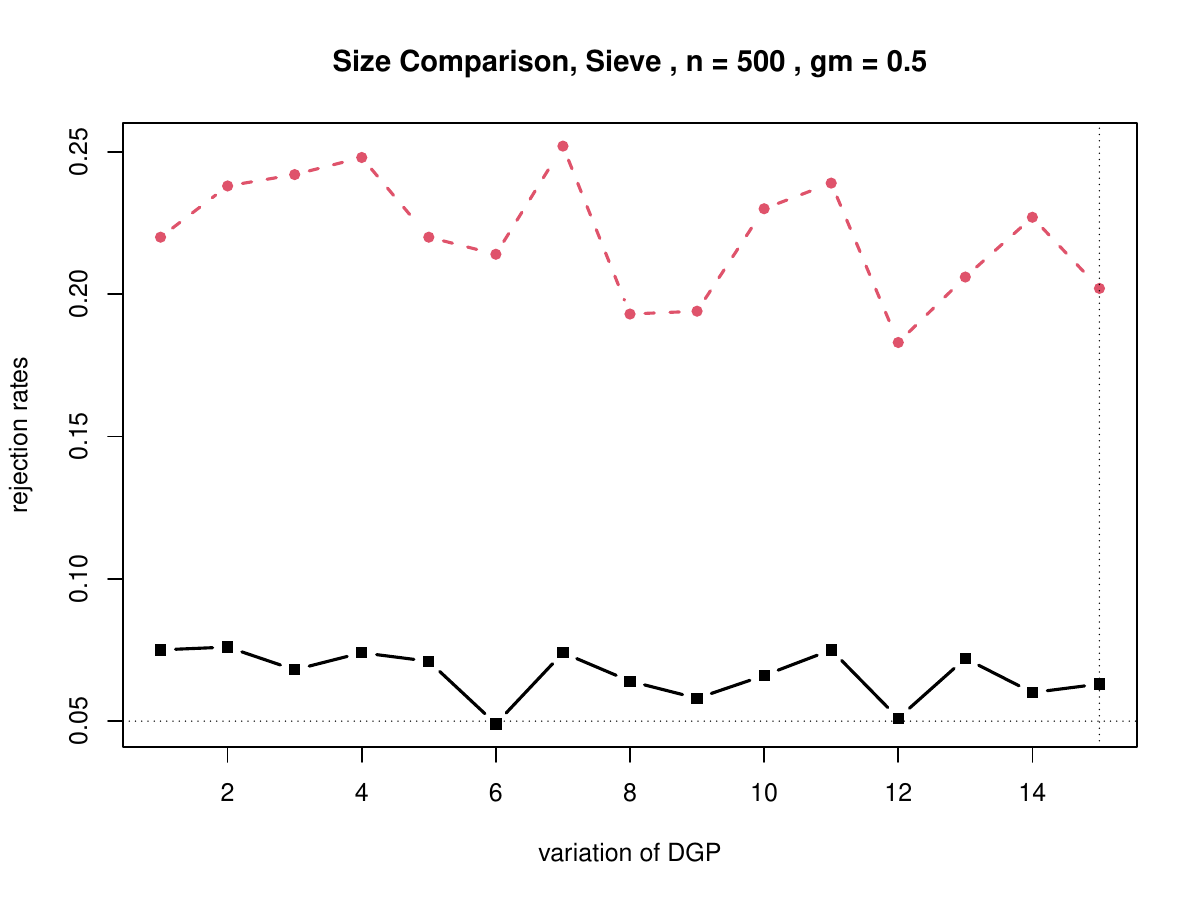}		
	\caption{Size distortions in the test based on $\mathcal{T}_n^{b,2}(\gamma)$, which is computed exactly like $\mathcal{T}_n^b(\gamma)$ except that the scaling is done by the sample variance of $ q_t $ along with the standard normal approximation.}
	\label{fig:size_HAC}
\end{figure}

\begin{center}
	\begin{figure}
		\includegraphics[width=0.3\linewidth]{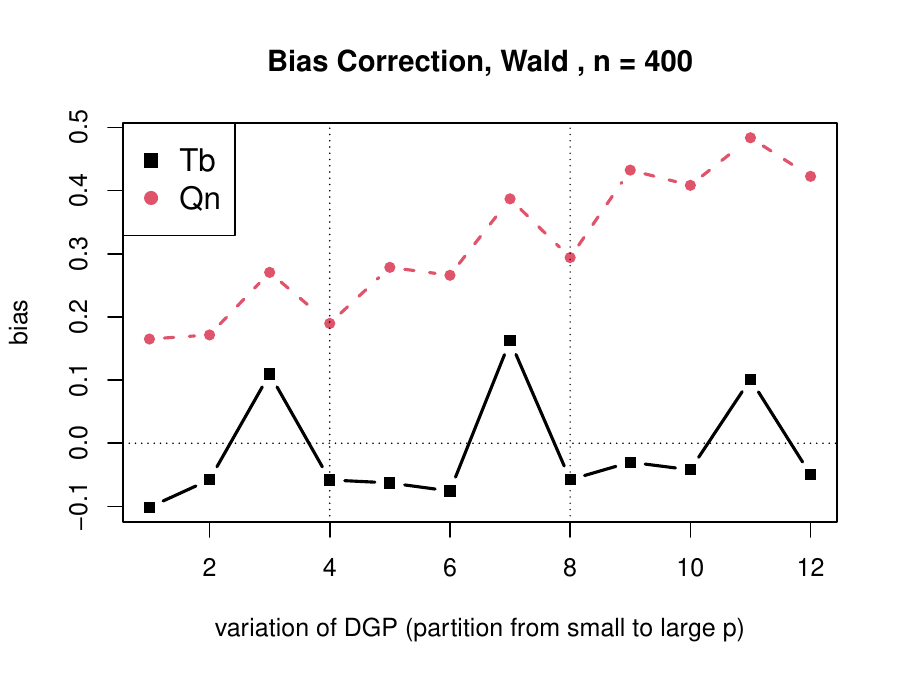}
		\includegraphics[width=0.3\linewidth]{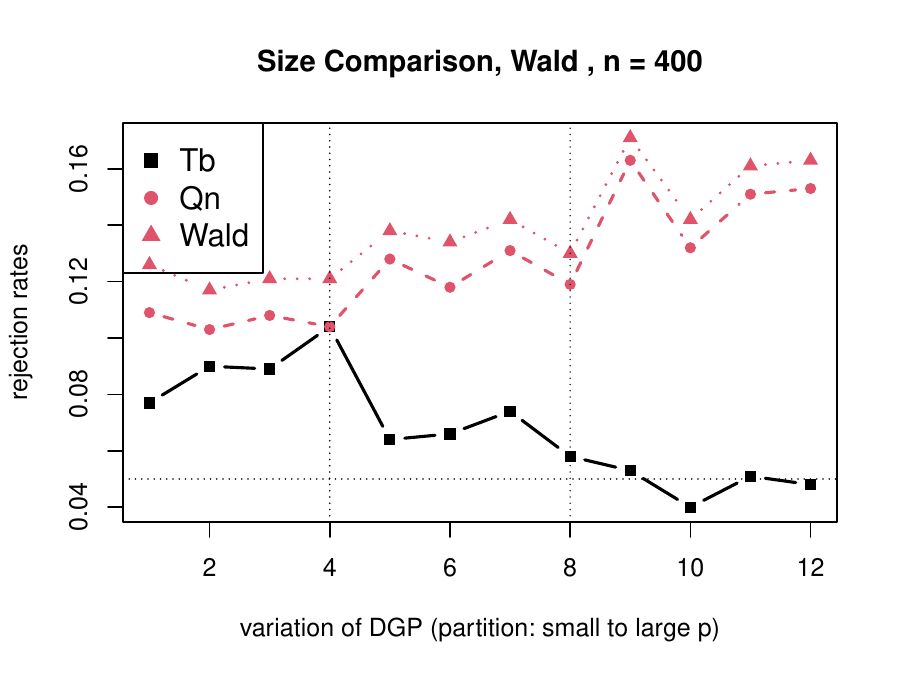}
		\includegraphics[width=0.3\linewidth]{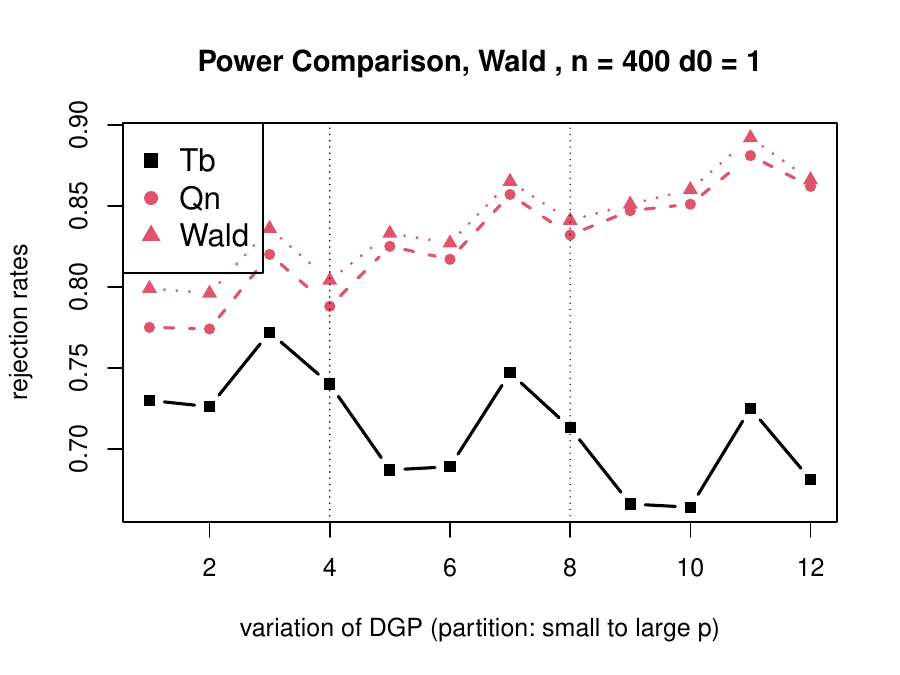}
		\caption{\scriptsize{Bias, Size, and Power of Exclusion Tests in E1: $\mathcal{Q}_n^e$ (dot) and $\mathcal{T}_n^{e,b}$ (square, black). 
				Left to right: vertical partitions correspond to $p=8,12,16$. Within each vertical partition results are ordered lexicographically as $(\text{error } \in\{1,2\},\alpha \in \{0.4,0.55\})$.\label{fig:TSWaldbiasmult}}}
	\end{figure}
\end{center}

\newpage

\appendix
\restoregeometry
\section{Proofs of theorems}\label{sec:app_thms}

We begin with some notation. Let 
\begin{equation*}
A(\gamma )=\left( X^{\ast }(\gamma )^{\prime }M_{X}X^{\ast }(\gamma )\right)
^{-1}X^{\ast }(\gamma )^{\prime }M_{X}
\end{equation*}%
with $X^{\ast }(\gamma )$ having $t$-th row $x_{t}^{\ast }(\gamma )^{\prime
}=x_{t}^{\prime }1\left\{ t/n>\gamma \right\} $, $M_{X}$ the residual
maker for the matrix $X$ with $t$-th row $x_{t}^{\prime }$, and 
\begin{equation*}
B(\gamma )=RM(\gamma )^{-1}\Omega (\gamma )M(\gamma )^{-1}R^{\prime }.
\end{equation*}%
Also, let $\bar{\Omega}(\gamma )=n^{-1}\sum_{t=1}^{n}x_{t}(\gamma
)x_{t}^{\prime }(\gamma )\sigma _{t}^{2}$ and $x_{t}(\gamma )=\left(
x_{t}^{\prime },x_{t}^{\prime \ast }(\gamma )\right) ^{\prime }$. It is also
convenient to recall that $\hat{M}=n^{-1}X^{\prime }X$ and define $\hat{S}(\gamma
)=n^{-1}X^{\prime \ast }(\gamma )X(\gamma )$. Recall that cross-referenced items prefixed with `S' can be found in the online supplementary appendix.

\subsection{Proofs for Section \ref{sec:theory}}

\begin{proof}[Proof of Theorem \protect\ref{thm:null_Chow} and \protect\ref{thm:local_power}] 
For the convenience of exposition, these are combined with the proof of Theorem \ref{thm:fixedbVdist} in the next section. 
\end{proof}

% \begin{proof}[Proof of Theorem \protect\ref{thm:Berk}]
%         Assumption \ref{ass:errors} is trivially met. Next, verify Assumption  \ref{ass:aprx0}.
%          Due to Jensen's inequality and
%         condition (ii),%
%         \begin{eqnarray*}
%                 E\left\vert r_{t}\right\vert ^{2a} &=&\left( \sum_{j=p}^{\infty
%                 }b_{j}\right) ^{2a}E\left\vert \sum_{j=p}^{\infty }\frac{b_{j}}{%
%                         \sum_{j=p}^{\infty }b_{j}}y_{t-j}\right\vert ^{2a} \\
%                 &\leq &\left( \sum_{j=p}^{\infty }b_{j}\right) ^{2a}\sum_{j=p}^{\infty }%
%                 \frac{b_{j}}{\sum_{j=p}^{\infty }b_{j}}E\left\vert y_{t-j}\right\vert ^{2a} = o\left( n^{-a}\right).
%         \end{eqnarray*}%
%         To verify Assumption \ref{ass:M_diff}, note that Lemma 3 in Berk (1974)
%         showed the whole sum convergence and it is readily extended to the partial
%         sums by an extension of Kolmogorov maximal inequality for $\alpha $-mixing,
%         see e.g. Theorem 3.1 of Rio (2017). Note that $\left\{ y_{t}\right\} $ is $%
%         \beta $-mixing under condition (ii) as shown by e.g. \cite{Doukhan1994}, p.
%         79, Theorem 2. Similarly, Assumption \ref{ass:Omega_diff} and \ref%
%         {ass:Omega_hat} follows due to the conditionally homoskedastistic error
%         variance estimation and the consistency proof for $\hat{\sigma}^{2}$ in
%         Berk's Theorem 1.
%         
%         Finally, Assumption \ref{ass:M_eigs} and \ref{ass:Omega_eigs} are met due to
%         Berk's (2.14).
%\end{proof}

\subsection{Proof of Theorem \protect\ref{thm:fixedbVdist}:}
For the result under the null,  Section \ref{subsec:Q} first establishes  the asymptotic normality for $ {\mathcal{Q}}_n(\gamma ), $ and then Section \ref{subsec:V} proves $\hat{\mathcal{V}}\overset{d}{\to}\int_{0}^{1}\int_{0}^{1}K_{h}^{*}\left(r,s\right)dW\left(r\right)dW\left(s\right)$, where $ W(r) $ denotes the same limit Gaussian process as in Theorem \ref{thm:T_weak_conv}.  Then, the claim follows by Theorem \ref{thm:T_weak_conv} and the continuous mapping theorem.
After  completing the proof under the null, we prove convergence under the local alternative in Section \ref{subsec:H1}.
        
\subsubsection{Asymptotic normality of $ {\mathcal{Q}}_n(\gamma )$ under $\mathcal{H}_0$. }\label{subsec:Q}
\begin{proof}
 This step is quite involved and we delegate proofs of many intermediate steps to Section \ref{sec:weak convergence}.  Summarizing these steps, 
        Theorem \ref{theorem:struc_break_approx} therein develops the initial approximation $ {\mathcal{Q}}_n(\gamma )=\left({{\mathcal{R}}
                _n(\gamma )-p}\right)/{\sqrt{2p}}+o_p(1) $, where $\mathcal{R}
            _n(\gamma )$ is defined in (\ref{fancyRdef}). Then,   
         (\ref{quad_form_full}) and Lemma \ref{lemma:diag_terms_neg} yield the second approximation 
         \[
        \frac{{\mathcal{R}}
                _n(\gamma )-p}{\sqrt{2p}}={{\mathcal{S}}
                _n(\gamma )}+o_p(1),\] 
where
        \begin{equation}
        {\mathcal{S}}_{n}(\gamma )=\frac{n^{-1}\sum_{s\neq t}g_{t}(\gamma )^{\prime
                }\Omega ^{-1}g_{s}(\gamma )\varepsilon _{t}\varepsilon _{s}}{\gamma \left(
                1-\gamma \right) \sqrt{2p}},  %\label{dist_target}
        \end{equation}
        and $g_{t}(\gamma )=x_t 1\left\{t/n\leq\gamma\right\}-\gamma x_t$. 
         The claim now follows by a CLT for $ \mathcal{S}_{n}(\gamma )$ established in Theorem \ref{thm:T_weak_conv}. 
\end{proof}
        
\subsubsection{Weak convergence of $ \hat{\mathcal{V}} $}       \label{subsec:V}        

\begin{proof}
First, we establish tightness of the stochastic process 
\[
	\mathcal{A}_{n}(\gamma ) =\frac{1}{n\sqrt{p}}{\sum_{s=2}^{[n\gamma ]}}{%
		\sum_{t=1}^{s-1}}\xi _{t}^{\prime }\xi _{s},
\]
with $\xi _{t}=\left\{ \xi _{ti}\right\} _{i=1}^{p}=\Omega
^{-1/2}x_{t}\varepsilon _{t}$ being an mds.

Note that $\mathcal{A}_{n}\left(\gamma\right)$
is a partial sum process of a heterogeneous martingale difference
array $w_{ns}=\xi_{s}^{\prime}{\sum_{t=1}^{s-1}}\xi_{t}/\sqrt{np}$,
and thus it is sufficient to show 
\begin{eqnarray}
E\left\vert \mathcal{A}_{n}\left(\gamma_{1}\right)-\mathcal{A}_{n}\left(\gamma_{2}\right)\right\vert ^{4} & = & E\left\vert \frac{1}{\sqrt{n}}{\sum_{s=\left[n\gamma_{1}\right]+1}^{[n\gamma_{2}]}}w_{ns}\right\vert ^{4}\nonumber \\
& \leq & E\left(\sum_{s}\mathrm{E}\left(w_{ns}^{2}|\mathcal{G}_{s-1}\right)/n\right)^{2}+n^{-1}\max_{s}E\left\vert w_{ns}\right\vert ^{4}O\left(\left\vert \gamma_{2}-\gamma_{1}\right\vert \right)\label{eq:max_w_s}\\
& = & O\left(\left\vert \gamma_{2}-\gamma_{1}\right\vert \right),\nonumber 
\end{eqnarray}
where we apply the Rosenthal inequality, e.g. \cite{Hall1980}, for the inequality and a calculation similar to (\ref{clt:var_cond}) and \eqref{cltvarbd1}  for the last equality. Specifically, 
\begin{equation}\label{tight_argument}
n^{-1}\max_{s}E\left\vert w_{ns}\right\vert ^{4}\leq \max_s E\left( E((\xi_{s}'\xi_{s})^2 | \mathcal{G}_{s-1}) \left( \sum_{t_1 ,t_2 <s} \xi_{t_1}'\xi_{t_2}\right)^{2}\right)n^{-3}p^{-2}=o(1),
\end{equation} 
by Assumption \ref{ass:MCLT} and the same reasoning as for \eqref{target} and  \eqref{cltvarbd1}.        
        
        Having established tightness, by Lemma 1 (c) of \cite{Sun2014} weak convergence follows if 
\begin{equation}
\hat{\mathcal{V}}-\tilde{\mathcal{V}}=o_p(1) \label{eq:VhatVtil},
\end{equation} 
where $\tilde{\mathcal{V}}=\frac{2}{n}\text{\ensuremath{\sum_{t=2}^{n}\text{\ensuremath{\sum_{s=2}^{n}k\left(\frac{t-s}{n/h}\right)\bar{q}^{\star}_{s}}}\bar{q}^{\star}_{t}}},$ $ q^{\star}_t = \left( np\right)^{-1/2} x_{t}'{\Omega}^{-1} {\varepsilon}_{t} \text{\ensuremath{\sum_{s=1}^{t-1}x_{s}{\varepsilon}_{s}}} $,  $\bar{q}^{\star}_{t}= q^{\star}_t - n^{-1}\sum_{t=2}^{n} q^{\star}_t$. Strictly speaking, Sun's Lemma 1 (c) is stated for the case where the partial sums of $ q_{t}$ are approximated by the partial sums of $ e_t $, which is iid normal, but it also holds when it is approximated by the partial sums of $ a_{nt }e_t $ for any real bounded array $ a_{nt} $ by repeating the same argument in the proof. In our case, $ a_{nt} = \sqrt{t/n} $. 

 Let $\varsigma=n/h$, $\hat\zeta_t=\hat{h}_t'\sum_{s<t}\hat{h}_s/\sqrt{p}=\sqrt{n}q_t$, $\hat{h}_t=\hat\Omega^{-1/2}x_t\hat{e}_t$, $\bar{\hat\zeta}/\sqrt{n}=n^{-1}\sum_{t=2}^{n} q_t=n^{-1}\sum_{t=2}^{n}\hat\zeta_t/\sqrt{n}$,  with analogous definitions using $\Omega$ and $\varepsilon_t$ for $\zeta_t$, $h_t$ and $\bar{\zeta}$. Then \begin{eqnarray}
\hat{{\mathcal{V}}}-\tilde{{\mathcal{V}}}&=&n^{-2}\sum_{j=-(n-1)}^{n-1}k\left(j/\varsigma \right)n^{-1}\sum_{t=1+\left(j\vee0\right)}^{n-\left\vert j\wedge 0\right\vert}\left\{\left(\hat\zeta_t\hat\zeta_{t+|j|}-\zeta_t\zeta_{t+|j|}\right)+2\bar{\hat\zeta}\left(\hat\zeta_t-\zeta_t\right)\right.\nonumber\\
&+&\left.\left(\bar{\hat\zeta}-\bar\zeta\right)\hat\zeta_t+\left(\bar{\hat\zeta}^2-\bar\zeta^2\right)\right\}.    \label{Vhat_Vtilde1}
\end{eqnarray}
We obtain a bound for 
\begin{equation}\label{zetapdiff}
\hat\zeta_t\hat\zeta_{t+|j|}-\zeta_t\zeta_{t+|j|}=\left(\hat\zeta_t-\zeta_t\right)\hat\zeta_{t+|j|}+\left(\hat\zeta_{t+|j|}-\zeta_{t+|j|}\right)\hat\zeta_{t},
\end{equation}
while omitting similar details for the other three terms.
To find a bound for (\ref{zetapdiff}), first note that $\left\Vert\hat{h}_t\right\Vert=O_p\left( \left\Vert x_t\right\Vert\right)=O_p\left( \sqrt{p}\right)$, 
by Assumption \ref{ass:M_diff}$(ii)$ and finite fourth moments of $x_t$ components (Assumption \ref{ass:M_diff}$(i)$), and because
\begin{equation}
\hat{e}_{t}=y_{t}-x_{t}^{\prime }\hat{\delta _{1}}(\gamma
)=x_{t}^{\prime }\left( \hat{\delta}_{1}(\gamma )-\delta _{1}\right)
+x_{t}^{\prime }1\left( t/n>\gamma \right) \delta _{2\ell
}+r_{t}+\varepsilon _{t}=O_{p}(1).  \label{epsilonhatorder}
\end{equation} Hence
\begin{equation}\label{zetabound}
\hat\zeta_t=\hat{h}_t'\sum_{s<t}\hat{h}_s/\sqrt{p}=O_p\left(n\sqrt{p}\right).
\end{equation}
By the same argument, $\left\Vert h_t\right\Vert=O_p(\sqrt{p})$ and $\zeta_t=O_p\left(n\sqrt{p}\right)$  as well. 

Now recall that $\hat{e}_{t}-\varepsilon _{t}=x_{t}^{\prime }\left( 
\hat{\delta}_{1}(\gamma )-\delta _{1}\right) +x_{t}^{\prime }1\left(
t/n>\gamma \right) \delta _{2\ell }+r_{t}$ and $\left\Vert \hat{\delta}%
_{1}(\gamma )-\delta _{1}\right\Vert =O_{p}\left( \left\Vert \hat{\delta}%
(\gamma )-\delta \right\Vert \right) =O_{p}\left(\lambda_n^{-1} \sqrt{p}/\sqrt{n}\right) $
implying that 
\begin{equation}
\hat{e}_{t}-\varepsilon _{t}=O_{p}\left( \max \left\{ \lambda_n^{-1}p/\sqrt{n}%
,p^{3/4}/\sqrt{n}\right\} \right).  \label{epsdifforder}
\end{equation}
 Thus we obtain
\begin{equation}\label{hdiffbound}
\left\Vert\hat{h}_t-h_t\right\Vert=\left\Vert\Omega^{-1}\left(\Omega-\hat\Omega\right)\hat\Omega^{-1}x_t\hat{e}_t+\Omega^{-1}x_t\left(\hat{e}_t-\varepsilon_t\right)\right\Vert=O_p\left(\lambda_n^{-2}\sqrt{p}\max\left\{v_p,p/\sqrt{n}\right\}\right),
\end{equation}
using Assumption \ref{ass:M_diff}$(iii)$. Using (\ref{hdiffbound}), we get
 \begin{equation}\label{zetadiffbound}
\hat\zeta_t-\zeta_t =\left(\hat{h}_t-h_t\right)'\sum_{s<t}\hat{h}_s/\sqrt{p}+\hat{h}_t'\sum_{s<t}\left(\hat{h}_s-h_s\right)/\sqrt{p}=O_p\left(\lambda_n^{-2}n\sqrt{p}\max\left\{v_p,p/\sqrt{n}\right\}\right).
\end{equation}

Using (\ref{zetabound}) and (\ref{zetadiffbound}) in (\ref{zetapdiff}), we obtain $\hat\zeta_t\hat\zeta_{t+|j|}-\zeta_t\zeta_{t+|j|}=O_p\left(\lambda_n^{-2}n^2p\max\left\{v_p,p/\sqrt{n}\right\}\right)$. This, along with similarly obtained bounds for the remaining terms in (\ref{Vhat_Vtilde1}) and Lemma 1 of \cite{Jansson2002}, yield
\[
\hat{{\mathcal{V}}}-\tilde{{\mathcal{V}}} =O_{p}\left(\varsigma \left( \int_{%
{\mathbb{R}}}\left\vert k(x)\right\vert dx\right) \lambda_n^{-2}p\max\left\{v_p,p/\sqrt{n}\right\}
\right) =O_{p}\left(\lambda_n^{-2} \max\left\{pv_p,p^2/\sqrt{n}\right\} \right) ,
\]
which is negligible by (\ref{rate:Vhat}).
\end{proof}

\subsubsection{Characterize the relationship between the numerator and denominator}
For this, we derive the covariance kernel of $\left( \mathcal{A}_{n}\left( \gamma
\right) ,\bar{\mathcal{A}}_{n}\left( \gamma \right) \right) ^{\prime }$, where \[
\bar{\mathcal{A}}_{n}(\gamma )=\frac{1}{n\sqrt{p}}{\sum_{s=[n\gamma
		]+1}^{n}\sum_{t=[n\gamma ]+1}^{s-1}}\xi _{t}^{\prime }\xi _{s}.
\] 
	Note that $E\left( \mathcal{A}_{n}\left( \gamma _{2}\right) -\mathcal{A}_{n}\left( \gamma _{1}\right)
\right) \mathcal{A}_{n}\left( \gamma _{1}\right) =0$ for any $\gamma _{1}<\gamma _{2}$. From the proof of Theorem \ref{thm:T_weak_conv} in the supplementary material, we have
\begin{align*}
E\left\vert \mathcal{A}_{n}\left( \gamma \right) \right\vert ^{2}&=\frac{\gamma ^{2}{\mathcal{V}}}{2}+o\left( 1\right) ,\\
E\left\vert \bar{\mathcal{A}}_{n}\left( \gamma \right) \right\vert ^{2}
& =\frac{\left( 1-\gamma \right) ^{2}{\mathcal{V}}}{2}+o\left( 1\right) ,
\end{align*}
where ${\mathcal{V}}$ is given in (\ref{V_partial_def}). Thus 
\begin{equation*}
	E\left( \mathcal{A}_{n}\left( \gamma _{1}\right) \mathcal{A}_{n}\left( \gamma _{2}\right)
	\right) \rightarrow \frac{\left( \gamma _{1}\wedge \gamma _{2}\right) ^{2}}{2%
	}{\mathcal{V}},
\end{equation*}%
and, similarly noting that $E\left( \bar{\mathcal{A}}_{n}\left( \gamma _{2}\right) -\bar{\mathcal{A}}%
_{n}\left( \gamma _{1}\right) \right) \bar{\mathcal{A}}_{n}\left( \gamma _{1}\right)
=0 $ for any $\gamma _{1}>\gamma _{2}$, 
we have
\begin{equation*}
	E\left( \bar{\mathcal{A}}_{n}\left( \gamma _{1}\right) \bar{\mathcal{A}}_{n}\left( \gamma
	_{2}\right) \right) \rightarrow \frac{\left( 1-\left( \gamma _{1}\vee \gamma
		_{2}\right) \right) ^{2}}{2}{\mathcal{V}}.
\end{equation*}%
Finally
\begin{align*}
	E\left( \mathcal{A}_{n}\left( \gamma _{1}\right) \bar{\mathcal{A}}_{n}\left( \gamma _{2}\right)
	\right) & =\frac{1\left\{ \gamma _{1}>\gamma _{2}\right\} }{n^{2}p}\sum_{s=%
		\left[ n\gamma _{2}\right] +1}^{\left[ n\gamma _{1}\right] }{\mathrm{tr}}E%
	\left( \xi _{s}\xi _{s}^{\prime }\sum_{t=1}^{s-1}\sum_{u=\left[ n\gamma _{2}%
		\right] +1}^{s-1}\xi _{t}\xi _{u}^{\prime }\right) \\
	& =\frac{1\left\{ \gamma _{1}>\gamma _{2}\right\} }{n^{2}}\sum_{s=\left[
		n\gamma _{2}\right] +1}^{\left[ n\gamma _{1}\right] }\left( s-1-\left[
	n\gamma _{2}\right] \right) {\mathcal{V}}+o\left( 1\right) \\
	& =\frac{1\left\{ \gamma _{1}>\gamma _{2}\right\} }{2}\left( \gamma
	_{1}-\gamma _{2}\right) ^{2}{\mathcal{V}}+o\left( 1\right) .
\end{align*}

\subsubsection{Under the alternative}\label{subsec:H1}
\begin{proof}
We present a general proof where the true break point is $\gamma_0$, and setting $\gamma=\gamma_0$ gives our claim in the paper. Under $\mathcal{H}_{\ell }$, we have $\hat{\delta}_{2}(\gamma )=A(\gamma )X^{\ast
}\left( \gamma _{0}\right) \delta _{2\ell }+A(\gamma )\varepsilon +A(\gamma
)r$, so that, writing $D\left( \gamma ,\gamma _{0}\right) =A(\gamma )X^{\ast
}\left( \gamma _{0}\right) $ and $\hat{B}(\gamma )=R\hat{M}(\gamma )^{-1}\hat{\Omega}(\gamma )\hat{M}(\gamma )^{-1}R^{\prime }$, similar algebra to that used in the online appendix and Lemmas \ref{lemma:wald_approx}-\ref{lemma:fancyS_lemma2}
yields 
\begin{eqnarray}
{\mathcal{Q}}_{n}(\gamma ) &=&{{\mathcal{S}}_{n}(\gamma )}+\frac{2n\delta
_{2\ell }^{\prime }D\left( \gamma ,\gamma _{0}\right) ^{\prime }\hat{B}%
(\gamma )^{-1}A(\gamma )\varepsilon }{\sqrt{2p}}+\frac{2n\delta _{2\ell
}^{\prime }D\left( \gamma ,\gamma _{0}\right) ^{\prime }\hat{B}(\gamma
)^{-1}A(\gamma )r}{\sqrt{2p}}  \notag \\
&+&\frac{n\delta _{2\ell }^{\prime }D\left( \gamma ,\gamma _{0}\right)
^{\prime }\left( \hat{B}(\gamma )^{-1}-{B}(\gamma )^{-1}\right) D\left(
\gamma ,\gamma _{0}\right) \delta _{2\ell }}{\sqrt{2p}}  \label{local_alt_1} \\
&+&\frac{n\delta _{2\ell }^{\prime }D\left( \gamma ,\gamma _{0}\right)
^{\prime }\hat{B}(\gamma)^{-1}D\left( \gamma ,\gamma _{0}\right) \delta _{2\ell }}{\sqrt{2p}}%
+o_{p}(1). \notag
\end{eqnarray}%
For the second term on the RHS of (\ref{local_alt_1}), note that this equals 
\begin{eqnarray}
&&\frac{2n\delta _{2\ell }^{\prime }D\left( \gamma ,\gamma _{0}\right)
^{\prime }{B}(\gamma )^{-1}A(\gamma )\varepsilon }{\sqrt{2p}}+\frac{2n\delta
_{2\ell }^{\prime }D\left( \gamma ,\gamma _{0}\right) ^{\prime }\left( \hat{B%
}(\gamma )^{-1}-{B}(\gamma )^{-1}\right) A(\gamma )\varepsilon }{\sqrt{2p}} 
\notag  \label{local_alt_2} \\
&=&\frac{2n\delta _{2\ell }^{\prime }D\left( \gamma ,\gamma _{0}\right) {B}%
(\gamma )^{-1}A(\gamma )\varepsilon }{\sqrt{2p}}+O_{p}\left(\lambda_n^{-2} n\left\Vert
\delta _{2\ell }\right\Vert \left\Vert n^{-1}X^{\prime }\varepsilon
\right\Vert \left\Vert \hat{B}(\gamma )-{B}(\gamma )\right\Vert /\sqrt{p}%
\right)  \notag \\
&=&\frac{2n\delta _{2\ell }^{\prime }D\left( \gamma ,\gamma _{0}\right)
^{\prime }{B}(\gamma )^{-1}A(\gamma )\varepsilon }{\sqrt{2p}}+O_{p}\left( 
\lambda_n^{-4}p^{1/4} \max \left\{ \lambda_n^{-1}\varkappa _{p},v_p \right\} \right) ,  \notag
\end{eqnarray}%
proceeding like (\ref{Omegadifflater}), the second stochastic order above
being negligible by (\ref{rate:Q_weak_conv}). By Assumption \ref{ass:errors}%
, the first term  has mean zero and variance equal to
a constant times 
\begin{equation*}
\frac{\tau ^{\prime }D\left( \gamma ,\gamma _{0}\right) ^{\prime } B(\gamma)^{-1}A(\gamma
)A(\gamma)'B(\gamma )^{-1}D\left( \gamma ,\gamma _{0}\right) \tau }{\sqrt{p}}=O_{p}(1/%
\sqrt{p}),
\end{equation*}%
uniformly in $\gamma $ by Lemmas \ref{lemma:B_eigs} and \ref{lemma:Bhat_norm}
and the calculations therein.

By Assumption \ref{ass:aprx0}, the third term on the RHS of (\ref{local_alt_1}) is 
\begin{equation*}
O_{p}\left( n\left\Vert \delta _{2\ell }\right\Vert \left\Vert
n^{-1}X^{\prime }r\right\Vert /\sqrt{p}\right) =O_{p}\left( p^{-1/4}%
\right) .
\end{equation*}%

The fourth term on the RHS of (\ref{local_alt_1}) is readily seen to be $%
O_{p}\left( \left\Vert \hat{B}(\gamma )^{-1}-{B}(\gamma )^{-1}\right\Vert
\right) =O_{p}\left( \lambda_n^{-4}\max \left\{\lambda_n^{-1} \varkappa _{p},v_p \right\} \right) 
$, which is negligible by (\ref{rate:Q_weak_conv}). Thus, using similar steps to replace $\hat{B}(\gamma )^{-1}$ by ${B}(\gamma )^{-1}$ in the fifth term on the RHS and by (\ref{B_gamma1}%
), (\ref{local_alt_1}) becomes 
\begin{eqnarray}
{\mathcal{Q}}_{n}(\gamma ) &=&{\mathcal{S}}_{n}(\gamma )+\frac{n\delta
_{2\ell }^{\prime }D\left( \gamma ,\gamma _{0}\right) ^{\prime }{B}(\gamma
)^{-1}D\left( \gamma ,\gamma _{0}\right) \delta _{2\ell }}{\sqrt{2p}}%
+o_{p}(1)  \notag \\
&=&{{\mathcal{S}}_{n}(\gamma )}+\gamma (1-\gamma ){\tau ^{\prime }D\left(
\gamma ,\gamma _{0}\right) ^{\prime }M \Omega^{-1} M D\left( \gamma ,\gamma _{0}\right)
\tau }+o_{p}(1).  \notag
\end{eqnarray}%
Now, by the definition of its components and steps similar to those
elsewhere in the paper, it is readily seen that 
$\left\Vert D\left( \gamma ,\gamma _{0}\right) -\left\{{\left( \gamma +\gamma
_{0}(1-\gamma )-\left( \gamma \vee \gamma _{0}\right) \right) }/{\gamma
(1-\gamma )}\right\}I_{p}\right\Vert =o_{p}(1),
$ uniformly on $\Gamma $ and that $\gamma +\gamma _{0}(1-\gamma )-\left(
\gamma \vee \gamma _{0}\right) =-\gamma \gamma _{0}+\left( \gamma \wedge
\gamma _{0}\right) $ as $\gamma +\gamma _{0}-\left( \gamma \vee \gamma
_{0}\right) =\left( \gamma \wedge \gamma _{0}\right) .$ Thus, 
\begin{equation}
{\mathcal{Q}}_{n}(\gamma )\overset{d}{\rightarrow} Q(\gamma )+\frac{\left( \gamma \gamma
_{0}-\left( \gamma \wedge \gamma _{0}\right) \right) ^{2}}{\gamma (1-\gamma )%
}\lim_{n\rightarrow \infty }{\tau ^{\prime}M \Omega^{-1} M\tau },  \label{local_alt_3}
\end{equation}%
by Theorem \ref{thm:T_weak_conv}, which
gives the distribution of ${\mathcal{Q}}_{n}(\gamma )$ under $\mathcal{H}_{\ell }$.
\end{proof}
\begin{proof}[Proof of Theorem \ref{thm:bootstrap}]
In Section \ref{sec:bootstrap_proof} of the online supplement.
\end{proof}	

\subsection{Proofs for Section  \ref{sec: extension}}

\begin{proof}[Proof of Theorem \ref{thm:exclusion_thm}]
The proof proceeds exactly as that of Theorem \ref{thm:fixedbVdist}, but without $\gamma$. We give a brief summary and omit the details. Because $R^e\hat\beta-r=Rn^{-1}\hat M^{-1}\sum_{t=1}^nx_t\varepsilon_t$ under $\mathcal{H}_0^e$, we can obtain the approximation 
\[
\mathcal{Q}_n^e=\frac{n^{-1}\left(\sum_{t=1}^nx_t\varepsilon_t\right)'L\left(\sum_{t=1}^nx_t\varepsilon_t\right)-p}{\sqrt{2p}}+o_p(1).
\]
Then, the proof of asymptotic normality follows with $w_{ns}=\xi_s'\sum_{t=1}^{s-1}\xi_t/\sqrt{np}$ as in Theorem \ref{thm:T_weak_conv}, but now defining $\xi_t=\left(R^e M^{-1}\Omega M^{-1}R^{e\prime}\right)^{-1/2}R^eM^{-1}x_t\varepsilon_t$. From this  it is readily seen that $E\left(\sum_{s=1}^nw_{ns}\right)^2=\mathcal{V}^e+o(1)$. 
\end{proof}

\section{Verification of high-level conditions for Examples E1, E2, and E3}\label{sec:primitive}

This section verifies some of the high level conditions for our examples analytically, while others are verified numerically. 

First, we show that Assumptions \ref{ass:errors} and \ref{ass:aprx0} are satisfied for all the examples. For this, note that the innovations $\eta_{t}$ in the DGPs are mixed-normal with finite moments of all order. Recall that an AR(1) or MA(1) process with bounded ARCH innovations whose AR or MA coefficients satisfy $\left\vert \alpha\right\vert <1$ and $\left\vert \alpha_{x}\right\vert <1$ (as in our DGP for $x_{it}$) is strictly stationary and $\beta$-mixing, see e.g. Theorem 15.0.1 of \cite{Meyn1993}. 
Thus, $x_{it}$ is strictly stationary and $\beta$-mixing with a finite moment generating function. This implies that Assumptions \ref{ass:errors} and \ref{ass:aprx0} are met for all the examples.

\begin{figure}
	\centering
	\begin{subfigure}{0.3\textwidth} \includegraphics[width=\linewidth]{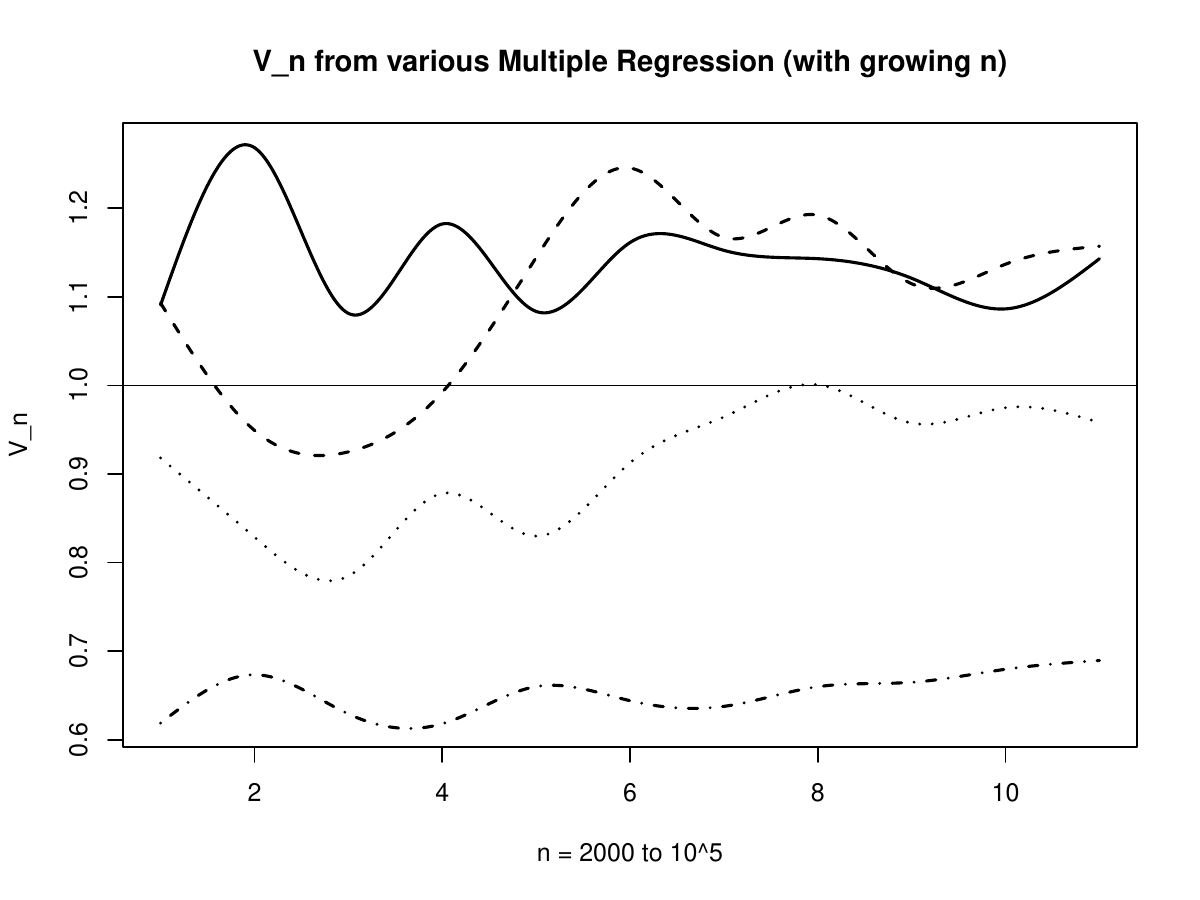}	
		\label{fig:V_multi}
	\end{subfigure}
	\begin{subfigure}{0.3\textwidth} \includegraphics[width=\linewidth]{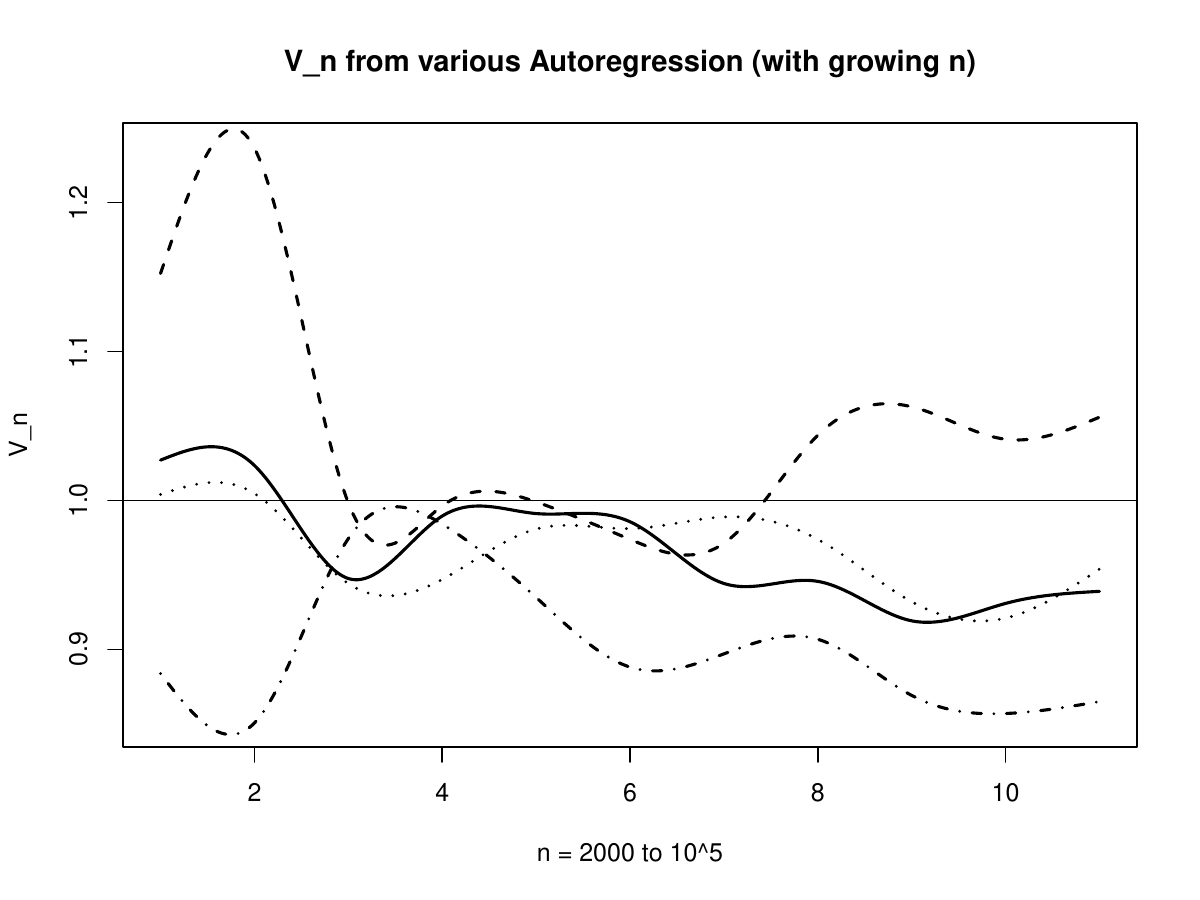}	
		\label{fig:V_ARMA} 
	\end{subfigure}
	\begin{subfigure}{0.3\textwidth} \includegraphics[width=\linewidth]{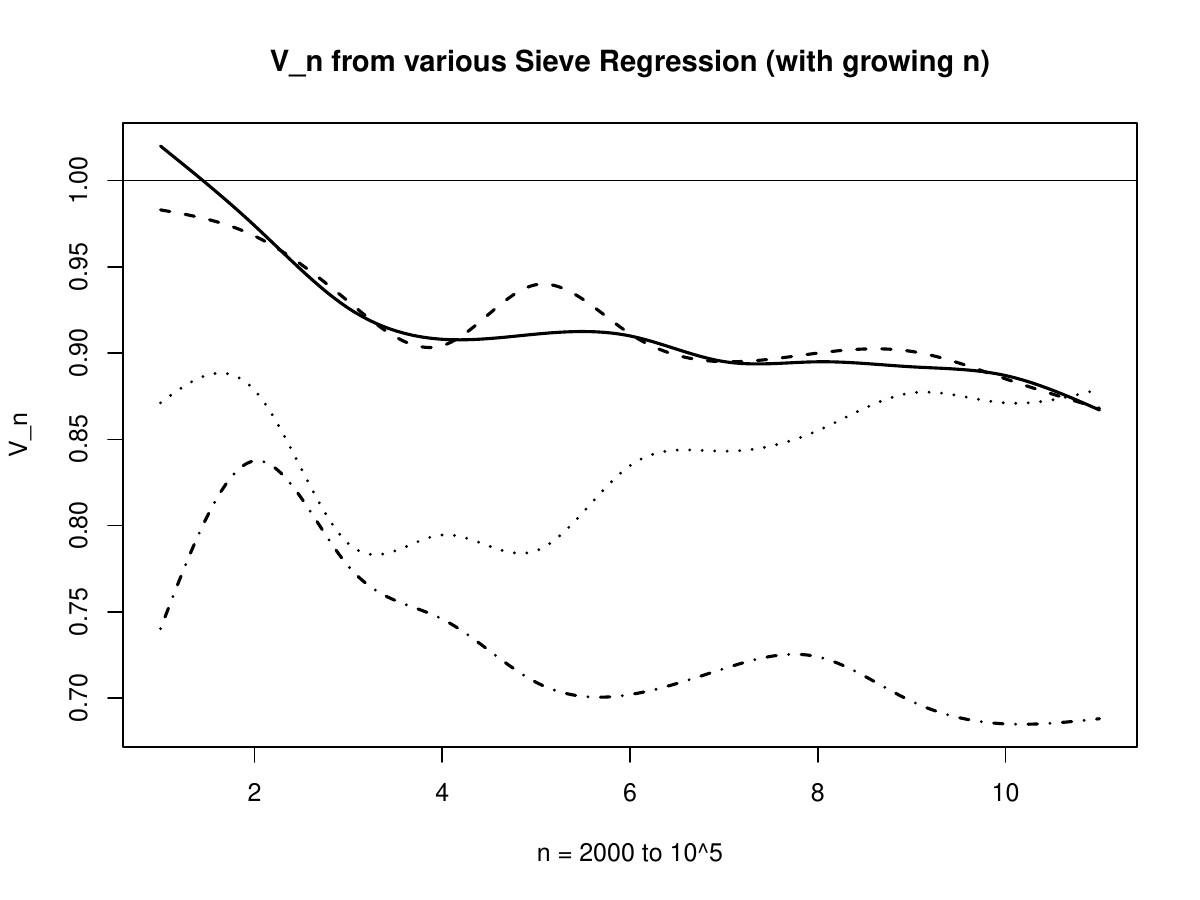}	
		\label{fig:V_sieve} 
	\end{subfigure}
	\caption{Simulated $\mathcal{V}_n$ for various DGPs with increasing $ n=m=2000,\ldots,10^5$ on horizontal axis}
	\label{V_n}
\end{figure}

\subsection*{E1: Multivariate Regression}
	The regressor $x_{t}$ is a collection of independent
	centered stationary AR(1) processes $x_{it}$ and their lags up to
	order 3. Thus, Assumption \ref{ass:M_diff} (i) is trivially satisfied. 
	For (ii), we note that $M$ and $\Omega$ are block diagonal, 
	implying that the minimum eigenvalues are bounded away from zero. 
	As for Assumption \ref{ass:M_diff} (iii), the usual maximal inequality for mixing, as in e.g. Lemma 7 
	in \cite{Linton2022}, 
	holds to yield the bounds $O_{p}\left(pn^{-1/2}\log n\right)$ on
	$\left\Vert n^{-1}\sum_{t=1}^{\left[nr\right]}x_{t}x_{t}'-r M\right\Vert $
	and $\left\Vert n^{-1}\sum_{t=1}^{\left[nr\right]}x_{t}x_{t}'\sigma_{t}^{2}-r\Omega\right\Vert $. 
	
	Turning to Assumption \ref{ass:MCLT}, first recall that the maximum of $ n $ random variables is $ O_p (\log n) $ if  the moment generating function of each random variable exists. Thus,
	\begin{eqnarray}
		E \max_{1\leq t \leq n} \bar{\lambda}(\xi_t \xi_t') 
	\leq  E \max_{1\leq t \leq n}(\xi_t' \xi_t)  
	\leq \sum_{i=1}^{p} E \max_{1\leq t \leq n} \xi_{it}^2
	= pO(\log n),  \label{max_lambda}
	\end{eqnarray}
	as desired for $ p=o\left(n^{1/3}/\log n\right) $ and similarly we can duduce the bound for $ E \left\vert E\left(\max_{1\leq t \leq n} (\xi_t' \xi_t)^2 | \mathcal{G}_{t-1} \right)\right\vert $, which equals $ E \left(\max_{1\leq t \leq n} (\xi_t' \xi_t)^2 \right) = p^2 O(\log n) $, due to the non-negativity of the square and the law of iterated expectations.
	
	Next, the two conditions on convergence in Assumption \ref{ass:MCLT} are difficult to derive analytically, so we present some numerical evidence in Figure \ref{V_n} and Table \ref{Table:cov} in the online appendix.
	The coefficient $ \alpha $  is chosen from $ 0, 0.3 $ or $ 0.8 $ for the experiments. 
	The AR coefficient in E1 is set as $ -0.5 $ or $ 0.5 $, the MA coefficient in E2 as $-0.5$ or $ -0.1 $, and the AR coefficient in E3 as 0.3, 0.5, or 0.7. 
	The expectations in \eqref{V_partial_def} are approximated by the average of 10000 iterations. 
	For each coefficient combination, we experiment with growing sample sizes from $ n=m=2000 $ to $ 10^5 $, $l=0$, and  $ p=n^{1/3} $ for E1 and E2 and $p=2n^{1/4}$ for E3. 
	The results are plotted as lines in the figure. 
	We note that the values do not diverge even for the most persistent case of $ \alpha = 0.8 $ and tend to stabilize for larger $ n $.
	On the other hand, the degeneracy of
	$(n^{4}p^{2})^{-1}\sum_{t=1}^{n}\sum_{s=1}^{t-1}cov\left(tr\left(\Upsilon_{t}\Xi_{t}\right),tr\left(\Upsilon_{s}\Xi_{s}\right)\right)$ in Assumption \ref{ass:MCLT} is given in Table \ref{Table:cov} for various parameter vales and error types. The table shows clear evidence of decaying covariances.
	
	Finally, Assumption \ref{ass:autocovandcumulant} is rather trivially met given that $\varepsilon_{t}$ is mds and all the moments exist for $\varepsilon_{t}\ $ and $x_{it}$ for all $i=1,...,p$.

\subsection*{E2: Infinite-order AR}
A set of primitive conditions is given in the next Proposition. \cite{gonccalves2007bootstrapping} has emphasized the empirical relevance of allowing for conditional heteroskedasticity in autoregressive models, which is allowed below by relaxing \cite{berk1974consistent}'s condition of an iid error to an mds process. 
Let $L$ be the lag operator and $b\left( L\right) =\sum_{j=1}^{\infty}b_{j}L^{j}$ denote the lag polynomial. 
The MA(1) with the coefficient less than the unity in modulus
satisfies the conditions in the next Proposition. 

\begin{proposition}
\label{thm:Berk}Suppose that    
(1) $b\left( z\right) \neq 0$ for any $ \left\vert z\right\vert \leq 1$ and $ b^{-1}(e^{i\lambda})  $ exists and is nonzero for $ -\pi <\lambda \leq \pi $. 
(2) $\left\{ \varepsilon _{t}\right\}  $ is a stationary mds that possesses a density of bounded variation, $E\varepsilon_{t}=0\ $and $E\left\vert \varepsilon _{t}\right\vert ^{\kappa }<C$ for some $ \kappa \geq 4$ and $ E(\varepsilon_t^2 | \mathcal{F}_{t-1}) $ is bounded and bounded away from zero. 
(3) $p^{3}=o\left( n\right) $. 
(4) $\sum_{j=p}^{\infty}\left\vert b_{j}\right\vert =o\left( n^{-1/2}\right) $.
(5) $ (\sigma_t, y_{t-1} )$ is $ \rho $-mixing with $ \sum_{j=1}^{\infty} \rho (2^j) <\infty $. 
Then, Assumptions \ref{ass:errors} - \ref{ass:M_diff}  %\ref{ass:autocovandcumulant} 
are satisfied with $\varkappa _{p}=v_{p}= o (p^{-1/2}) $.
\end{proposition}
\begin{proof}
\cite{berk1974consistent} established that the minimum eigenvalue of the limiting autocovariance matrix $M$ is bounded away from zero, see its equation (2.7), and the deviation bound for its sample autocovariance is $ o(p^{-1/2}) $ in its Lemma 3. 
As for $\Omega$, note that for some $ c>0 $, which is an a.s. lower bound of $ E\left(\varepsilon_{t}^{2}|\mathcal{F}_{t-1}\right) $,  and any $\left|a\right|=1$
\[
a'\Omega a=E\left(a'x_{t}\right)^{2}\varepsilon_{t}^{2}=E\left(a'x_{t}\right)^{2}E\left(\varepsilon_{t}^{2}|\mathcal{F}_{t-1}\right)\geq cE\left(a'x_{t}\right)^{2},
\]
to conclude that the minimum eigenvalue of $\Omega$ is also bounded
away from zero. 

Lemma 3.4  of \cite{peligrad1982invariance} yields $ E| \sum_t^n ( z_t^2 -Ez_t^2) | \leq 
n \sum_i \rho(2^i ) Ez_t^4 $ for a $ \rho $-mixing sequence $ z_t $. Since  $ z_t = \sigma_t y_{t-j} $ for $ j=1,...,p $ in the current case and $ (\sigma_t , y_{t-1} )$ is $ \rho $-mixing, the bound may be set as  $ n \sum_i \rho(0 \vee(2^i -p) ) Ez_t^4 \leq n (\sum_i \rho(2^i ) + \log p ) E\sigma_t^4 y_{t-j}^4  $ for any $ j\leq p $ Then, $ \| \bar{\Omega}(\gamma)- \Omega(\gamma) \| =O_p (n^{-1}p^2\log p ) $.    
The rest of the proof is given in Lemma \ref{lemma:Omega_hat_true}.
\end{proof}

The same comments as in E1 apply for Assumptions \ref{ass:MCLT} and \ref{ass:autocovandcumulant}. 

\subsection*{E3: Sieve regression}
Let $\mathbb{Z}\subseteq \mathbb{R}^{k}$ denote the support of $z_{t}$ in E3. 
The following proposition provides some more primitive conditions for E3 as given by \cite{Chen2015}
and the dgp in our Monte Carlo simulation satisfies them with a bounded support by construction and geometric mixing rate due to \cite{Meyn1993} as discussed above.

\begin{proposition}     \label{thm:CC} 
Suppose that the following hold: 
(1)\label{ass:beta_mixing_Tropp} The sequence $\left\{
z_{t}\right\} $ is strictly stationary and $\beta $-mixing with $\beta $%
-mixing coefficient $\beta (\cdot )$. Let $q=q\left( n\right) $ be a
sequence of integers satisfying $\beta (q)n/q\rightarrow 0$ as $n\rightarrow
\infty $ and $q\leq n/2$; (2) $ \mathbb{Z} $ is compact and rectangular, and \label{ass:xnormactual} $\sup_{z\in {\mathbb{Z}}}\left\Vert
x_{nt}(z)\right\Vert =O\left( \vartheta _{p}\right) $; (3)  \label{ass:CC-minlambda} The $x_t$ are tensor-products of power series, univariate polynomial spline, trigonometric polynomial wavelet or orthogonal
polynomial bases. Then, Assumptions \ref{ass:errors} -\ref{ass:M_diff} are met with $\varkappa _{p}=\vartheta _{p}\sqrt{q(\log p) / n}$ and $ v_{p}=\min \left\{ {p^{3}}/{n},{\vartheta_{p}^{2}p}/{n}\right\} $.
\end{proposition}
\begin{proof} We
prove that Assumption \ref{ass:M_diff} is met for the partial sum only, with the result for the full
sum following from Corollary 4.2 of \cite{Chen2015}. By Assumption \ref{ass:M_diff}, we can normalize the $x_{t}$ so that $E\left(
x_{t}x_{t}^{\prime }\right) =I_{p}$ without loss of generality. The result
then follows by Corollary \ref{cor:fuk_nag} by taking $\Xi
_{t,n}=n^{-1}\left( x_{t}x_{t}^{\prime }-I_{p}\right) $, which implies that the terms in Theorem \ref{thm:fuk_nagaev_ineq} have bounds: $%
R_{n}\leq n^{-1}\left( C\vartheta _{p}^{2}+1\right) $ and $s_{n}^{2}\leq
n^{-2}\left( C\vartheta _{p}^{2}+1\right) $. The second claim follows similarly.
The rest of the proof is given in Lemma \ref{lemma:Omega_hat_true}.
\end{proof}

The permissible mixing decay rate depends on the dimension $p$ of $x_{t}$: larger $p$ requires faster mixing decay.
Both exponential and geometric decays are allowed. See the discussions of Assumption 4 and Remark 2.3 in \cite{Chen2015} for more detailed discussion in relation to the sieve basis functions. The sequence $q$ depends on the mixing decay rate. For instance, if $\beta \left( q\right) $ decays at an exponential rate, $q$ can be set as $\log n$.
If all elements of $x_t\left( \cdot \right) $ are bounded, then 
$\vartheta _{p}=p^{1/2}$. % for condition \eqref{ass:xnormactual}. 
Under suitable conditions, it can be shown that $\vartheta _{p}=p$ for power series or orthogonal polynomials and $\vartheta
_{p}=p^{1/2} $ for univariate
polynomial splines, trigonometric polynomials or wavelets, see \cite{Newey1997,Chen2015}.
%The condition \eqref{ass:CC-minlambda} is met for most widely used series.

The same comments as in E1 apply for Assumptions \ref{ass:MCLT} and \ref{ass:autocovandcumulant}.

\newpage
\setcounter{section}{0}
\renewcommand\thesection{S.\Alph{section}}
\setcounter{equation}{0}
\renewcommand\theequation{S\Alph{section}.\arabic{equation}}
\setcounter{lemma}{0}
\renewcommand\thelemma{SL.\Alph{section}.\arabic{lemma}}
\setcounter{theorem}{0}
\renewcommand\thetheorem{ST.\Alph{section}.\arabic{theorem}}
\setcounter{corollary}{0}
\renewcommand\thecorollary{SC.\Alph{section}.\arabic{corollary}}
\setcounter{proposition}{0}
\renewcommand\theproposition{SP.\Alph{section}.\arabic{proposition}}
\setcounter{table}{0}
\renewcommand\thetable{S.Tab.\Alph{section}.\arabic{table}}

	\begin{center}
		\large{Online supplement to ``Robust Inference on Infinite and Growing Dimensional Time Series Regression''}
	\end{center}
	
	\begin{center}
		Abhimanyu Gupta and Myung Hwan Seo
	\end{center}
	
	\label{sec:app_lemmas}
	
	\section{An exponential inequality for partial sums of weakly dependent
		random matrices}
	
	\label{sec:partial_sum} We develop a stochastic order for a matrix partial
	sum. Closely related results can be found in Theorems 4.1 and 4.2 of \cite%
	{Chen2015}, who establish such bounds for full matrix sums as opposed to
	partial sums. Our first theorem is a Fuk-Nagaev type inequality, using a
	coupling approach similar to \cite{Dedecker2004}, \cite{Chen2015} and \cite%
	{Rio2017}.
	
	\begin{theorem}
		\label{thm:fuk_nagaev_ineq} Let $\left \{\xi_i\right \}_{i\in {\mathbb{Z}}}$
		be a $\beta $-mixing sequence with support ${\mathfrak{X}}$ and $r$-th
		mixing coefficient $\beta (r)$ and let $\Xi _{i,n}=\Xi _{n}\left (\xi
		_i\right )$, for each $i$, where $\Xi _n:{\mathfrak{X}}\rightarrow {\mathbb{R%
		}}^{d_1\times d_2}$ is a sequence of measurable $d_1\times d_2$
		matrix-valued functions. Assume $E\left (\Xi _{i,n}\right )=0$ and $%
		\left
		\Vert \Xi _{i,n}\right \Vert \leq R_n$, for each $i$, set 
		\begin{equation*}
			s_n^2=\max _{1\leq i,j\leq n}\max \left \{\left \Vert E\left (\Xi _{i,n}\Xi
			_{j,n}^{\prime }\right )\right \Vert ,\left \Vert E\left (\Xi _{i,n}^{\prime
			}\Xi _{j,n}\right \Vert \right )\right \},
		\end{equation*}
		and define $S_k=\sum _{l=1}^k\Xi _{l,n}$. Then, for any integer $q$ such
		that $1<q\leq n/2$ and $\varrho \geq qR_n$, 
		\begin{equation*}
			P\left (\sup _{1\leq k\leq n}\left \Vert S_k\right \Vert >4\varrho \right
			)\leq \left (\left [\frac{n}{q}\right ]+1\right )\beta (q)+2\left
			(d_1+d_2\right ) \func{exp}\left (\frac{-\varrho ^2/2}{nqs_n^2+qR_n\varrho /3%
			}\right ).
		\end{equation*}
	\end{theorem}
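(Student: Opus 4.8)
The plan is to control the maximal partial sum by combining a parity-based blocking that simultaneously suppresses within-block fluctuations and manufactures the lag needed for weak dependence, a Berbee-type coupling that renders the blocks independent, and Tropp's matrix Freedman inequality applied to the resulting independent matrix-valued martingale.

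First I would partition $\{1,\dots,n\}$ into consecutive blocks $B_1,B_2,\dots$ of length $q$, set $U_j=\sum_{l\in B_j}\Xi_{l,n}$, and split the sum by parity, so that $\sup_{k\le n}\|S_k\|\le\sup_k\|S_k^{\mathrm{o}}\|+\sup_k\|S_k^{\mathrm{e}}\|$, where $S_k^{\mathrm{o}}$ and $S_k^{\mathrm{e}}$ collect the odd- and even-indexed block contributions up to index $k$. Two deterministic estimates then carry the arithmetic. Each block sum satisfies $\|U_j\|\le qR_n\le\varrho$, so an incomplete trailing block shifts a partial sum by at most $\varrho$; and $\|E(U_jU_j')\|\vee\|E(U_j'U_j)\|\le q^2s_n^2$ by the definition of $s_n^2$ and the triangle inequality over the $q^2$ pairs in a block, whence the variance proxy of each parity's block sum is bounded by $nq s_n^2$. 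The first estimate reduces $\{\sup_k\|S_k^{\mathrm{o}}\|>2\varrho\}$ to $\{\max_j\|T_j^{\mathrm{o}}\|>\varrho\}$ (and likewise for the even part), where $T_j^{\mathrm{o}}=\sum_{l\le j}U_{2l-1}$ ranges over complete odd blocks; this is exactly what turns the level $\varrho$ appearing inside the exponent into the outer threshold $4\varrho=2\cdot 2\varrho$.

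Next I would couple. Within a single parity, consecutive blocks are separated by a full block of length $q$, so Berbee's lemma in the iterated form used by \cite{Chen2015} and \cite{Rio2017} yields independent copies $U_j^{*}\stackrel{d}{=}U_j$ with coupling cost $\beta(q)$ per block; summing over both parities and the boundary block produces the first term $([n/q]+1)\beta(q)$, and on the complementary event the coupled and original block sums coincide. The interleaving is precisely what converts the physical adjacency of neighbouring blocks into the lag-$q$ coefficient $\beta(q)$ rather than $\beta(1)$.

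Finally, on the coupling-success event each $\max_j\|T_j^{\mathrm{o},*}\|$ is the maximum of a partial-sum martingale built from independent, mean-zero matrices with uniform bound $qR_n$ and deterministic quadratic variation at most $nq s_n^2$. Passing to the Hermitian dilation and applying Tropp's matrix Freedman inequality, which is intrinsically maximal and so requires no separate Ottaviani reduction, bounds $P(\max_j\|T_j^{\mathrm{o},*}\|>\varrho)$ by $(d_1+d_2)\exp(-(\varrho^2/2)/(nq s_n^2+qR_n\varrho/3))$, and the odd and even parts together contribute the factor $2(d_1+d_2)$. A union bound over the coupling-failure event and the two parity tails then assembles the stated inequality. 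I expect the main obstacle to be the interface between the three devices: the blocking must be interleaved finely enough to extract $\beta(q)$ yet coarsely enough that the within-block remainder stays below $\varrho$, and the maximal (uniform-in-$k$) control must survive the passage to the noncommutative setting, where the scalar L\'evy/Ottaviani machinery is replaced by the matrix-martingale inequality.
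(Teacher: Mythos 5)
Your proposal is correct and follows essentially the same route as the paper's proof: consecutive length-$q$ blocking with the deterministic absorption of incomplete blocks via $\varrho \geq qR_n$, a Berbee coupling that makes each parity class of blocks independent and produces the $\left(\left[n/q\right]+1\right)\beta(q)$ term, and Tropp's matrix Freedman/martingale inequality applied to the odd- and even-indexed coupled block sums, yielding the factor $2\left(d_1+d_2\right)$ times the exponential bound. The only (immaterial) difference is bookkeeping: the paper spends one unit of $\varrho$ on the coupling discrepancy $\sum_i \left\Vert U_i - U_i^{\ast}\right\Vert$ and one on the boundary remainder, whereas you absorb coupling failure by event inclusion and split the budget as $2\varrho$ per parity; both accountings deliver the stated threshold $4\varrho$ and the same constants.
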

	
	The required stochastic order now follows by a choice of $\varrho $ in
	Theorem \ref{thm:fuk_nagaev_ineq}:
	
	\begin{corollary}
		\label{cor:fuk_nag} Under the conditions of Theorem \ref{thm:fuk_nagaev_ineq}%
		, if $q$ is chosen as a function of $n$ such that $\left (n/q\right )\beta
		(q)=o(1)$ and $R_n\sqrt{q\log \left (d_1+d_2\right )}=o\left (s_n\sqrt{n}%
		\right )$ then 
		\begin{equation*}
			\sup _{1\leq k\leq n}\left \Vert S_k\right \Vert =O_p\left (s_n\sqrt{nq\log
				\left (d_1+d_2\right )}\right )
		\end{equation*}
	\end{corollary}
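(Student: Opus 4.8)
The plan is to derive the stochastic order directly from the Fuk--Nagaev inequality of Theorem \ref{thm:fuk_nagaev_ineq} by making an explicit, near-optimal choice of the free parameter $\varrho$ and showing that both terms in the resulting tail bound can be made arbitrarily small. Throughout, write $a_n := s_n\sqrt{nq\log(d_1+d_2)}$ for the target rate and fix an arbitrary $\varepsilon>0$; by definition of $O_p$ it suffices to exhibit a constant $L$ and an index $N$ such that $P\left(\sup_{1\le k\le n}\|S_k\|>La_n\right)<\varepsilon$ for all $n\ge N$. I would set $\varrho=\varrho_n:=(L/4)\,a_n$, so that $4\varrho=La_n$ and the event in Theorem \ref{thm:fuk_nagaev_ineq} is precisely the one we wish to control.

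First I would check admissibility of this choice, i.e. that $\varrho_n\ge qR_n$ for all large $n$. Dividing, $qR_n/\varrho_n=(4/L)\,R_n\sqrt{q}/\left(s_n\sqrt{n}\,\sqrt{\log(d_1+d_2)}\right)$, and since the second rate hypothesis $R_n\sqrt{q\log(d_1+d_2)}=o(s_n\sqrt{n})$ gives $R_n\sqrt{q}/(s_n\sqrt{n})=o\left(1/\sqrt{\log(d_1+d_2)}\right)$, this ratio tends to zero; hence $\varrho_n\ge qR_n$ eventually and the inequality applies.

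Next I would dispose of the two terms in the bound. The mixing term satisfies $([n/q]+1)\beta(q)\le (n/q)\beta(q)+\beta(q)\to 0$, because $(n/q)\beta(q)=o(1)$ and $\beta(q)\le (n/q)\beta(q)$ for $q\le n/2$; it is therefore below $\varepsilon/2$ for large $n$ irrespective of $L$. For the exponential term the key step is to show that the ``Poisson'' contribution $qR_n\varrho_n/3$ to the denominator is negligible relative to the ``variance'' contribution $nqs_n^2$: indeed $qR_n\varrho_n/(nqs_n^2)=(L/4)\,R_n\sqrt{q\log(d_1+d_2)}/(s_n\sqrt{n})\to 0$, again by the second rate hypothesis, so for $n$ large (depending on $L$) the denominator is at most $2nqs_n^2$. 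Substituting $\varrho_n^2=(L^2/16)\,s_n^2 nq\log(d_1+d_2)$ then bounds the exponent above by $-\varrho_n^2/(4nqs_n^2)=-(L^2/64)\log(d_1+d_2)$, whence the exponential term is at most $2(d_1+d_2)^{\,1-L^2/64}$.

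Finally I would choose $L$. Since $x\mapsto x^{\,1-L^2/64}$ is nonincreasing on $[2,\infty)$ once $L^2>64$, the bound $2(d_1+d_2)^{\,1-L^2/64}\le 2\cdot 2^{\,1-L^2/64}$ holds uniformly in $d_1+d_2\ge 2$ and tends to zero as $L\to\infty$; picking $L$ with $2\cdot 2^{\,1-L^2/64}<\varepsilon/2$ controls this term simultaneously, whether $d_1+d_2$ stays bounded or diverges. Combining the two halves yields $P\left(\sup_{1\le k\le n}\|S_k\|>La_n\right)<\varepsilon$ for all large $n$, which is the claimed $O_p(a_n)$. The one delicate point, and the crux of the argument, is the interplay in the third paragraph: the second rate condition is used twice --- once for admissibility and once to guarantee that the sub-Gaussian (variance) regime of the Bernstein-type bound dominates the sub-exponential (boundedness) regime --- and it is precisely this domination that converts the $\log(d_1+d_2)$ factor inside $a_n$ into the polynomial decay $(d_1+d_2)^{\,1-L^2/64}$ that can be driven below $\varepsilon$.
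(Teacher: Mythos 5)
Your proof is correct and follows essentially the same route as the paper's: take $\varrho$ proportional to $s_n\sqrt{nq\log(d_1+d_2)}$ in Theorem \ref{thm:fuk_nagaev_ineq}, verify admissibility ($\varrho\geq qR_n$) from the second rate condition, and let the two hypotheses control the mixing term and the exponential term respectively. Your write-up merely fills in details the paper compresses (the domination of the Bernstein denominator by $nqs_n^2$ and the resulting $(d_1+d_2)^{1-L^2/64}$ decay), and your admissibility check works for all $d_1+d_2\geq 2$, whereas the paper's shortcut needs $d_1+d_2\geq e$ and excludes the scalar case.
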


	\begin{proof}[Proof of Theorem \protect\ref{thm:fuk_nagaev_ineq}]
		For $i=1,\ldots ,[n/q]$, define $U_{i}=\sum_{j=iq-q+1}^{iq}\Xi _{j,n}$ and $%
		U_{[n/q]+1}=\sum_{j=[n/q]q}^{n}\Xi _{j,n}$. Now, for an integer $j$ that
		differs from an integer multiple of $q$ by at most $[q/2]$, we have 
		$
		\sup_{1\leq k\leq n}\left\Vert S_{k}\right\Vert \leq
		2[q/2]R_{n}+\sup_{j>0}\left\Vert \sum_{i=1}^{j}U_{i}\right\Vert .
		$
		If $q$ is even (respectively odd) then $q=2k$ (resp. $q=2k+1$) for some
		positive integer $k$, implying $[q/2]=[2k/2]=k$ (resp. $[q/2]=[(2k+1)/2]=k$)
		whence $2[q/2]R_{n}\leq qR_{n}$ (resp. $2[q/2]R_{n}\leq (q-1)R_{n}$). Thus,
		because $\varrho \geq qR_{n}$, 
		\begin{eqnarray}
			P\left( \sup_{1\leq k\leq n}\left\Vert S_{k}\right\Vert >4\varrho \right)
			&\leq &P\left( 2[q/2]R_{n}>\varrho \right) +P\left( \sup_{j>0}\left\Vert
			\sum_{i=1}^{j}U_{i}\right\Vert \geq 3\varrho \right)  \notag \\
			&=&P\left( \sup_{j>0}\left\Vert \sum_{i=1}^{j}U_{i}\right\Vert \geq 3\varrho
			\right) ,  \label{partial_sum1}
		\end{eqnarray}%
		so it suffices to prove that 
		\begin{equation*}
			P\left( \sup_{j>0}\left\Vert \sum_{i=1}^{j}U_{i}\right\Vert \geq 3\varrho
			\right) \leq \left( \left[ \frac{n}{q}\right] +1\right) \beta (q)+2\left(
			d_{1}+d_{2}\right) \func{exp}\left( \frac{-\varrho ^{2}/2}{%
				nqs_{n}^{2}+qR_{n}\varrho /3}\right) .
		\end{equation*}%
		Enlarging the probability space as needed, by Lemma 5.1 (Berbee's Lemma) of 
		\cite{Rio2017} there is a sequence $\xi _{i}^{\ast }$, $1\leq i\leq \lbrack
		n/q]+1$, such that
		
		\begin{enumerate}[(a)]

			\item The random variable $x_{i}^{\ast }$ is distributed as $x_{i}$ for each 
			$1\leq i\leq \lbrack n/q]+1$.
			
			\item The sequences $\xi _{2i}^{\ast }$, $1\leq 2i\leq \lbrack n/q]+1$, and $%
			\xi _{2i-1}^{\ast }$, $1\leq 2i-1\leq \lbrack n/q]+1$, comprise of
			independent random variables.
			
			\item $P\left( \xi _{i}\neq \xi _{i}^{\ast }\right) \leq \beta (q+p)$ for $%
			1\leq i\leq \lbrack n/q]+1$.
		\end{enumerate}
		
		Denote $\Xi _{i,n}^{\ast }=\Xi _{n}\left( \xi _{i}^{\ast }\right) $, and
		define $U_{i}^{\ast }$ in the obvious manner. Then, we have 
		\begin{equation}
			\sup_{j>0}\left\Vert \sum_{i=1}^{j}U_{i}\right\Vert \leq
			\sum_{i=1}^{[n/q]+1}\left\Vert U_{i}-U_{i}^{\ast }\right\Vert
			+\sup_{j>0}\left\Vert \sum_{i=1}^{j}U_{2i}^{\ast }\right\Vert
			+\sup_{j>0}\left\Vert \sum_{i=1}^{j}U_{2i-1}^{\ast }\right\Vert .
			\label{partial_sum2}
		\end{equation}
		
		Now, by (c), we have 
		\begin{eqnarray*}
			P\left( \sum_{i=1}^{[n/q]+1}\left\Vert U_{i}-U_{i}^{\ast }\right\Vert \geq
			\varrho \right) &=&P\left( \sum_{i=1}^{[n/q]+1}\left\Vert U_{i}-U_{i}^{\ast
			}\right\Vert \geq \left[ \sum_{i=1}^{[n/q]+1}\varrho /\left( [n/q]+1\right) %
			\right] \right) \\
			&\leq &\sum_{i=1}^{[n/q]+1}P\left( \left\Vert U_{i}-U_{i}^{\ast }\right\Vert
			\geq \varrho /\left( [n/q]+1\right) \right) \\
			&\leq &\left( [n/q]+1\right) \beta (q+p),
		\end{eqnarray*}%
		while for all $1\leq i\leq \lbrack n/q]+1$ the matrices $U_{i}^{\ast
		}=\sum_{j=iq-q+1}^{iq}\Xi _{j,n}^{\ast }$ satisfy $\left\Vert U_{i}^{\ast
		}\right\Vert \leq qR_{n}$ and 
		\begin{equation*}
			\max_{1\leq j\leq n}\max \left\{ \left\Vert E\left(
			\sum_{i=1}^{j}U_{i}U_{i}^{\ast ^{\prime }}\right) \right\Vert ,\left\Vert
			E\left( \sum_{i=1}^{j}U_{i}^{\ast ^{\prime }}U_{i}^{\ast }\right)
			\right\Vert \right\} \leq nqs_{n}^{2}.
		\end{equation*}%
		Furthermore, the sequence ${\mathcal{U}}_{j}=\sum_{i=1}^{j}U_{2i}^{\ast }$
		is a matrix martingale (because $U_{2i}^{\ast }$ is an independent sequence
		and $E{\mathcal{U}}_{j}=0$) with difference sequence ${\mathcal{U}}_{j}-{%
			\mathcal{U}}_{j-1}=U_{2j}^{\ast }$. Thus, by Corollary 1.3 of \cite%
		{Tropp2011}, 
		\begin{equation}
			P\left( \sup_{j>0}\left\Vert \sum_{i=1}^{j}U_{2i}^{\ast }\right\Vert \geq
			\varrho \right) \leq \left( d_{1}+d_{2}\right) \func{exp}\left( \frac{%
				-\varrho ^{2}/2}{nqs_{n}^{2}+qR_{n}\varrho /3}\right) .  \label{partial_sum3}
		\end{equation}%
		The third term on the RHS of (\ref{partial_sum2}) is bounded similarly,
		whence the claim follows.
	\end{proof}
	
	\begin{proof}[Proof of Corollary \protect\ref{cor:fuk_nag}]
		In Theorem \ref{thm:fuk_nagaev_ineq}, take $\varrho =Cs_{n}\sqrt{nq\log
			\left( d_{1}+d_{2}\right) }$ for a sufficiently large constant $C$. Then the
		claim follows by the condition $\left( n/q\right) \beta (q)=o(1)$ and
		because $R_{n}\sqrt{q\log \left( d_{1}+d_{2}\right) }=o\left( s_{n}\sqrt{n}%
		\right) $. To verify that $\varrho $ satisfies that requirement of Theorem %
		\ref{thm:fuk_nagaev_ineq}, note that the latter condition implies $Cs_{n}%
		\sqrt{n}\geq R_{n}\sqrt{q\log \left( d_{1}+d_{2}\right) }$ for sufficiently
		large $n$, so $\varrho \geq qR_{n}\log (d_{1}+d_{2})\geq qR_{n}$ for
		sufficiently large $n$, assuming $d_{1}+d_{2}\geq e\approx 2.72$. The latter
		condition fails only if the $\Xi _{i,n}$ are scalar.
	\end{proof}

	\section{For Section \ref{sec:theory}}\label{sec:weak convergence}
	
	We first present an initial approximation of ${\mathcal{Q}}_{n}(\gamma )$. 
	\begin{theorem}
		\label{theorem:struc_break_approx} Let Assumptions \ref{ass:errors}-\ref{ass:M_diff} hold, and 
		\begin{equation}
			\lambda_n^{-4}\sqrt{p}\left(\lambda_n^{-1} \varkappa _{p}+v_{p}\right) +\lambda_n^{-6}p^{-1}\rightarrow 0{\text{ as }}%
			n\rightarrow \infty , \label{rate:struc_break_approx}
		\end{equation}%
		Then, 
		${\mathcal{Q}}_{n}(\gamma )-\left({{\mathcal{R}_{n}}(\gamma )-p}\right)/{\sqrt{2p}} = o_{p}(1)$.   
	\end{theorem}
	\begin{proof}
		Much of the details are delegated to Lemmas \ref{lemma:Mhat_norm}-\ref{lemma:fancyS_lemma2}. In particular,
		we  show in Lemma \ref{lemma:wald_approx} that 
		\begin{equation}
			{\mathcal{Q}}_{n}(\gamma )=\frac{n\varepsilon ^{\prime }A(\gamma )^{\prime
				}B(\gamma )^{-1}A(\gamma )\varepsilon -p}{\sqrt{2p}}+o_{p}(1).
			\label{general_approx}
		\end{equation}%
		
		Then, note that 
		\begin{equation}
			\left( X^{\ast }(\gamma )^{\prime }M_{X}X^{\ast }(\gamma )\right)
			^{-1}=n^{-1}\left( I-\hat{M}^{-1}\hat{S}(\gamma )\right) ^{-1}\hat{S}(\gamma
			)^{-1},  \label{fancyR1}
		\end{equation}%
		and 
		\begin{equation}
			X^{\ast }(\gamma )^{\prime }M_{X}\varepsilon =X^{\ast }(\gamma )^{\prime
			}\varepsilon -\hat{S}(\gamma )\hat{M}^{-1}X^{\prime }\varepsilon ,
			\label{fancyR2}
		\end{equation}%
		because $n^{-1}X^{\ast }(\gamma )^{\prime }X=\hat{S}(\gamma )$. Using (\ref%
		{fancyR1}) and (\ref{fancyR2}), we may write $n\varepsilon ^{\prime
		}A(\gamma )^{\prime }B(\gamma )^{-1}A(\gamma )\varepsilon /\sqrt{2p}$ as 
		\begin{equation}
			\frac{n^{-1}R_{1}(\gamma )^{\prime }R_{2}(\gamma )^{\prime }B(\gamma
				)^{-1}R_{2}(\gamma )R_{1}(\gamma )}{\sqrt{2p}}  \label{fancyR3}
		\end{equation}%
		where $R_{1}(\gamma )=\hat{S}(\gamma )^{-1}X^{\prime \ast }(\gamma
		)\varepsilon -\gamma (1-\gamma )^{-1}\hat{M}^{-1}X^{\prime }\varepsilon $ and $R_{2}(\gamma
		)=\left( I-\hat{M}^{-1}\hat{S}(\gamma )\right) ^{-1}$. By adding and
		subtracting terms we can decompose (\ref{fancyR3}) as $\sum_{i=1}^{4}\Delta
		_{i}(\gamma )+\overline{{\mathcal{R}_{n}}}(\gamma )$, with%
		
		\begin{eqnarray*}
			\Delta _{1}(\gamma ) &=&\frac{\left( R_{1}(\gamma )-\overline{R}_{1}(\gamma
				)\right) ^{\prime }R_{2}(\gamma )^{\prime} B(\gamma )^{-1} R_{2}(\gamma )R_{1}(\gamma )}{n%
				\sqrt{2p}}, \\
			\Delta _{2}(\gamma ) &=&\frac{\overline{R}_{1}(\gamma )^{\prime
				}R_{2}(\gamma )^{\prime }B(\gamma )^{-1}R_{2}(\gamma )\left( R_{1}(\gamma )-\overline{R}%
				_{1}(\gamma )\right) }{n\sqrt{2p}}, \\
			\Delta _{3}(\gamma ) &=&\frac{\overline{R}_{1}(\gamma )^{\prime }\left(
				R_{2}(\gamma )-\gamma ^{-1}I\right) ^{\prime }B(\gamma )^{-1}R_{2}(\gamma )\overline{R}%
				_{1}(\gamma )}{n\sqrt{2p}}, \\
			\Delta _{4}(\gamma ) &=&\frac{\overline{R}_{1}(\gamma )^{\prime }B(\gamma)^{-1}\left(
				R_{2}(\gamma )-\gamma ^{-1}I\right) \overline{R}_{1}(\gamma )}{\gamma n\sqrt{%
					2p}},
		\end{eqnarray*}%
		where we write $\overline{R}_{1}(\gamma )=(1-\gamma )^{-1}\hat{M}^{-1}\left(
		\gamma \sum_{t=1}^{n}\varepsilon _{t}x_{t}-\sum_{t=1}^{[n\gamma
			]}\varepsilon _{t}x_{t}\right) $ and 
		\begin{equation}
			\overline{{\mathcal{R}_{n}}}(\gamma )=\frac{\left( \sum_{t=1}^{[n\gamma
					]}\varepsilon _{t}x_{t}-\gamma \sum_{t=1}^{n}\varepsilon _{t}x_{t}\right)
				^{\prime }\hat{M}^{-1}B(\gamma )^{-1}\hat{M}^{-1}\left( \sum_{t=1}^{[n\gamma
					]}\varepsilon _{t}x_{t}-\gamma \sum_{t=1}^{n}\varepsilon _{t}x_{t}\right) }{%
				\gamma ^{2}\left( 1-\gamma \right) ^{2}n\sqrt{2p}}.  \label{fancyR_bar}
		\end{equation}
		
		By (\ref{B_gamma1}), the term sandwiched between the parentheses in the
		numerator of (\ref{fancyR_bar}) is 
		\begin{equation}
			\left( \hat{M}^{-1}-M^{-1}\right) B(\gamma )^{-1}\hat{M}^{-1}+M^{-1}B(\gamma
			)^{-1}\left( \hat{M}^{-1}-M^{-1}\right) +\gamma (1-\gamma )\Omega ^{-1}.
			\label{midterm_norm}
		\end{equation}%
		Substituting (\ref{midterm_norm}) into (\ref{fancyR_bar}) yields three terms
		corresponding to the three terms in (\ref{midterm_norm}). The first of these, multiplied by the outside terms in the sandwich formula in \eqref{fancyR_bar},
		has modulus bounded by a constant times 
		\begin{equation*}
			\frac{n^{-1}\left( \left\Vert X^{\prime
				}\varepsilon \right\Vert ^{2}+\left\Vert X^{\ast }(\gamma )^{\prime
				}\varepsilon \right\Vert ^{2}\right) \left\Vert \hat{M}-M\right\Vert
				\left\Vert B(\gamma )^{-1}\right\Vert \left\Vert M^{-1}\right\Vert
				\left\Vert \hat{M}^{-1}\right\Vert ^{2}}{\sqrt{p}}=O_{p}\left(\lambda_n^{-4} \sqrt{p}\varkappa
			_{p}\right) ,
		\end{equation*}%
		by Assumption \ref{ass:M_diff} and Lemmas \ref%
		{lemma:Mhat_norm}, \ref{lemma:B_eigs}, and also (\ref{lemma2_first_expec}),
		while the second is similarly shown to be negligible also. By (\ref{rate:struc_break_approx}), we conclude that 
		\begin{equation*}
			\overline{{\mathcal{R}_{n}}}(\gamma )={{\mathcal{R}_{n}}}(\gamma )+o_{p}(1),
		\end{equation*}%
		indicating that the theorem is proved if $\Delta _{i}(\gamma )=o_{p}(1)$, $%
		i=1,2,3,4$. But by previously used techniques and Lemmas \ref%
		{lemma:fancyS_lemma1} and \ref{lemma:fancyS_lemma2}, we readily conclude
		that 
		\begin{equation*}
			\left( \Delta _{1}(\gamma ),\Delta _{2}(\gamma
			),\Delta _{3}(\gamma ),\Delta _{4}(\gamma )\right) =O_{p}\left( \lambda_n^{-5}{\sqrt{p}\varkappa
				_{p}}\right)
		\end{equation*}%
		which are all negligible by (\ref{rate:struc_break_approx}), proving the
		theorem.
	\end{proof}

	Write $\tilde{\Omega}(\gamma )=n^{-1}\sum_{t=1}^{n}x_{t}(\gamma
	)x_{t}^{\prime }(\gamma )\varepsilon _{t}^{2}$.
	
	\begin{lemma}
		\label{lemma:Omega_hat_true} Under Assumptions \ref{ass:errors}-\ref{ass:M_diff}, and the conditions of Propositions \ref{thm:CC} or \ref{thm:Berk} as applicable,
		\begin{align}
			\sup_{\gamma \in \Gamma }\left\Vert \hat{\Omega}(\gamma )-\tilde{\Omega}%
			(\gamma )\right\Vert & =O_{p}\left( \lambda_n^{-2}\min \left\{ \frac{p^{3}}{n},\frac{%
				\vartheta _{p}^{2}p}{n}\right\} \right) ,  \label{Omegadiff1} \\
			\sup_{\gamma \in \Gamma }\left\Vert \tilde{\Omega}(\gamma )-\bar{\Omega}%
			(\gamma )\right\Vert & =O_{p}\left( \frac{p}{\sqrt{n}}\right) .
			\label{Omegadiff2}
		\end{align}
	\end{lemma}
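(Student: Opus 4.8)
The plan is to treat the two displays by different mechanisms, after two simplifications common to both. First, I would normalize so that $E(x_tx_t')=I_p$, as permitted by Assumption \ref{ass:M_eigs}. Second, I would reduce the supremum over $\gamma\in\Gamma$ to a supremum over partial sums: since $x_t(\gamma)x_t(\gamma)'$ is block-constant in $\gamma$ and changes only at the $n$ points $t/n$, every matrix below is a partial sum indexed by $k=[n\gamma]$, so uniformity in $\gamma$ is uniformity over $1\le k\le n$, which is exactly what Corollary \ref{cor:fuk_nag} (or a martingale maximal inequality) delivers. With these in hand, \eqref{Omegadiff2} is the genuine stochastic fluctuation of a mean-zero array, while \eqref{Omegadiff1} is an estimation-error ``bias'' from replacing $\varepsilon_t$ by the least-squares residual $\hat e_t(\gamma)$.

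For \eqref{Omegadiff2}, write $\tilde{\Omega}(\gamma)-\bar{\Omega}(\gamma)=n^{-1}\sum_t x_t(\gamma)x_t(\gamma)'(\varepsilon_t^2-\sigma_t^2)$. Because $x_t$ is $\mathcal{F}_{t-1}$-measurable and $E(\varepsilon_t^2-\sigma_t^2\mid\mathcal{F}_{t-1})=0$ by Assumption \ref{ass:errors}, the summands form a matrix martingale-difference array. I would bound the spectral norm by the Frobenius norm and apply Doob's $L^2$ inequality to the vectorized partial sums, so that $\sup_{\gamma}\Vert\tilde{\Omega}(\gamma)-\bar{\Omega}(\gamma)\Vert$ is controlled at the rate of the full sum. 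The relevant second moment is $n^{-2}\sum_t E[\Vert x_t\Vert^4(\varepsilon_t^2-\sigma_t^2)^2]$; using the bounded conditional fourth moment of $\varepsilon_t$ and $E\Vert x_t\Vert^4=\sum_{i,j}E(x_{ti}^2x_{tj}^2)\le p^2\sup_{i,t}Ex_{ti}^4=O(p^2)$ from Assumption \ref{ass:M_diff}, this is $O(p^2/n)$, which gives the claimed $O_p(p/\sqrt{n})$.

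For \eqref{Omegadiff1}, expand $\hat e_t(\gamma)=\varepsilon_t+w_t(\gamma)$ with $w_t(\gamma)=r_t-x_t(\gamma)'(\hat{\delta}(\gamma)-\delta)$, where $\delta$ is the pseudo-true coefficient (so that $x_t(\gamma)'\delta=x_t'\beta_1$ under $H_0$; under $H_\ell$ the extra localization term is of smaller order and is absorbed). Then $\hat e_t(\gamma)^2-\varepsilon_t^2=2\varepsilon_t w_t(\gamma)+w_t(\gamma)^2$, and $\hat{\Omega}(\gamma)-\tilde{\Omega}(\gamma)$ splits into a quadratic part $n^{-1}\sum_t x_t(\gamma)x_t(\gamma)'w_t(\gamma)^2$ and a cross part $2n^{-1}\sum_t x_t(\gamma)x_t(\gamma)'\varepsilon_t w_t(\gamma)$. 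The quadratic part is dominant: bounding $w_t(\gamma)^2\le 2(x_t(\gamma)'(\hat{\delta}(\gamma)-\delta))^2+2r_t^2$, for a unit vector $v$ the first piece gives $n^{-1}\sum_t(v'x_t(\gamma))^2(x_t(\gamma)'(\hat{\delta}(\gamma)-\delta))^2\le C\vartheta_p^2\Vert\hat{\delta}(\gamma)-\delta\Vert^2\,\overline{\lambda}(\hat{M}(\gamma))$, and with the uniform rate $\sup_{\gamma}\Vert\hat{\delta}(\gamma)-\delta\Vert^2=O_p(p/n)$ (from the eigenvalue bounds on $\hat{M}(\gamma)$ and a partial-sum bound on $\sup_{\gamma}\Vert n^{-1}\sum_t x_t(\gamma)\varepsilon_t\Vert$ via Corollary \ref{cor:fuk_nag}) this is $O_p(\vartheta_p^2 p/n)$; the $r_t$ piece is bounded by $C\vartheta_p^2 n^{-1}\sum_t r_t^2=o_p(\vartheta_p^2/n)$ via Assumption \ref{ass:aprx0} and is negligible. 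The alternative bound $p^3/n$ follows along the same lines, but replacing the sup-norm factor $\vartheta_p^2$ by a trace/eigenvalue argument that does not invoke condition \eqref{ass:xnormactual}; taking the smaller of the two yields $\min\{p^3/n,\vartheta_p^2 p/n\}$.

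The main obstacle is the cross term $n^{-1}\sum_t x_t(\gamma)x_t(\gamma)'\varepsilon_t\,x_t(\gamma)'(\hat{\delta}(\gamma)-\delta)$. A direct Cauchy--Schwarz bound that treats $\hat{\delta}(\gamma)-\delta$ as a deterministic vector of norm $\sqrt{p/n}$ loses a factor $\sqrt{p}$ and would swamp the target rate, so the cancellation from the mean-zero factor $\varepsilon_t$ must be exploited. I would substitute the leading linear expansion $\hat{\delta}(\gamma)-\delta=\hat{M}(\gamma)^{-1}n^{-1}\sum_s x_s(\gamma)(\varepsilon_s+r_s)$, turning the cross term into a degree-two form in $\varepsilon$; its off-diagonal ($s\ne t$) part is again a martingale-difference array handled by the Doob/Frobenius argument of \eqref{Omegadiff2}, while the diagonal ($s=t$) part contributes only at the lower order $\vartheta_p^2\sqrt{p}/n$. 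Verifying that all of this holds uniformly in $\gamma$ using only the partial-sum maximal inequality and the eigenvalue control of $\hat{M}(\gamma)$ is the technical heart; once the cross term is shown to be of smaller order than $\vartheta_p^2 p/n$, both displays follow.
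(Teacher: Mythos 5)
Your skeleton coincides with the paper's own proof: it too expands $\hat e_t(\gamma)=\varepsilon_t+r_t+x_t(\gamma)'(\delta-\hat\delta(\gamma))$, splits $\hat\Omega(\gamma)-\tilde\Omega(\gamma)$ into quadratic and cross terms (its $U_1,\dots,U_5$), invokes $\sup_{\gamma\in\Gamma}\|\delta-\hat\delta(\gamma)\|=O_p(\sqrt{p/n})$ from Lemma \ref{lemma:deltahat}, and proves \eqref{Omegadiff2} by exactly your martingale-difference observation. Where you deviate, your route is in three respects the more careful one. First, the paper's argument for \eqref{Omegadiff2} computes an elementwise variance and applies Markov, which is pointwise in $\gamma$; your reduction to partial sums at $k=[n\gamma]$ plus Doob's $L^2$ maximal inequality on the vectorized matrix martingale is what actually delivers the supremum at the same $p/\sqrt n$ rate. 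Second, the paper's displayed bounds only ever produce the $p^3/n$ branch of the minimum in \eqref{Omegadiff1} (all via fourth moments, e.g.\ $\|U_1(\gamma)\|\le 2pn^{-1}\sum_t\sum_jx_{tj}^4\|\delta-\hat\delta(\gamma)\|^2$); your sup-norm bound $\max_t[x_t(\gamma)'(\hat\delta(\gamma)-\delta)]^2\le C\vartheta_p^2\|\hat\delta(\gamma)-\delta\|^2$ combined with $\overline\lambda(\hat M(\gamma))=O_p(1)$ is the argument that yields the $\vartheta_p^2p/n$ branch, which the lemma asserts but the paper never derives. Third, and most substantively, the cross term: the paper disposes of it in one line, $\|U_3(\gamma)\|\le 4n^{-1}\sum_t(x_t'x_t)^2|\varepsilon_t|\,\|\delta-\hat\delta(\gamma)\|^2=O_p(p^3/n)$, an inequality that squares the estimation error although $U_3$ is only linear in it, so it does not follow from Cauchy--Schwarz; the honest crude bound is $2n^{-1}\sum_t\|x_t(\gamma)\|^3|\varepsilon_t|\cdot\|\delta-\hat\delta(\gamma)\|=O_p(p^2/\sqrt n)$, which exceeds $p^3/n$ whenever $p=o(\sqrt n)$. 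Your diagnosis that this term is the bottleneck, and that the mean-zero structure of $\varepsilon_t$ must be exploited, is therefore exactly right, and your linearization route is genuinely different from the published argument.

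One step of that route would fail as literally stated, however. After substituting $\hat\delta(\gamma)-\delta=\hat M(\gamma)^{-1}n^{-1}\sum_sx_s(\gamma)(\varepsilon_s+r_s)$, the kernel of your degree-two form contains $\hat M(\gamma)^{-1}$, which is a function of the entire sample and is not adapted; consequently the off-diagonal $(s\ne t)$ part is \emph{not} a martingale-difference array in $t$, and Doob cannot be applied to it directly. You must first swap $\hat M(\gamma)^{-1}$ for a deterministic matrix such as $M(\gamma)^{-1}$ (Assumption \ref{ass:M_diff}, Lemma \ref{lemma:Mhat_norm}), and the swap error cannot be bounded crudely either: $\|\hat M(\gamma)^{-1}-M(\gamma)^{-1}\|\cdot n^{-1}\sum_t\|x_t(\gamma)\|^3|\varepsilon_t|\cdot O_p(\sqrt{p/n})=O_p(\varkappa_p\vartheta_pp^{3/2}/\sqrt n)$, which reintroduces the very loss you set out to avoid. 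A workable patch is to bound the swap error through the Frobenius norm of the third-order form $G(\gamma)=n^{-1}\sum_tx_t(\gamma)\otimes x_t(\gamma)\otimes x_t(\gamma)\,\varepsilon_t$: since $x_t(\gamma)\in\mathcal F_{t-1}$, the same MDS argument as in \eqref{Omegadiff2} gives $E\|G(\gamma)\|_F^2\le Cn^{-1}\sup_tE\|x_t(\gamma)\|^6=O(\vartheta_p^2p^2/n)$, whence the swap error is $O_p(\varkappa_p\vartheta_pp^{3/2}/n)=o_p(\vartheta_p^2p/n)$ as soon as $pq\log p=o(n)$. With that repair, together with the fourth-moment bookkeeping for the $M(\gamma)^{-1}$ part (where, as you say, uniformity in $\gamma$ and the dependence structure genuinely bite; your stated diagonal order $\vartheta_p^2\sqrt p/n$ is slightly optimistic, but $O_p(\min\{p^2,\vartheta_p^2p\}/n)$ holds and is still within target), your argument closes and delivers both branches of the minimum.
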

	
	\begin{proof}[Proof of Lemma \protect\ref{lemma:Omega_hat_true}]
		The matrix inside the norm on the LHS of (\ref{Omegadiff1}) can be
		decomposed as $\sum_{i=1}^{5}U_{i}(\gamma )$, with 
		\begin{align*}
			U_{1}(\gamma )& =n^{-1}\sum_{t=1}^{n}x_{t}(\gamma )x_{t}^{\prime }(\gamma ) 
			\left[ x_{t}^{\prime }(\gamma )\left( \delta -\hat{\delta}(\gamma )\right) %
			\right] ^{2}, \\
			U_{2}(\gamma )& =n^{-1}\sum_{t=1}^{n}x_{t}(\gamma )x_{t}^{\prime }(\gamma
			)r_{t}^{2}, \\
			U_{3}(\gamma )& =2n^{-1}\sum_{t=1}^{n}x_{t}(\gamma )x_{t}^{\prime }(\gamma ) 
			\left[ x_{t}^{\prime }(\gamma )\left( \delta -\hat{\delta}(\gamma )\right) %
			\right] \varepsilon _{t}, \\
			U_{4}(\gamma )& =2n^{-1}\sum_{t=1}^{n}x_{t}(\gamma )x_{t}^{\prime }(\gamma ) 
			\left[ x_{t}^{\prime }(\gamma )\left( \delta -\hat{\delta}(\gamma )\right) %
			\right] r_{t}, \\
			U_{5}(\gamma )& =2n^{-1}\sum_{t=1}^{n}x_{t}(\gamma )x_{t}^{\prime }(\gamma
			)r_{t}\varepsilon _{t}.
		\end{align*}
		
		Recall Lemma \ref{lemma:deltahat} for $\sup_{\gamma \in \Gamma }\left\Vert
		\delta -\hat{\delta}(\gamma )\right\Vert =O_{p}\left(\lambda_n^{-1} \sqrt{p/n}\right) .$
		Now, since the maximum eigenvalue of a non-negative definite symmetric
		matrix is less than equal to the trace, 
		\begin{eqnarray*}
			\left\Vert U_{1}(\gamma )\right\Vert &\leq &n^{-1}\sum_{t=1}^{n}\left(
			x_{t}^{\prime }(\gamma )x_{t}(\gamma )\right) ^{2}\left( \delta -\hat{\delta}%
			(\gamma )\right) ^{\prime }\left( \delta -\hat{\delta}(\gamma )\right) \\
			&\leq &2pn^{-1}\sum_{t=1}^{n}\sum_{j=1}^{p}x_{tj}^{4}\left\Vert \delta -\hat{%
				\delta}(\gamma )\right\Vert ^{2} = O_{p}\left(\lambda_n^{-2} p^{2}\right) O_{p}\left( p/n\right) ,
		\end{eqnarray*}%
		uniformly in $\gamma $, by the fact that $\sup_{t,j}Ex_{tj}^{4}<\infty $ and
		(\ref{delta_deltahat_order}). In a similar fashion, 
		\begin{equation*}
			E\left\Vert U_{2}(\gamma )\right\Vert \leq
			2En^{-1}\sum_{t=1}^{n}x_{t}^{\prime }x_{t}r_{t}^{2}\leq 2\left( E\left(
			x_{t}^{\prime }x_{t}\right) ^{2}Er_{t}^{4}\right) ^{1/2}=O\left( \lambda_n^{-2}p/\sqrt{n}%
			\right) .
		\end{equation*}%
		Similarly and using the fact that $E\left( \left\vert
		\varepsilon _{t}\right\vert |x_{t}\right) \leq \sqrt{E\left( \varepsilon
			_{t}^{2}|x_{t}\right) }=O\left( 1\right) ,$ we obtain 
		\begin{align*}
			\left\Vert U_{3}(\gamma )\right\Vert & \leq 4n^{-1}\sum_{t=1}^{n}\left(
			x_{t}^{\prime }x_{t}\right) ^{2}\left\vert \varepsilon _{t}\right\vert
			\left\Vert \delta -\hat{\delta}(\gamma )\right\Vert ^{2}=O_{p}\left(
			\lambda_n^{-2}p^{3}/n\right) , \\
			\left\Vert U_{4}(\gamma )\right\Vert & \leq 4n^{-1}\sum_{t=1}^{n}\left(
			x_{t}^{\prime }x_{t}\right) ^{3/2}\left\Vert \delta -\hat{\delta}(\gamma
			)\right\Vert \left\vert r_{t}\right\vert \\
			& \leq 4\left\Vert \delta -\hat{\delta}(\gamma )\right\Vert \left(
			n^{-1}\sum_{t=1}^{n}\left( x_{t}^{\prime }x_{t}\right) ^{2}\right)
			^{3/4}\left( n^{-1}\sum_{t=1}^{n}r_{t}^{4}\right) ^{1/4} 
			= O_p\left( \sqrt{\frac{p}{n}}p^{3/2}\frac{\lambda_n^{-1}}{n^{1/4}}\right) , \\
			\left\Vert U_{5}(\gamma )\right\Vert & =2n^{-1}\sum_{t=1}^{n}\left(
			x_{t}^{\prime }x_{t}\right) \left\vert r_{t}\varepsilon _{t}\right\vert
			=O_{p}\left( p/\sqrt{n}\right) ,
		\end{align*}%
		all uniformly in $\Gamma $. Thus (\ref{Omegadiff1}) is established.
		
		To show (\ref{Omegadiff2}), let $x_{it}$, $i=1,\ldots ,p$, be a typical
		element of $x_{t}$. Then any element of $\tilde{\Omega}(\gamma )-\bar{\Omega}%
		(\gamma )$ is of the form $n^{-1}\sum_{t=1}^{n}x_{it}(\gamma )x_{jt}(\gamma
		)\left( \varepsilon _{t}^{2}-\sigma _{t}^{2}\right) $, $i,j=1,\ldots ,p$,
		and $\varepsilon _{t}^{2}-\sigma _{t}^{2}$ is an MDS by construction. Thus,
		it has mean zero and variance $
		n^{-2}\sum_{t=1}^{n}Ex_{it}^{2}(\gamma )x_{jt}^{2}(\gamma )E\left( \left(
		\varepsilon _{t}^{2}-\sigma _{t}^{2}\right) ^{2}|\mathcal{F}_{t-1}\right)
		=O_{p}\left( n^{-1}\right)$, 
		by Assumption \ref{ass:errors} and the boundedness of $Ex_{it}^{4}$. Thus, 
		$
		E\left\Vert \tilde{\Omega}(\gamma )-\bar{\Omega}(\gamma )\right\Vert
		^{2}=O\left( p^{2}/n\right) ,
		$ and the claim in (\ref{Omegadiff2}) follows by Markov's inequality.
	\end{proof}
	
	We  establish asymptotic normality of 
	\begin{equation}
		{\mathcal{S}}_{n}(\gamma )=\frac{n^{-1}\sum_{s\neq t}g_{t}(\gamma )^{\prime
			}\Omega ^{-1}g_{s}(\gamma )\varepsilon _{t}\varepsilon _{s}}{\gamma \left(
			1-\gamma \right) \sqrt{2p}},  \label{dist_target}
	\end{equation}%
	recalling that $g_{t}(\gamma )=x_t 1\left\{t/n\leq\gamma\right\}-\gamma x_t$.
	
	\begin{theorem}
		\label{thm:T_weak_conv} Under Assumptions \ref{ass:errors}-\ref{ass:autocovandcumulant} and (\ref{rate:struc_break_approx}), $
		{\mathcal{S}}_{n}(\gamma )\overset{d}{\rightarrow} \sqrt{\mathcal{V}}\mathcal{Q}(\gamma ),{\text{ as }}%
		n\rightarrow \infty ,$ pointwise in $\gamma$.
	\end{theorem}
	
	\begin{proof}[Proof of Theorem \protect\ref{thm:T_weak_conv}]
		First, note that ${\mathcal{S}}_{n}(\gamma )$ equals $\left[ \gamma \left(
		1-\gamma \right) \sqrt{2p}\right] ^{-1}$ times 
		\begin{equation*}
			\frac{1}{n}\underset{s\neq t}{\sum_{s,t=1}^{[n\gamma ]}}x_{t}^{\prime
			}\Omega ^{-1}x_{s}\varepsilon _{t}\varepsilon _{s}-\frac{2\gamma }{n}%
			\underset{s\neq t}{\sum_{s=1}^{n}\sum_{t=1}^{[n\gamma ]}}x_{t}^{\prime
			}\Omega ^{-1}x_{s}\varepsilon _{t}\varepsilon _{s}+\frac{\gamma ^{2}}{n}%
			\underset{s\neq t}{\sum_{s,t=1}^{n}}x_{t}^{\prime }\Omega
			^{-1}x_{s}\varepsilon _{t}\varepsilon _{s}
		\end{equation*}%
		and thus 
		\begin{align*}
			{\mathcal{S}}_{n}(\gamma )& =\frac{\sqrt{2}}{\gamma \left( 1-\gamma \right) }%
			\left[ \mathcal{A}_{n}(\gamma )-\gamma \left[ \mathcal{A}_{n}(1)+\mathcal{A}_{n}(\gamma )-\bar{\mathcal{A}}%
			_{n}(\gamma )\right] +\gamma ^{2}\mathcal{A}_{n}(1)\right] , \\
			& =\sqrt{2}\left( \frac{\mathcal{A}_{n}(\gamma )}{\gamma }+\frac{\bar{\mathcal{A}}_{n}\left(
				\gamma \right) }{\left( 1-\gamma \right) }-\mathcal{A}_{n}(1)\right) ,
		\end{align*}%
		where 
		\begin{align*}
			\mathcal{A}_{n}(\gamma )& =\frac{1}{n\sqrt{p}}{\sum_{s=2}^{[n\gamma ]}}{%
				\sum_{t=1}^{s-1}}\xi _{t}^{\prime }\xi _{s}, \\
			\bar{\mathcal{A}}_{n}(\gamma )& =\frac{1}{n\sqrt{p}}{\sum_{s=[n\gamma
					]+1}^{n}\sum_{t=[n\gamma ]+1}^{s-1}}\xi _{t}^{\prime }\xi _{s},
		\end{align*}%
		and $\xi _{t}=\left\{ \xi _{ti}\right\} _{i=1}^{p}=\Omega
		^{-1/2}x_{t}\varepsilon _{t}$ being an mds.

		%Due to their symmetric nature, the tightness proof
		%is almost the same for both processes. We elaborate the tightness
		%of $\mathcal{A}_{n}\left(\gamma\right),$ for which we note that $\mathcal{A}_{n}\left(\gamma\right)$
		%is a partial sum process of a heterogeneous martingale difference
		%array $w_{ns}=\xi_{s}^{\prime}{\sum_{t=1}^{s-1}}\xi_{t}/\sqrt{np}$,
		%and thus it is sufficient to show 
		%\begin{eqnarray}
		%E\left\vert \mathcal{A}_{n}\left(\gamma_{1}\right)-\mathcal{A}_{n}\left(\gamma_{2}\right)\right\vert ^{4} & = & E\left\vert \frac{1}{\sqrt{n}}{\sum_{s=\left[n\gamma_{1}\right]+1}^{[n\gamma_{2}]}}w_{ns}\right\vert ^{4}\nonumber \\
		%& \leq & E\left(\sum_{s}\mathrm{E}\left(w_{ns}^{2}|\mathcal{F}_{s-1}\right)/n\right)^{2}+n^{-1}\max_{s}E\left\vert w_{ns}\right\vert ^{4}O\left(\left\vert \gamma_{2}-\gamma_{1}\right\vert \right)\label{eq:max_w_s}\\
		%& = & O\left(\left\vert \gamma_{2}-\gamma_{1}\right\vert \right),\nonumber 
		%\end{eqnarray}
		%where we apply the Rosenthal inequality, e.g. \cite{Hall1980}, for the inequality and a calculation similar to (\ref{clt:var_cond}) and \eqref{cltvarbd1}  for the last equality. Specifically, 
		%\begin{equation}\label{tight_argument}
		%n^{-1}\max_{s}E\left\vert w_{ns}\right\vert ^{4}\leq \max_s E\left( E((\xi_{s}'\xi_{s})^2 | \mathcal{F}_{s-1}) \left( \sum_{t_1 ,t_2 <s} \xi_{t_1}'\xi_{t_2}\right)^{2}\right)n^{-3}p^{-2}=o(1),
		%\end{equation} 
		%by Assumption \ref{ass:MCLT} and the same reasoning as for \eqref{target} and  \eqref{cltvarbd1}.

		Next, we check the conditions of Corollary 3.1 in \cite{Hall1980} for $\mathcal{A}_{n}(\gamma )$, with similar steps holding for $\bar{\mathcal{A}_{n}}(\gamma )$ due to the symmetric nature of the processes. Writing $w_{ns}=\xi_{s}^{\prime}{\sum_{t=1}^{s-1}}\xi_{t}/\sqrt{np}$ (a heterogeneous martingale difference array), we first check the second condition therein, viz.
		\begin{equation}\label{clt:var_cond}
			\sum_{s}E\left(w_{ns}^{2}|\mathcal{G}_{s-1}\right)/n-\mathcal{V}/2\overset{p}{\rightarrow}0.
		\end{equation}
		Let $\Delta_s=\Upsilon_s\Xi_s$. Then we want to show  $n^{-2}p^{-1}\sum_s\tr \Delta_s-\mathcal{V}/2\overset{p}{\rightarrow}0$
		but, because $n^{-2}p^{-1}\sum_s E\tr\Delta_s\rightarrow \mathcal{V}/2$, it suffices to show
		\begin{equation}\label{target}
			n^{-2}p^{-1}\sum_s\left(\tr\Delta_s-E\tr\Delta_s\right)\overset{p}{\rightarrow}0.
		\end{equation}
		By Assumption \ref{ass:MCLT}, $P\left\{ Z_{n}\leq n^\nu\right\} \to1$,
		where $Z_{n}=\max_{1\leq t\leq n}\overline{\lambda}\left(\Upsilon_t\right)$. Then, to prove (\ref{target}), let $d_{n}=I\left\{ Z_{n}\leq n^\nu\right\} $
		and $K_{n}$ denote the LHS in (\ref{target}). Write $K_{n}=K_{n}d_{n}+K_{n}\left(1-d_{n}\right)$ and note that
		$K_{n}\left(1-d_{n}\right)=o_{p}\left(1\right)$ since $P\left\{ \left\Vert Z_{n}K_{n}\left(1-d_{n}\right)\right\Vert >0\right\} \leq P\left\{ d_{n}\neq1\right\} \to0$. 
		
		Thus it suffices to show (\ref{target}) for $d_n=1$. The LHS of (\ref{target}) has variance
		\begin{equation}	\label{vartarget}
			n^{-4}p^{-2}\sum_s E\left(\tr\Delta_s-E\tr\Delta_s\right)^2
			+2n^{-4}p^{-2}\sum_{s_1< s_2}E\left(\left(\tr\Delta_{s_1}-E\tr\Delta_{s_1}\right)\left(\tr\Delta_{s_2}-E\tr\Delta_{s_2}\right)\right).
		\end{equation}
		The first term in (\ref{vartarget}) is bounded by $n^{-4}p^{-2}\sum_s E\left(\tr^2\Delta_s\right)$, and observe that
		\begin{eqnarray}
			\sum_s E\left(\tr^2\Delta_s\right)=\sum_s E\left(\mathrm{tr}^2\left(\Upsilon_s\Xi_s\right)\right) &\leq & \sum_s E\left\{\overline{\lambda}^2\left(\Upsilon_s\right)\mathrm{tr}^2\left(\Xi_s\right)\right\}\nonumber\\
			& \leq & Cn^{2\nu}E\left(\sum_s\mathrm{tr}^2\left(\Xi_s\right)\right).\label{cltvarbd1}
		\end{eqnarray}
		The above inequalities are obtained as follows: first, the matrix $\Xi_s=\sum_{t_1,t_2<s}\xi_{t_1}\xi_{t_2}'$ is symmetric and positive semidefinite as it equals $\left(\sum_{t<s}\Omega^{-1/2}x_t\varepsilon_{t}\right)\left(\sum_{t<s}\Omega^{-1/2}x_t\varepsilon_{t}\right)'$. Because $\Upsilon_s$ is also symmetric psd, Theorem 1 of \cite{Fang1994} yields $\mathrm{tr}\left(\Upsilon_s\Xi_s\right)\leq \overline{\lambda}\left(\Upsilon_s\right)\mathrm{tr}(\Xi_s)$, whence the remaining inequality follows by Assumption \ref{ass:MCLT}. 
		
		Because $\mathrm{tr}\left(\Omega^{-1/2}x_{t_1}x_{t_2}'\Omega^{-1/2}\right)=x_{t_1}'\Omega^{-1}x_{t_2}$,
		the right side of (\ref{cltvarbd1}) is
		\begin{equation}\label{neweq1}
			Cn^{2\nu}\sum_{s}\sum_{t_1,t_2<s;t_3,t_4<s}E\left(x_{t_1}'\Omega^{-1}x_{t_2}\varepsilon_{t_1}\varepsilon_{t_2}x_{t_3}'\Omega^{-1}x_{t_4}\varepsilon_{t_3}\varepsilon_{t_4}\right).
		\end{equation}
		The contribution to (\ref{neweq1}) when $t_1=t_2=t_3=t_4$ is 
		\[
		Cn^{2\nu}\sum_s\sum_{t<s}E\left(\left(x_t'\Omega^{-1}x_t\right)^2\varepsilon_t^4\right)\leq Cn^{2\nu}\sum_s\sum_{t<s}\sum_{i,j=1}^p E\left(x_{it}^2x_{jt}^2\right)=O\left(n^{2\nu+2}p^2\right),
		\]
		by Assumptions \ref{ass:errors} and \ref{ass:M_diff}. Thus, this case contributes $O\left(n^{2\nu-2}\right)=o(1)$ to (\ref{vartarget}). Next, the  contribution to (\ref{neweq1}) from the case $(t_1=t_2)\neq (t_3=t_4)$ is
		\begin{eqnarray*}
			&&Cn^{2\nu}\sum_s\sum_{t_1<t_2<s}E\left(x_{t_1}'\Omega^{-1}x_{t_1}\varepsilon_{t_1}^2E\left(\left.x_{t_2}'\Omega^{-1}x_{t_2}\varepsilon_{t_2}^2\right\vert \mathcal{G}_{t_2-1}\right)\right)\\
			&\leq&Cn^{2\nu}\sum_s\sum_{t_1<s}E\left(x_{t_1}'\Omega^{-1}x_{t_1}\varepsilon_{t_1}^2\sum_{t_2<s}\text{tr}\Upsilon_{t_2}\right)\\
			&\leq&Cn^{2\nu+1}p\sum_{t_1\leq n}E\left(x_{t_1}'\Omega^{-1}x_{t_1}\varepsilon_{t_1}^2\sum_{t_2\leq n}\overline{\lambda}\left(\Upsilon_{t_2}\right)\right)\\
			&\leq&Cn^{3\nu+2}p\sum_{t_1\leq n}\text{tr}\left(E\left(x_{t_1}x_{t_1}'\varepsilon_{t_1}^2\right)\Omega^{-1}\right)\\
			&=&O\left(n^{3\nu+3}p^2\right),
		\end{eqnarray*}
		by Assumption \ref{ass:MCLT}, and because $\text{tr}\Upsilon_{t_2}\leq p\overline{\lambda}\left(\Upsilon_{t_2}\right) $. Thus, this case contributes $O\left(n^{3\nu+3}p^2\right)$ to \ref{cltvarbd1}, and therefore $O\left(n^{3\nu-1}\right)$  to (\ref{vartarget}). 
		
		The cases $(t_1=t_3)\neq (t_2=t_4)$ and $(t_1=t_4)\neq (t_2=t_3)$ similarly contribute a constant times
		\begin{eqnarray}
			&&n^{2\nu}\sum_s\sum_{t_1\neq t_2}E\left(x_{t_1}'\Omega^{-1}x_{t_2}\varepsilon_{t_1}\varepsilon_{t_2}\right)^2\nonumber\\
			&\leq& n^{2\nu}\sum_s\sum_{t_1\neq t_2}\left(E\left(x_{t_1}'\Omega^{-1}x_{t_1}\varepsilon^2_{t_1}\right)^2\right)^{1/2}\left(E\left(x_{t_2}'\Omega^{-1}x_{t_2}\varepsilon^2_{t_2}\right)^2\right)^{1/2}\nonumber\\
			&=&O\left(n^{2\nu+3}p^2\right),\label{cltvarbd3} 
		\end{eqnarray} 
		to (\ref{neweq1}), using the Cauchy Schwarz inequality. This ensures a negligible contribution of $O\left(n^{2\nu-1}\right)$ to (\ref{vartarget}). Finally,
		\begin{eqnarray}
			&&Cn^{2\nu}\sum_{s}\sum^{\neq}_{t_1,t_2<s;t_3,t_4<s}E\left(x_{t_1}'\Omega^{-1}x_{t_2}\varepsilon_{t_1}\varepsilon_{t_2}x_{t_3}'\Omega^{-1}x_{t_4}\varepsilon_{t_3}\varepsilon_{t_4}\right)\nonumber\\
			&=&O\left(n^{2\nu}\sum_{s}\sum^{\neq}_{t_1,t_2<s;t_3,t_4<s}\sum_{i,j=1}^p\left\vert E\left(x_{t_1,i}\varepsilon_{t_1}x_{t_2,i}\varepsilon_{t_2}x_{t_3,j}\varepsilon_{t_3}x_{t_4,j}\varepsilon_{t_4}\right)\right\vert\right)\nonumber\\
			&=&O\left(n^{2\nu+1}p^2\max_{s}\sum^{\neq}_{t_1,t_2<s;t_3,t_4<s}\max_{i,j=1,\ldots,p}\left\vert E\left(x_{t_1,i}\varepsilon_{t_1}x_{t_2,i}\varepsilon_{t_2}x_{t_3,j}\varepsilon_{t_3}x_{t_4,j}\varepsilon_{t_4}\right)\right\vert\right),\nonumber\\
			\label{cltvarbd2}
		\end{eqnarray}
		where $\sum^{\neq}_{t_1,t_2<s;t_3,t_4<s}$ excludes all cases which were considered before. Therefore, in view of (\ref{cltvarbd1}) and (\ref{cltvarbd2}), to establish negligibility of the first term in (\ref{vartarget}) it suffices to show
		\begin{equation}\label{cltvarbd3}
			n^{2\nu-3}\max_{s}\max_{i,j=1,\ldots,p}\sum^{\neq}_{t_1,t_2<s;t_3,t_4<s}\left\vert E\left(x_{t_1,i}\varepsilon_{t_1}x_{t_2,j}\varepsilon_{t_2}x_{t_3,k}\varepsilon_{t_3}x_{t_4,l}\varepsilon_{t_4}\right)\right\vert=o(1).
		\end{equation}
		The summand on the LHS abve is bounded by
		\begin{eqnarray}
			&&\left\vert E\left(x_{t_1,i}\varepsilon_{t_1}x_{t_2,i}\varepsilon_{t_2}\right)\right\vert \left\vert E\left(x_{t_3,j}\varepsilon_{t_3}x_{t_4,j}\varepsilon_{t_4}\right)\right\vert+\left\vert E\left(x_{t_1,i}\varepsilon_{t_1}x_{t_3,j}\varepsilon_{t_3}\right)\right\vert\left\vert E\left(x_{t_2,i}\varepsilon_{t_2}x_{t_4,j}\varepsilon_{t_4}\right)\right\vert\nonumber
			\\&+&\left\vert E\left(x_{t_1,i}\varepsilon_{t_1}x_{t_4,j}\varepsilon_{t_4}\right)\right\vert\left\vert E\left(x_{t_2,i}\varepsilon_{t_2}x_{t_3,j}\varepsilon_{t_3}\right)\right\vert\nonumber\\
			&+&\left\vert\mathrm{cum}_{iijj}\left(x_{t_1,i}\varepsilon_{t_1},x_{t_2,i}\varepsilon_{t_2},x_{t_3,j}\varepsilon_{t_3},x_{t_4,j}\varepsilon_{t_4}\right)\right\vert\nonumber\\
			&=&\left\vert c_{ii}\left(t_1-t_2\right)\right\vert \left\vert c_{jj}\left(t_3-t_4\right)\right\vert+\left\vert c_{ij}\left(t_1-t_3\right)\right\vert \left\vert c_{ij}\left(t_2-t_4\right)\right\vert\nonumber
			\\&+&\left\vert c_{ij}\left(t_1-t_4\right)\right\vert \left\vert c_{ji}\left(t_2-t_3\right)\right\vert\nonumber\\
			&+&\left\vert\mathrm{cum}_{iijj}\left(x_{0,i}\varepsilon_{0},x_{t_2-t_1,i}\varepsilon_{t_2-t_1},x_{t_3-t_1,j}\varepsilon_{t_3-t_1},x_{t_4-t_1,j}\varepsilon_{t_4-t_1}\right)\right\vert.\nonumber\\
			\label{cumcovmixing}
		\end{eqnarray}
		Because $\sum_{t_1,t_2}\left\vert c_{ij}\left(t_1-t_2\right)\right\vert\leq n\sum_{t=-\infty}^\infty\left\vert c_{ij}\left(t\right)\right\vert$, by Assumption \ref{ass:autocovandcumulant} and (\ref{cumcovmixing}) the LHS of (\ref{cltvarbd3}) is $O\left(n^{2(\nu+1)-3}\right)=O\left(n^{2\nu-1}\right)=o(1)$, as desired. Thus the first term in (\ref{vartarget}) is negligible, and by Assumption \ref{ass:MCLT} we conclude the proof of (\ref{target}).
		
		We now check the conditional Lindeberg condition
		\begin{equation}\label{lindeberg}
			\text{For all } \eta>0,\;\; \sum_{s}E\left(\left. \left(w_{ns}/\sqrt{n}\right)^2 1\left(\left\vert w_{ns}\right\vert>\eta\right)\right\vert \mathcal{G}_{s-1}\right)\overset{p}\rightarrow 0,
		\end{equation}
		in \cite{Hall1980}, Corollary 3.1, for which we verify the sufficient Lyapunov condition
		\begin{equation}\label{lyapunov}
			\sum_{s}E\left(\left. \left(w_{ns}/\sqrt{n}\right)^4\right\vert \mathcal{G}_{s-1}\right)\overset{p}\rightarrow 0.
		\end{equation}
		The LHS of (\ref{lyapunov}) is positive and, by law of iterated expectations, has mean
		\[
		n^{-2}\sum_{s}Ew_{ns}^4\leq n^{-1}\max_s Ew_{ns}^4=o\left(1\right),
		\]
		the final bound coming due to a calculation similar to the proof of (\ref{clt:var_cond}). Specifically, 
		\begin{equation}\label{tight_argument}
			n^{-1}\max_{s}E\left\vert w_{ns}\right\vert ^{4}\leq \max_s E\left( E((\xi_{s}'\xi_{s})^2 | \mathcal{G}_{s-1}) \left( \sum_{t_1 ,t_2 <s} \xi_{t_1}'\xi_{t_2}\right)^{2}\right)n^{-3}p^{-2}=O\left(n^{\omega+\nu-1}\right),
		\end{equation} 
		by Assumption \ref{ass:MCLT} and because the steps involved in showing (\ref{clt:var_cond}) imply that $E\left(\sum_{t_1 ,t_2 <s} \xi_{t_1}'\xi_{t_2}\right)^{2}=O\left(n^{\nu+2}p^2\right)$. Thus, because Assumption \ref{ass:MCLT} also implies that $O\left(n^{\omega+\nu-1}\right)=o(1)$, (\ref{lindeberg}) is established. A similar proof holds for the asymptotic normality of $\bar{\mathcal{A}}_{n}\left( \gamma \right)$.

		We finally derive the
		limiting covariance of $\left( \mathcal{A}_{n}\left( \gamma
		\right) ,\bar{\mathcal{A}}_{n}\left( \gamma \right) \right) ^{\prime }$. Using Assumption \ref{ass:MCLT}, we first compute 
		\begin{align*}
			E\left\vert \mathcal{A}_{n}\left( \gamma \right) \right\vert ^{2}& =\frac{1}{n}%
			\sum_{s=1}^{\left[ n\gamma \right] }Ew_{s}^{2} \\
			& =\frac{1}{n^{2}}\sum_{s=1}^{\left[ n\gamma \right] }s\left( \frac{1}{sp}{%
				\mathrm{tr}}\sum_{t_{1},t_{2}=1}^{s-1}E\left( \xi_{s}\xi_{s}^{\prime
			}\xi_{t_{1}}\xi_{t_{2}}^{\prime }\right) \right) \\
			& =\frac{\left( \left[ n\gamma \right] +1\right) \left[ n\gamma \right] }{%
				2n^{2}}\lim_{s,p\rightarrow \infty }\left( \frac{1}{sp}{\mathrm{tr}}%
			\sum_{t_{1},t_{2}=1}^{s-1}E\left( \xi_{s}\xi_{s}^{\prime
			}\xi_{t_{1}}\xi_{t_{2}}^{\prime }\right) \right) +o\left( 1\right) \\
			& =\frac{\gamma ^{2}{\mathcal{V}}}{2}+o\left( 1\right) ,
		\end{align*}%
		where ${\mathcal{V}}$ is given in (\ref{V_partial_def}). Next,
		\begin{align*}
			E\left\vert \bar{\mathcal{A}}_{n}\left( \gamma \right) \right\vert ^{2}& =E\left\vert 
			\frac{1}{n\sqrt{p}}{\sum_{s=[n\gamma ]+1}^{n}\sum_{t=[n\gamma ]+1}^{s-1}}\xi
			_{t}^{\prime }\xi _{s}\right\vert ^{2} \\
			& =E\left\vert \frac{1}{n\sqrt{p}}{\sum_{s=1}^{n-\left[ n\gamma \right]
				}\sum_{t=1}^{s-1}}\xi _{t+\left[ n\gamma \right] }^{\prime }\xi _{s+\left[
				n\gamma \right] }\right\vert ^{2} \\
			& =\frac{\left( 1-\gamma \right) ^{2}{\mathcal{V}}}{2}+o\left( 1\right).
		\end{align*}%
		Finally,
		\[
		E\left( \mathcal{A}_{n}\left( \gamma\right) \bar{\mathcal{A}}_{n}\left( \gamma \right)
		\right) =0.
		\]
		Therefore, we conclude that 
		\begin{equation*}
			\left( 
			\begin{array}{c}
				\mathcal{A}_{n}\left( \gamma \right) \\ 
				\bar{\mathcal{A}}_{n}\left( \gamma \right)%
			\end{array}%
			\right) \overset{d}{\rightarrow} \sqrt{\frac{{\mathcal{V}}}{2}}\left( 
			\begin{array}{c}
				W\left( \gamma \right) \\ 
				\bar{W}\left( \gamma \right)%
			\end{array}%
			\right) ,
		\end{equation*}
		pointwise in $\gamma\in\Gamma$.
		
		Finally, apply the continuous mapping theorem to get, pointwise in $\gamma$, 
		\begin{align*}
			{\mathcal{S}}_{n}(\gamma )& =\sqrt{2}\left( \frac{\mathcal{A}_{n}(\gamma )}{\gamma }+%
			\frac{\bar{\mathcal{A}}_{n}\left( \gamma \right) }{\left( 1-\gamma \right) }%
			-\mathcal{A}_{n}(1)\right) \\
			& \overset{d}{\rightarrow} \sqrt{{\mathcal{V}}}\left( \frac{W(\gamma )}{\gamma }+\frac{%
				\bar{W}\left( \gamma \right) }{\left( 1-\gamma \right) }-W(1)\right) =\sqrt{\mathcal{V}}
			\mathcal{Q}(\gamma ),
		\end{align*}%
		where $\mathcal{Q}(\gamma )$ has a standard normal distribution for this given $\gamma$. 
	\end{proof}
	
	We record some preliminary calculations useful for the sequel. Note that 
	\begin{equation*}
		\hat{\delta}_{2}(\gamma )=A(\gamma )y=\delta _{2}+A(\gamma )e=\delta
		_{2}+A(\gamma )\varepsilon +A(\gamma )r.
	\end{equation*}%
	Because $\delta _{2}=0$ under $\mathcal{H}_{0}$, we have 
	\begin{equation}
		W_{n}(\gamma )=n\left( \varepsilon +r\right) ^{\prime }A^{\prime }(\gamma )%
		\hat{B}(\gamma )^{-1}A(\gamma )\left( \varepsilon +r\right) ,
		\label{wald_split}
	\end{equation}%
	where we recall that $\hat{B}(\gamma )=R\hat{M}(\gamma )^{-1}\hat{\Omega}(\gamma )\hat{M}%
	(\gamma )^{-1}R^{\prime }$.
	
	\begin{lemma}
		\label{lemma:Mhat_norm} Under the conditions of Theorem \ref%
		{theorem:struc_break_approx}, for all sufficiently large $n$, 
		\begin{equation*}
			\sup _{\gamma \in \Gamma }\left \Vert \hat {M}(\gamma )\right \Vert=O_p(1),\sup
			_{\gamma \in \Gamma }\left \Vert \hat {M}(\gamma )^{-1}\right \Vert
			=O_{p}(\lambda_n^{-1}).
		\end{equation*}
	\end{lemma}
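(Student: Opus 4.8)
The plan is to read off the $2\times 2$ block structure of $\hat{M}(\gamma)$, control each block with Assumption \ref{ass:M_diff}, and then transfer the explicitly computable spectrum of the population matrix $M(\gamma)$ to $\hat M(\gamma)$ via Weyl's eigenvalue perturbation inequality. First I would set $\hat M_\gamma:=n^{-1}\sum_{t=1}^{[n\gamma]}x_tx_t'$ and note that, since $x_t(\gamma)=(x_t',x_t'1\{t/n>\gamma\})'$,
\begin{equation*}
\hat M(\gamma)=\begin{pmatrix}\hat M & \hat M-\hat M_\gamma\\ \hat M-\hat M_\gamma & \hat M-\hat M_\gamma\end{pmatrix},
\end{equation*}
whose population counterpart $M(\gamma)$ has blocks $M$ and $(1-\gamma)M$. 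Each block of $\hat M(\gamma)-M(\gamma)$ is either $\hat M-M$ or $(\hat M-M)-(\hat M_\gamma-\gamma M)$, and both are $O_p(\varkappa_p)$ uniformly in $\gamma$ by Assumption \ref{ass:M_diff} and the triangle inequality. Writing $\hat M(\gamma)-M(\gamma)$ as a sum of four matrices each supported on a single block (each of whose spectral norm equals that of the block it carries), the triangle inequality gives $\sup_{\gamma\in\Gamma}\|\hat M(\gamma)-M(\gamma)\|=O_p(\varkappa_p)$.

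Next I would pin down the spectrum of $M(\gamma)$. Observe that $M(\gamma)=\Lambda_\gamma\otimes M$ with $\Lambda_\gamma=\bigl(\begin{smallmatrix}1 & 1-\gamma\\ 1-\gamma & 1-\gamma\end{smallmatrix}\bigr)$, so the eigenvalues of $M(\gamma)$ are exactly the products $\mu_\pm(\gamma)\lambda_k(M)$, where $\mu_\pm(\gamma)$ are the two eigenvalues of $\Lambda_\gamma$. Since $\det\Lambda_\gamma=\gamma(1-\gamma)>0$ and $\operatorname{tr}\Lambda_\gamma=2-\gamma>0$, both $\mu_\pm(\gamma)$ are positive and continuous in $\gamma$, hence bounded away from $0$ and from $\infty$ uniformly over the compact set $\Gamma\subset(0,1)$. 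Combined with Assumption \ref{ass:M_eigs} this delivers constants $0<c\le C<\infty$ with $\inf_{\gamma\in\Gamma}\underline\lambda(M(\gamma))\ge c\,\lambda_n$ and $\sup_{\gamma\in\Gamma}\overline\lambda(M(\gamma))\le C$, which is precisely the spectral bound recorded in the text just after the definition of $M(\gamma)$.

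Finally I would invoke Weyl's inequality. Because $\hat M(\gamma)$ is symmetric positive semidefinite, $\|\hat M(\gamma)\|=\overline\lambda(\hat M(\gamma))$ and $\|\hat M(\gamma)^{-1}\|=\underline\lambda(\hat M(\gamma))^{-1}$, while $|\overline\lambda(\hat M(\gamma))-\overline\lambda(M(\gamma))|$ and $|\underline\lambda(\hat M(\gamma))-\underline\lambda(M(\gamma))|$ are each at most $\|\hat M(\gamma)-M(\gamma)\|$. The upper half is then immediate: $\sup_\gamma\|\hat M(\gamma)\|\le C+O_p(\varkappa_p)=O_p(1)$, and since $\lambda_n\le\overline\lambda(M)\le C$ forces $\lambda_n^{-1}$ to be bounded below, this is $O_p(\lambda_n^{-1})$. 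For the inverse, $\inf_\gamma\underline\lambda(\hat M(\gamma))\ge c\,\lambda_n-O_p(\varkappa_p)$, so $\sup_\gamma\|\hat M(\gamma)^{-1}\|\le(2/c)\lambda_n^{-1}$ on an event of probability tending to one, giving $O_p(\lambda_n^{-1})$; adding the two bounds proves the lemma.

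The main obstacle is the inverse bound: it requires $\varkappa_p=o(\lambda_n)$ so that the perturbation does not swamp the $c\,\lambda_n$ eigenvalue floor. This smallness is automatic whenever $\lambda_n$ is bounded away from zero (as in the AR and sieve examples, the latter via Assumption \ref{ass:CC}), and in the general vanishing-$\lambda_n$ regime it is exactly what the maintained rate conditions supply (e.g. $\lambda_n^{-4}\sqrt p(\varkappa_p+v_p)\to0$ in \eqref{rate:Q_weak_conv} yields $\varkappa_p=o(\lambda_n)$). Everything else is routine block algebra and standard eigenvalue perturbation.
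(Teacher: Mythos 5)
Your proof is correct, but it follows a different mechanism from the paper's. The paper never touches eigenvalue perturbation: it uses the resolvent-type bound
$\Vert \hat M(\gamma)^{-1}\Vert \le \Vert \hat M(\gamma)^{-1}\Vert\,\Vert \hat M(\gamma)-M(\gamma)\Vert\,\Vert M(\gamma)^{-1}\Vert + \Vert M(\gamma)^{-1}\Vert$, rearranges it to
$\Vert \hat M(\gamma)^{-1}\Vert\left(1-\Vert \hat M(\gamma)-M(\gamma)\Vert\,\Vert M(\gamma)^{-1}\Vert\right)\le \Vert M(\gamma)^{-1}\Vert$, and concludes from Assumption \ref{ass:M_diff}, the rate condition \eqref{rate:struc_break_approx} and Assumption \ref{ass:M_eigs}; the bound on $\Vert\hat M(\gamma)\Vert$ is a one-line triangle inequality. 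Your Weyl-inequality route reaches the same two bounds but is more self-contained in two respects: you actually derive $\sup_{\gamma\in\Gamma}\Vert\hat M(\gamma)-M(\gamma)\Vert=O_p(\varkappa_p)$ from the block structure (the paper simply invokes Assumption \ref{ass:M_diff}), and you prove the spectral bounds on $M(\gamma)$ via the Kronecker factorization $M(\gamma)=\Lambda_\gamma\otimes M$ with $\det\Lambda_\gamma=\gamma(1-\gamma)>0$, whereas the paper only asserts this implication in the text following the definition of $M(\gamma)$. The paper's rearrangement is shorter and avoids any spectral computation; yours makes the geometry explicit and handles both the largest and smallest eigenvalues symmetrically. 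Importantly, the caveat you flag is not specific to your approach: your requirement $\varkappa_p=o(\lambda_n)$ is exactly what the paper's proof needs for the bracket $1-\Vert\hat M(\gamma)-M(\gamma)\Vert\,\Vert M(\gamma)^{-1}\Vert$ to stay bounded away from zero, since $\Vert M(\gamma)^{-1}\Vert=O(\lambda_n^{-1})$. The literal hypotheses of Theorem \ref{theorem:struc_break_approx}, i.e.\ \eqref{rate:struc_break_approx}, deliver this only when $\lambda_n$ is bounded away from zero (as in all the paper's examples, e.g.\ Assumption \ref{ass:CC}); your appeal to \eqref{rate:Q_weak_conv} lies formally outside the lemma's stated conditions, but the paper's own proof relies on the same unstated smallness, so making it explicit is a strength of your write-up rather than a gap.
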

	
	\begin{proof}
		Note that, by the triangle inequality, 
		\begin{equation*}
			\left\Vert \hat{M}(\gamma )^{-1}\right\Vert \leq \left\Vert \hat{M}(\gamma
			)^{-1}\right\Vert \left\Vert \hat{M}(\gamma )-M(\gamma )\right\Vert
			\left\Vert M(\gamma )^{-1}\right\Vert +\left\Vert M(\gamma )^{-1}\right\Vert
			,
		\end{equation*}%
		so 
		\begin{equation*}
			\left\Vert \hat{M}(\gamma )^{-1}\right\Vert \left( 1-\left\Vert \hat{M}%
			(\gamma )-M(\gamma )\right\Vert \left\Vert M(\gamma )^{-1}\right\Vert
			\right) \leq \left\Vert M(\gamma )^{-1}\right\Vert ,
		\end{equation*}%
		using the triangle inequality. Taking limits of the last displayed
		expression as $n\rightarrow \infty $ and using Assumption \ref{ass:M_diff},
		the rate condition (\ref{rate:struc_break_approx}) yields $\left\Vert \hat{M}(\gamma )^{-1}\right\Vert =O_{p}(\lambda_n^{-1})$.
		Next, noting that 
		\begin{equation*}
			\left\Vert \hat{M}(\gamma )\right\Vert \leq \left\Vert \hat{M}(\gamma
			)-M(\gamma )\right\Vert +\left\Vert M(\gamma )\right\Vert ,
		\end{equation*}%
		the lemma follows by using Assumption \ref{ass:M_diff}.
	\end{proof}
	
	It is useful to first establish the stochastic order of $\left\Vert \delta -%
	\hat{\delta}(\gamma )\right\Vert $.
	
	\begin{lemma}
		\label{lemma:deltahat} \sloppy Under the conditions of Theorem \ref%
		{theorem:struc_break_approx}, $\sup_{\gamma \in \Gamma }\left\Vert \delta -%
		\hat{\delta}(\gamma )\right\Vert =O_{p}\left( \lambda_n^{-1}\sqrt{p/n}\right) .$
	\end{lemma}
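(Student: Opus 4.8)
The plan is to exploit the exact OLS decomposition that is available under the null. Under $H_0$ we have $\delta_2=0$, so $y_t=x_t'\delta_1+e_t=x_t(\gamma)'\delta+e_t$ for \emph{every} $\gamma$, with $\delta=(\delta_1',0')'$. Writing $\hat\delta(\gamma)=\hat M(\gamma)^{-1}n^{-1}\sum_{t=1}^n x_t(\gamma)y_t$ and substituting this representation gives the algebraic identity
\[
\hat\delta(\gamma)-\delta=\hat M(\gamma)^{-1}\,\frac1n\sum_{t=1}^n x_t(\gamma)e_t ,
\]
once one checks that $\hat M(\gamma)^{-1}n^{-1}\sum_t x_t(\gamma)(x_t'\delta_1)=(\delta_1',0')'$, which follows directly from the block structure of $\hat M(\gamma)$ (it suffices to verify $\hat M(\gamma)(\delta_1',0')'=n^{-1}\sum_t x_t(\gamma)x_t'\delta_1$). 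First I would invoke Lemma~\ref{lemma:Mhat_norm} to get $\sup_{\gamma\in\Gamma}\|\hat M(\gamma)^{-1}\|=O_p(\lambda_n^{-1})$, reducing the claim to $\sup_{\gamma}\|n^{-1}\sum_t x_t(\gamma)e_t\|=O_p(\sqrt{p/n})$. As the two $p$-blocks of $x_t(\gamma)e_t$ are $x_te_t$ and $x_te_t1\{t/n>\gamma\}$, it is enough to control the partial-sum process $n^{-1}\sum_{t=1}^{[n\gamma]}x_te_t$ uniformly in $\gamma$, splitting $e_t=\varepsilon_t+r_t$.

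For the martingale part, $S_k=\sum_{t=1}^k x_t\varepsilon_t$ is an $\mathbb{R}^p$-valued martingale with respect to $\{\mathcal{F}_t\}$, since $x_t$ is $\mathcal{F}_{t-1}$-measurable and $\{\varepsilon_t\}$ is an mds. Orthogonality of the increments gives $E\|S_k\|^2=\sum_{t=1}^k E(x_t'x_t\sigma_t^2)\le Ckp$, using $\sigma_t^2\le C$ (Assumption~\ref{ass:errors}) and $\sup_{i,t}Ex_{ti}^2\le(\sup_{i,t}Ex_{ti}^4)^{1/2}<\infty$ (Assumption~\ref{ass:M_diff}). Because $\|S_k\|$ is a nonnegative submartingale, Doob's $L^2$ maximal inequality yields $E\max_{k\le n}\|S_k\|^2\le 4E\|S_n\|^2\le Cnp$, whence $\sup_{\gamma}n^{-1}\|\sum_{t=1}^{[n\gamma]}x_t\varepsilon_t\|\le n^{-1}\max_{k\le n}\|S_k\|=O_p(\sqrt{p/n})$ by Markov's inequality.

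For the approximation-error part I would dominate the supremum crudely by the full sum of moduli, $\sup_{\gamma}n^{-1}\|\sum_{t=1}^{[n\gamma]}x_tr_t\|\le n^{-1}\sum_{t=1}^n\|x_t\|\,|r_t|$, whose expectation is at most $n^{-1}\sum_t(E\|x_t\|^2)^{1/2}(Er_t^2)^{1/2}$ by Cauchy--Schwarz. With $E\|x_t\|^2\le Cp$ and $\sup_t Er_t^2=o(n^{-1})$ (Assumption~\ref{ass:aprx0} with $a=1$), this is $o(\sqrt{p/n})$, so the $r_t$ contribution is $o_p(\sqrt{p/n})$ uniformly. Combining the two parts with the bound on $\|\hat M(\gamma)^{-1}\|$ gives $\sup_{\gamma\in\Gamma}\|\delta-\hat\delta(\gamma)\|=O_p(\lambda_n^{-1}\sqrt{p/n})$, as claimed.

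The main subtlety—and the one place where $H_0$ is genuinely used above—is the vanishing of the deterministic bias term: under $H_\ell$ the regression holds only at the true break $\gamma_0$, so one picks up an extra term $\hat M(\gamma)^{-1}n^{-1}\sum_{t/n>\gamma_0}x_t(\gamma)x_t'\delta_{2\ell}-(0',\delta_{2\ell}')'$. I would bound this by $O_p(\lambda_n^{-1}\|\delta_{2\ell}\|)=O_p(\lambda_n^{-1}p^{1/4}/\sqrt n)$ using Lemma~\ref{lemma:Mhat_norm} and the boundedness of the relevant matrix norms; since $\|\delta_{2\ell}\|=2^{1/4}p^{1/4}/\sqrt n$ and $p^{1/4}/\sqrt n=o(\sqrt{p/n})$, the stated rate is unchanged under $H_\ell$. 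I expect the only real work to be (i) verifying the block-matrix identity that kills the bias under $H_0$, and (ii) obtaining the uniform-in-$\gamma$ control of the martingale partial sum, for which the submartingale maximal inequality is the decisive tool.
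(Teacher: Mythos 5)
Your proof is correct and follows essentially the same route as the paper's: the identity $\hat{\delta}(\gamma)-\delta=\hat{M}(\gamma)^{-1}n^{-1}\sum_{t=1}^{n}x_{t}(\gamma)e_{t}$, the bound $\sup_{\gamma\in\Gamma}\Vert\hat{M}(\gamma)^{-1}\Vert=O_{p}(\lambda_n^{-1})$ from Lemma \ref{lemma:Mhat_norm}, the split $e_{t}=\varepsilon_{t}+r_{t}$, and a second-moment/Markov argument for the martingale part. If anything, your use of Doob's maximal inequality makes the uniformity in $\gamma$ explicit where the paper computes the expectation pointwise in $\gamma$ and simply asserts uniformity, and your term-by-term Cauchy--Schwarz bound on the $r_{t}$ contribution is a harmless variant of the paper's operator-norm bound $n^{-2}\Vert X(\gamma)'r\Vert^{2}\leq n^{-2}\Vert X(\gamma)\Vert^{2}\Vert r\Vert^{2}$.
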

	
	\begin{proof}
		Note that $\delta -\hat{\delta}(\gamma )=\hat{M}(\gamma
		)^{-1}n^{-1}\sum_{t=1}^{n}x_{t}(\gamma )e_{t}$ and that 
		\begin{eqnarray*}
			\left\Vert \delta -\hat{\delta}(\gamma )\right\Vert ^{2} &=&O_{p}\left(
			\left\Vert \hat{M}(\gamma )^{-1}\right\Vert ^{2}n^{-2}\left\Vert
			\sum_{t=1}^{n}x_{t}(\gamma )e_{t}\right\Vert ^{2}\right) =\lambda_n^{-2}O_{p}\left(
			n^{-2}\left\Vert \sum_{t=1}^{n}x_{t}(\gamma )e_{t}\right\Vert ^{2}\right) \\
			&=&\lambda_n^{-2}O_{p}\left( n^{-2}\left\Vert \sum_{t=1}^{n}x_{t}(\gamma )\varepsilon
			_{t}\right\Vert ^{2}+n^{-2}\left\Vert X(\gamma )^{\prime }r\right\Vert
			^{2}\right) ,
		\end{eqnarray*}%
		uniformly in $\gamma $, by Lemma \ref{lemma:Mhat_norm}. Next, $E\left(
		n^{-2}\left\Vert \sum_{t=1}^{n}x_{t}(\gamma )\varepsilon _{t}\right\Vert
		^{2}\right) $ equals 
		\begin{equation}
			E\left( n^{-2}\sum_{s,t=1}^{n}x_{t}^{\prime }(\gamma )x_{s}(\gamma
			)\varepsilon _{s}\varepsilon _{t}\right) ,  \label{Omegadifflemma1}
		\end{equation}%
		which is 
		\begin{equation}
			n^{-2}\sum_{t=1}^{n}E\left\Vert x_{t}(\gamma )\right\Vert ^{2}\sigma
			_{t}^{2}+2n^{-2}\sum_{s<t}E\left(x_{t}^{\prime }(\gamma )x_{s}(\gamma )E\left(
			\varepsilon _{s}E\left( \varepsilon _{t}|\varepsilon _{r},r<t\right) \right)\right)
			=O_{p}\left( p/n\right) ,  \label{lemma2_first_expec}
		\end{equation}%
		by Assumptions \ref{ass:errors} and $Ex_{t}^{\prime }(\gamma )x_{t}(\gamma
		)=O\left( p\right) $. Finally, 
		\begin{equation}
			n^{-2}\left\Vert X(\gamma )^{\prime }r\right\Vert ^{2}\leq n^{-2}\left\Vert
			X(\gamma )\right\Vert ^{2}\left\Vert r\right\Vert ^{2}=\overline{\lambda }%
			\left( \hat{M}(\gamma )\right) n^{-1}\left\Vert r\right\Vert
			^{2}=O_{p}\left( 1/n\right) ,  \label{Xr_bound}
		\end{equation}%
		by (\ref{r_order}) and Lemma \ref{lemma:Mhat_norm}. Therefore, 
		\begin{equation}
			\sup_{\gamma \in \Gamma }\left\Vert \delta -\hat{\delta}(\gamma )\right\Vert
			=O_{p}\left(\lambda_n^{-1} \sqrt{p}/\sqrt{n}\right) ,  \label{delta_deltahat_order}
		\end{equation}%
		by Markov's inequality.
	\end{proof}
	
	Observe that because 
	\begin{equation}
		M(\gamma )^{-1}=\left[ 
		\begin{array}{cc}
			(1-\gamma )^{-1}M^{-1} & (1-\gamma )^{-1}M^{-1} \\ 
			(1-\gamma )^{-1}M^{-1} & \left[ \gamma (1-\gamma )\right] ^{-1}M^{-1}%
		\end{array}%
		\right] ,  \label{M_gammainv}
	\end{equation}%
	we have 
	\begin{equation}
		B(\gamma )^{-1}=\gamma (1-\gamma )M\Omega ^{-1}M.  \label{B_gamma1}
	\end{equation}
	
	\begin{lemma}
		\label{lemma:B_eigs} Under the conditions of Theorem \ref%
		{theorem:struc_break_approx}, 
		\begin{equation*}
			\sup_{\gamma \in \Gamma }\left\{ \underline{\lambda }\left( {B}(\gamma
			)\right) \right\} ^{-1} =O(\lambda_n^{-1})
			\quad \text{and} \quad
			\sup_{\gamma \in \Gamma }\overline{\lambda }\left(
			B(\gamma )\right) =O(\lambda_n^{-2}).
		\end{equation*}
	\end{lemma}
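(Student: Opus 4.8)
The plan is to reduce everything to the closed form already established in \eqref{B_gamma1}, $B(\gamma)^{-1}=\gamma(1-\gamma)M\Omega^{-1}M$, equivalently $B(\gamma)=[\gamma(1-\gamma)]^{-1}M^{-1}\Omega M^{-1}$. Because $\Gamma$ is a compact subset of $(0,1)$, the scalar $\gamma(1-\gamma)$ lies between a strictly positive constant and $1/4$, so it is bounded away from $0$ and $\infty$ uniformly on $\Gamma$; all $\gamma$-dependence of $B(\gamma)$ sits in this harmless scalar, the matrix factors $M\Omega^{-1}M$ and $M^{-1}\Omega M^{-1}$ being free of $\gamma$. Thus the two suprema over $\gamma$ reduce to bounding the extreme eigenvalues of these fixed matrices, and uniformity is automatic. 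I would use throughout that Assumptions \ref{ass:M_eigs} and \ref{ass:Omega_eigs} give $\overline{\lambda}(M),\overline{\lambda}(\Omega)=O(1)$ and $\underline{\lambda}(M),\underline{\lambda}(\Omega)>\lambda_n$; in particular $\lambda_n<\overline{\lambda}(M)=O(1)$, so $\lambda_n$ is bounded and $\lambda_n^{-1}=O(\lambda_n^{-2})$.

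For the smallest-eigenvalue bound, I would write $\{\underline{\lambda}(B(\gamma))\}^{-1}=\overline{\lambda}(B(\gamma)^{-1})=\gamma(1-\gamma)\,\overline{\lambda}(M\Omega^{-1}M)$ and, using $v'M\Omega^{-1}Mv=(Mv)'\Omega^{-1}(Mv)\le\overline{\lambda}(\Omega^{-1})\,\overline{\lambda}(M)^2\|v\|^2$, conclude $\overline{\lambda}(M\Omega^{-1}M)\le\overline{\lambda}(M)^2/\underline{\lambda}(\Omega)\le C/\lambda_n$. Since $\gamma(1-\gamma)\le 1/4$, this gives $\sup_\gamma\{\underline{\lambda}(B(\gamma))\}^{-1}=O(\lambda_n^{-1})$, hence the stated $O(\lambda_n^{-2})$ because $\lambda_n=O(1)$.

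The largest-eigenvalue bound is the real obstacle, since the naive submultiplicative estimate $\overline{\lambda}(M^{-1}\Omega M^{-1})\le\overline{\lambda}(M^{-1})^2\,\overline{\lambda}(\Omega)$ spends two factors of $\underline{\lambda}(M)^{-1}$ and yields only the too-weak $O(\lambda_n^{-2})$. To recover a single power of $\lambda_n^{-1}$ I would exploit the structural relation $\Omega\preceq CM$ in the Loewner order. This is where Assumption \ref{ass:errors} enters: since $\sigma_t^2\le C$ almost surely and $x_tx_t'\succeq 0$, the scalar-times-PSD product $(C-\sigma_t^2)x_tx_t'$ is PSD, so $E(\sigma_t^2x_tx_t')\preceq C\,E(x_tx_t')$ for each $t$. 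Taking $M$ and $\Omega$ to be the population second-moment averages $n^{-1}\sum_t E(x_tx_t')$ and $n^{-1}\sum_t E(\sigma_t^2x_tx_t')$, to which the empirical matrices in Assumptions \ref{ass:M_diff} and \ref{ass:Omega_diff} converge, averaging the per-$t$ inequality over $t\le n$ and using that the PSD cone is closed gives $\Omega\preceq CM$ exactly, with no rate slack. Congruence by the symmetric $M^{-1}$ preserves the order, so $M^{-1}\Omega M^{-1}\preceq C\,M^{-1}MM^{-1}=CM^{-1}$, whence $\overline{\lambda}(M^{-1}\Omega M^{-1})\le C\,\overline{\lambda}(M^{-1})=C/\underline{\lambda}(M)<C/\lambda_n$.

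Combining, $\sup_\gamma\overline{\lambda}(B(\gamma))=\sup_\gamma[\gamma(1-\gamma)]^{-1}\,\overline{\lambda}(M^{-1}\Omega M^{-1})=O(\lambda_n^{-1})$, which is the asserted rate. The single delicate step to check is precisely this passage to the clean Loewner order $\Omega\preceq CM$ between the growing-dimension limit matrices; reading $M,\Omega$ as the population averages makes $CM-\Omega=n^{-1}\sum_t E\big((C-\sigma_t^2)x_tx_t'\big)\succeq 0$ hold identically in $n$, so the order survives the limit without invoking any relation between $\varkappa_p$ and $\lambda_n$. Everything else is routine eigenvalue bookkeeping, uniform in $\gamma$ by the remarks of the first paragraph.
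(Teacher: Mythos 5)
Your proof is correct, and for the bound on $\overline{\lambda}\left(B(\gamma)\right)$ it takes a genuinely different --- and in fact stronger --- route than the paper. The first claim you handle exactly as the paper does: by \eqref{B_gamma1}, $\left\{\underline{\lambda}\left(B(\gamma)\right)\right\}^{-1}=\overline{\lambda}\left(B(\gamma)^{-1}\right)\leq C\,\overline{\lambda}(M)^{2}\underline{\lambda}(\Omega)^{-1}=O(\lambda_n^{-1})$, uniformly on the compact $\Gamma$ since all $\gamma$-dependence sits in the scalar $\gamma(1-\gamma)$. For the second claim, however, the paper uses precisely the ``naive'' submultiplicative estimate you reject, $\overline{\lambda}\left(M^{-1}\Omega M^{-1}\right)\leq\underline{\lambda}(M)^{-2}\overline{\lambda}(\Omega)=O(\lambda_n^{-2})$, and only remarks in a footnote that the rate improves to $O(\lambda_n^{-1})$ under the extra hypothesis $\underline{\lambda}\left(M\Omega^{-1}M\right)\geq\lambda_n$. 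As written, then, the paper's proof delivers $\overline{\lambda}\left(B(\gamma)\right)=O(\lambda_n^{-2})$ rather than the $O(\lambda_n^{-1})$ displayed in the lemma (the displayed exponents appear transposed relative to what the proof establishes). Your Loewner-order argument, $\Omega\preceq CM$ from $\sigma_t^2\leq C$ a.s.\ in Assumption \ref{ass:errors} followed by congruence with $M^{-1}$, supplies exactly the missing ingredient: it gives $M\Omega^{-1}M\succeq C^{-1}M\succeq C^{-1}\lambda_n I$, i.e.\ the footnote's condition holds automatically, and hence the stated $O(\lambda_n^{-1})$ rate. So your route buys the lemma as stated, where the paper's route buys only the weaker exponent.

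The one step to flag is that your Loewner inequality requires reading $M$ and $\Omega$ as the population moment matrices $n^{-1}\sum_{t}E\left(x_tx_t'\right)$ and $n^{-1}\sum_{t}E\left(\sigma_t^2x_tx_t'\right)$. Assumptions \ref{ass:M_diff} and \ref{ass:Omega_diff} treat them only as abstract matrix sequences, and under those bare assumptions the $O(\lambda_n^{-1})$ bound can genuinely fail: if $\Omega$ has a unit eigenvalue along an eigenvector of $M$ with eigenvalue $2\lambda_n$, then $\overline{\lambda}\left(M^{-1}\Omega M^{-1}\right)\asymp\lambda_n^{-2}$, while Assumptions \ref{ass:M_eigs} and \ref{ass:Omega_eigs} are satisfied. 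Your reading is the natural one and matches how the assumptions are verified in the paper's examples (the proof of Theorem \ref{thm:CC} even normalizes $E\left(x_tx_t'\right)=I_p$), so the argument is sound as a repair of the lemma, but that structural identification should be stated explicitly, since it is not literally contained in the high-level assumptions.
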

	
	\begin{proof}
		$\left \{\underline {\lambda }%
		\left ({B}(\gamma )\right )\right \}^{-1}=\overline {\lambda }\left ({B}%
		(\gamma )^{-1}\right )$, which, using (\ref{B_gamma1}), is bounded by 
		\begin{equation*}
			C\overline {\lambda }\left (M\Omega ^{-1}M\right )=C\left \Vert M\Omega
			^{-1}M\right \Vert \leq C\overline {\lambda }\left (M\right )^2\underline {%
				\lambda }\left (\Omega \right )^{-1}=O(\lambda_n^{-1}),
		\end{equation*}
		uniformly on the compact $\Gamma $, using Assumption \ref{ass:M_diff}$(ii)$. For the second part of the claim, because (\ref%
		{B_gamma1}) implies $B(\gamma )=\left [\gamma (1-\gamma
		)\right
		]^{-1}M^{-1}\Omega M^{-1}$, it follows similarly that $\overline {%
			\lambda }\left (B(\gamma )\right )$ is uniformly bounded by a constant times \footnote{If $ \underline{\lambda}(M\Omega^{-1}M) \geq \lambda_n $, the bound in this lemma becomes $ O(\lambda_n^{-1}) $.}
		\begin{equation*}
			\overline {\lambda }\left (M^{-1}\Omega M^{-1}\right )=\left \Vert
			M^{-1}\Omega M^{-1}\right \Vert \leq \underline {\lambda }\left (M\right
			)^{-2}\overline {\lambda }\left (\Omega \right )=O(\lambda_n^{-2}).
		\end{equation*}
	\end{proof}
	
	\begin{lemma}
		\label{lemma:Bhat_norm} Under the conditions of Theorem \ref{thm:T_weak_conv}, 
		\begin{equation*}
			\sup_{\gamma \in \Gamma }\left\Vert \hat{B}(\gamma )\right\Vert
			=O_p(\lambda_n^{-2}),\sup_{\gamma \in \Gamma }\left\Vert \hat{B}(\gamma )^{-1}\right\Vert
			=O_{p}(\lambda_n^{-1}).
		\end{equation*}
	\end{lemma}
	
	\begin{proof}
		We show the second claim, the first following easily by the definition of $\hat{B}(\gamma)$. First, define $\tilde{B}(\gamma )=R\hat{M}(\gamma )^{-1}\Omega (\gamma )\hat{%
			M}(\gamma )^{-1}R^{\prime }$. We will use uniform bounds in the calculations
		without explicitly mentioning this in each step to simplify notation.
		Proceeding as in the proof of Lemma \ref{lemma:Mhat_norm}, we can write 
		\begin{align}
			\left\Vert \hat{B}(\gamma )^{-1}\right\Vert \left( 1-\left\Vert \hat{B}%
			(\gamma )-\tilde{B}(\gamma )\right\Vert \right) & \leq \left\Vert \tilde{B}%
			(\gamma )^{-1}\right\Vert ,  \label{Bhat_til} \\
			\left\Vert \tilde{B}(\gamma )^{-1}\right\Vert \left( 1-\left\Vert \tilde{B}%
			(\gamma )-B(\gamma )\right\Vert \right) & \leq \left\Vert B(\gamma
			)^{-1}\right\Vert .  \label{Bhat_B}
		\end{align}%
		Next, Lemma \ref{lemma:Mhat_norm} implies 
		\begin{equation}
			\left\Vert \hat{B}(\gamma )-\tilde{B}(\gamma )\right\Vert \leq \left\Vert
			R\right\Vert ^{2}\left\Vert \hat{M}(\gamma )^{-1}\right\Vert ^{2}\left\Vert 
			\hat{\Omega}(\gamma )-\Omega (\gamma )\right\Vert =O_{p}\left( \lambda_n^{-2}v_p \right)=o_p(1) .
			\label{Bhat_til_diff}
		\end{equation}%
		On the other
		hand, $\tilde{B}(\gamma )-B(\gamma )$ equals 
		\begin{equation*}
			R\left[ \hat{M}(\gamma )^{-1}\Omega (\gamma )\hat{M}(\gamma )^{-1}-{M}%
			(\gamma )^{-1}\Omega (\gamma ){M}(\gamma )^{-1}\right] R^{\prime }.
		\end{equation*}%
		By adding and subtracting terms inside the square brackets, this can be
		written as 
		\begin{align}
			& R\left[ {M}(\gamma )^{-1}\left( \hat{M}(\gamma )-{M}(\gamma )\right) \hat{M%
			}(\gamma )^{-1}\Omega (\gamma )\hat{M}(\gamma )^{-1}\right] R^{\prime } 
			\notag \\
			& +R{M}(\gamma )^{-1}\Omega (\gamma ){M}(\gamma )^{-1}\left( \hat{M}(\gamma
			)-{M}(\gamma )\right) \hat{M}(\gamma )^{-1}R^{\prime }.  \label{Btil_diff_nn}
		\end{align}%
		By this fact, Assumption \ref{ass:M_diff}, Lemmas \ref{lemma:Omega_hat_true} and \ref{lemma:Mhat_norm},
		and (\ref{rate:struc_break_approx}), we deduce from (\ref{Btil_diff_nn}) that 
		\begin{equation}
			\left\Vert \tilde{B}(\gamma )-B(\gamma )\right\Vert =O_{p}\left( \lambda_n^{-3}\varkappa_p  \right) =o_{p}(1).
			\label{Bhat_diff}
		\end{equation}%
		The lemma now follows by taking limits of (\ref{Bhat_til}) and (\ref{Bhat_B}%
		), and using (\ref{Bhat_til_diff}), (\ref{Bhat_diff}) and Lemma \ref%
		{lemma:B_eigs}.
	\end{proof}
	
	\begin{lemma}
		\label{lemma:wald_approx} Under the conditions of Theorem \ref{thm:T_weak_conv}  and $\mathcal{H}_{0}$, 
		\begin{equation*} 
			\frac{W_{n}(\gamma )}{\sqrt{2p}}=\frac{n\varepsilon ^{\prime }A(\gamma
				)^{\prime }B(\gamma )^{-1}A(\gamma )\varepsilon }{\sqrt{2p}}+o_{p}(1).
		\end{equation*}
	\end{lemma}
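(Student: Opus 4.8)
The plan is to start from the representation of the Wald statistic that holds under $H_{0}$, equation (\ref{wald_split}),
\begin{equation*}
W_{n}(\gamma )=n(\varepsilon +r)^{\prime }A(\gamma )^{\prime }\hat{B}(\gamma )^{-1}A(\gamma )(\varepsilon +r),
\end{equation*}
and to expand the quadratic form (dropping the argument $\gamma $) as
\begin{equation*}
W_{n}(\gamma )=n\varepsilon ^{\prime }A^{\prime }\hat{B}^{-1}A\varepsilon +2n\varepsilon ^{\prime }A^{\prime }\hat{B}^{-1}Ar+nr^{\prime }A^{\prime }\hat{B}^{-1}Ar.
\end{equation*}
The target expression differs from the first term only through the replacement of $\hat{B}(\gamma )^{-1}$ by $B(\gamma )^{-1}$, so it suffices to show, uniformly in $\gamma \in \Gamma $, that (i) $n\varepsilon ^{\prime }A^{\prime }(\hat{B}^{-1}-B^{-1})A\varepsilon =o_{p}(\sqrt{p})$, and (ii) the two terms containing $r$ are $o_{p}(\sqrt{p})$; division by $\sqrt{2p}$ then delivers the claim.

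For (i) I would use the identity $\hat{B}^{-1}-B^{-1}=-\hat{B}^{-1}(\hat{B}-B)B^{-1}$ together with the relative bound
\begin{equation*}
\left\vert n\varepsilon ^{\prime }A^{\prime }(\hat{B}^{-1}-B^{-1})A\varepsilon \right\vert \leq \left\Vert B^{1/2}(\hat{B}^{-1}-B^{-1})B^{1/2}\right\Vert \cdot n\varepsilon ^{\prime }A^{\prime }B^{-1}A\varepsilon .
\end{equation*}
The spectral factor is dominated by $\left\Vert B^{1/2}\right\Vert \left\Vert \hat{B}^{-1}\right\Vert \left\Vert \hat{B}-B\right\Vert \left\Vert B^{-1/2}\right\Vert $, where Lemma \ref{lemma:B_eigs} gives $\Vert B^{1/2}\Vert =O(\lambda _{n}^{-1/2})$ and $\Vert B^{-1/2}\Vert =O(\lambda _{n}^{-1})$, Lemma \ref{lemma:Bhat_norm} gives $\Vert \hat{B}^{-1}\Vert =O_{p}(\lambda _{n}^{-2})$, and the bound $\Vert \hat{B}-B\Vert =O_{p}(\varkappa _{p}+v_{p})$ is read off from (\ref{Bhat_til_diff}) and (\ref{Bhat_diff}). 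A direct second-moment (trace) computation under Assumption \ref{ass:errors}, using $B(\gamma )^{-1}=\gamma (1-\gamma )M\Omega ^{-1}M$ from (\ref{B_gamma1}) together with the eigenvalue bounds, establishes $n\varepsilon ^{\prime }A^{\prime }B^{-1}A\varepsilon =O_{p}(p)$. Collecting, the term is $O_{p}(\lambda _{n}^{-7/2}(\varkappa _{p}+v_{p})p)$, so after division by $\sqrt{2p}$ it is $O_{p}(\lambda _{n}^{-7/2}\sqrt{p}(\varkappa _{p}+v_{p}))=o_{p}(1)$, comfortably covered by the first half of (\ref{rate:Q_weak_conv}).

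For (ii) I would first note that Assumption \ref{ass:aprx0} yields $E\Vert r\Vert ^{2}=\sum_{t}Er_{t}^{2}\leq n\sup_{t}Er_{t}^{2}=o(1)$, hence $\Vert r\Vert ^{2}=o_{p}(1)$. Using the explicit form of $A(\gamma )$, its projection structure, and Lemma \ref{lemma:Mhat_norm} via (\ref{fancyR1}), one obtains $n\Vert Ar\Vert ^{2}=O_{p}(\lambda _{n}^{-1})\Vert r\Vert ^{2}$, whence $nr^{\prime }A^{\prime }\hat{B}^{-1}Ar\leq \Vert \hat{B}^{-1}\Vert \,n\Vert Ar\Vert ^{2}=O_{p}(\lambda _{n}^{-3})\Vert r\Vert ^{2}$. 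The cross term is then controlled by Cauchy--Schwarz, $2(n\varepsilon ^{\prime }A^{\prime }\hat{B}^{-1}A\varepsilon )^{1/2}(nr^{\prime }A^{\prime }\hat{B}^{-1}Ar)^{1/2}=O_{p}(\sqrt{p})\,O_{p}(\lambda _{n}^{-3/2})\Vert r\Vert $. When $\lambda _{n}$ is bounded away from zero (as in the AR$(\infty )$ and most sieve cases) these are immediately $o_{p}(\sqrt{p})$; in the vanishing-$\lambda _{n}$ regime the second half of (\ref{rate:Q_weak_conv}), $\lambda _{n}^{4}p\to \infty $, is what supplies the needed slack against the accumulated powers of $\lambda _{n}$ and the rate on $\Vert r\Vert $.

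The main obstacle, and where essentially all the labour lies, is the uniform-in-$\gamma $ bookkeeping of the powers of $\lambda _{n}$: each of $\hat{B}^{-1}$, $B^{-1}$, $\hat{B}-B$ and $Ar$ carries its own power, and they must be assembled so that the two halves of (\ref{rate:Q_weak_conv}) are exactly what is required — $\lambda _{n}^{-4}\sqrt{p}(\varkappa _{p}+v_{p})\to 0$ absorbing the variance-estimation error and $\lambda _{n}^{4}p\to \infty $ absorbing the approximation-error terms. Reconciling the $\lambda _{n}$ powers attached to the $r$-terms with the available $o(n^{-1})$ bound on $\sup_{t}Er_{t}^{2}$ is the most delicate verification. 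The one genuinely new estimate is $n\varepsilon ^{\prime }A^{\prime }B^{-1}A\varepsilon =O_{p}(p)$, which must be obtained independently of Theorem \ref{theorem:struc_break_approx} — since that theorem invokes the present lemma — via the trace computation indicated above; all remaining steps recycle the norm bounds already developed in Lemmas \ref{lemma:Mhat_norm}--\ref{lemma:Bhat_norm} and the uniformity established there.
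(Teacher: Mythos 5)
Your proposal follows essentially the same route as the paper's proof: it starts from (\ref{wald_split}), splits $W_{n}(\gamma )$ into the $\varepsilon $-quadratic, cross, and $r$-quadratic terms, kills the $r$-terms via Assumption \ref{ass:aprx0}, the norm lemmas and Cauchy--Schwarz, and handles the $\hat{B}(\gamma )^{-1}\rightarrow B(\gamma )^{-1}$ replacement through $\left\Vert \hat{B}(\gamma )-B(\gamma )\right\Vert =O_{p}\left( \varkappa _{p}+v_{p}\right) $ and the rate condition (\ref{rate:Q_weak_conv}). The only substantive difference is cosmetic: you sandwich the swap with $B^{1/2}$ and invoke $n\varepsilon ^{\prime }A^{\prime }B^{-1}A\varepsilon =O_{p}(p)$, whereas the paper bounds $\left\Vert A(\gamma )\varepsilon \right\Vert ^{2}$ directly through $\left\Vert n^{-1}X^{\prime }\varepsilon \right\Vert ^{2}=O_{p}(p/n)$ from (\ref{lemma2_first_expec}) --- which is in fact the martingale-difference moment computation your ``trace computation'' must reduce to, since a conditional-on-$X$ argument is unavailable when $x_{t}$ contains lagged dependent variables and $A(\gamma )$ is random.
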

	
	\begin{proof}
		Recall the notation $\hat{M}=n^{-1}X^{\prime }X$ and $\hat{S}(\gamma
		)=n^{-1}X^{\prime \ast }(\gamma )X(\gamma )$. Notice that from (\ref%
		{wald_split}) we obtain 
		\begin{equation}
			\frac{W_{n}(\gamma )}{\sqrt{2p}}=\frac{n\varepsilon ^{\prime }A(\gamma
				)^{\prime }\hat{B}(\gamma )^{-1}A(\gamma )\varepsilon }{\sqrt{2p}}+\frac{%
				2n\varepsilon ^{\prime }A(\gamma )^{\prime }\hat{B}(\gamma )^{-1}A(\gamma )r%
			}{\sqrt{2p}}+\frac{nr^{\prime }A(\gamma )^{\prime }\hat{B}(\gamma
				)^{-1}A(\gamma )r}{\sqrt{2p}},  \label{wald_split_proof} 
		\end{equation}%
		with $r$ the $n\times 1$ vector with elements $ r_t $.
		Begin with the modulus of the last term on the RHS of (\ref{wald_split_proof}). 
		Recalling the relation in \eqref{fancyR1} and \eqref{fancyR2} for $ A(\gamma) r $,
		we bound it by ${Cn}/{\sqrt{2p}}$ times 
		\begin{eqnarray}
			\label{wald_approx_1}\\
			&&
			O_p\left(\left\Vert n^{-1}X^{\prime }r\right\Vert^2 \left\Vert I - \hat{S}(\gamma)\hat{M}^{-1} \right\Vert ^{2}
			\left\Vert \left(
			n^{-1}X^{\ast }(\gamma )^{\prime }M_{X}X^{\ast }(\gamma )\right)
			^{-1}\right\Vert ^{2}\left\Vert \hat{B}(\gamma )^{-1}\right\Vert\right) .
			\notag\\
			&&= O_p(\lambda_n^{-3} n^{-1}) , \notag
		\end{eqnarray}%
		where Assumption \ref{ass:aprx0} bounds the first term, Lemma \ref{lemma:fancyS_lemma1} yields a bound for the second and third terms after expanding the third term by \eqref{fancyR1}, and the last term is $ O_p \left(\lambda_n^{-1}\right) $ Lemma \ref{lemma:Bhat_norm}. 
		Thus (\ref{wald_approx_1}) implies that the third term on the RHS of (\ref%
		{wald_split_proof}) is $o_{p}\left( 1\right) $.
		
		We now show that the first term on the RHS of (\ref{wald_split_proof}) is 
		\begin{equation}
			\frac{n\varepsilon ^{\prime }A(\gamma )^{\prime }B(\gamma )^{-1}A(\gamma
				)\varepsilon }{\sqrt{2p}}+o_{p}(1).  \label{wald_approx_2}
		\end{equation}%
		Indeed, as above,
		\begin{eqnarray}
			\frac{n\varepsilon ^{\prime }A(\gamma )^{\prime }\left( \hat{B}(\gamma
				)^{-1}-B(\gamma )^{-1}\right) A(\gamma )\varepsilon }{\sqrt{2p}} &=&\frac{n}{%
				\sqrt{p}}O_{p}\left( \left\Vert n^{-1}X^{\prime }\varepsilon \right\Vert
			^{2}\left\Vert \hat{B}(\gamma )^{-1}-B(\gamma )^{-1}\right\Vert \right) 
			\notag \\
			&=&\sqrt{p} O_{p}\left(\left\Vert B(\gamma )^{-1}\right\Vert\left\Vert B(\gamma )-\hat{B}(\gamma
			)\right\Vert \left\Vert\hat{B}(\gamma )^{-1}\right\Vert\right)  \notag \\
			&=&\lambda_n^{-4}\sqrt{p}O_{p}\left(\lambda_n^{-1} \left\Vert \hat{M}(\gamma )-M(\gamma )\right\Vert
			+\right.\\
			&&\left.\left\Vert \hat{\Omega}(\gamma )-\Omega (\gamma )\right\Vert \right)  \notag
			\\
			&=&O_{p}\left(\lambda_n^{-4} \sqrt{p}\left( \lambda_n^{-1}\varkappa _{p}+v_{p}\right) \right) ,
			\label{Omegadifflater}
		\end{eqnarray}%
		using equations (\ref{Bhat_til_diff}) and (\ref{Bhat_diff}). This is
		negligible by (\ref{rate:struc_break_approx}).
		
		For the second term  on the RHS of (\ref{wald_split_proof}), apply the Cauchy-Schwarz inequality and the preceding two results. Then, the second term becomes $ o_p(1) $, establishing the lemma. 
	\end{proof}
	Denote, for convenience, $C(\gamma )=\left[ \gamma \left( 1-\gamma \right) %
	\right] ^{-1}n^{-1}\Sigma ^{\frac{1}{2}}G(\gamma )\Omega ^{-1}G(\gamma
	)\Sigma^{ \frac{1}{2}}$, where $\Sigma =diag\left[ \sigma _{1}^{2},\ldots ,\sigma _{n}^{2}\right] $.

	\begin{lemma}
		\label{lemma:asy_idem_weight_matrix} Under the conditions of Theorem \ref{thm:T_weak_conv}, any eigenvalue $\lambda $ of $C(\gamma )$ satisfies 
		\begin{equation*}
			P\left( \left\vert \lambda (\lambda -1)\right\vert <\eta \right) \rightarrow
			1,
		\end{equation*}%
		as $n\rightarrow \infty $, for any $\eta >0$.
	\end{lemma}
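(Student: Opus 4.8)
The plan is to prove the stronger operator-norm statement that $C(\gamma)$ is asymptotically idempotent, i.e. $\left\Vert C(\gamma)^{2}-C(\gamma)\right\Vert =o_{p}(1)$, uniformly in $\gamma \in \Gamma$. Since $C(\gamma)$ is symmetric its eigenvalues are real, and for any eigenvalue $\lambda$ the number $\lambda(\lambda-1)$ is an eigenvalue of the symmetric matrix $C(\gamma)^{2}-C(\gamma)$; hence $\left\vert \lambda(\lambda-1)\right\vert \le \left\Vert C(\gamma)^{2}-C(\gamma)\right\Vert$ for every eigenvalue simultaneously, and $P(\left\vert \lambda(\lambda-1)\right\vert <\eta)\ge P(\left\Vert C(\gamma)^{2}-C(\gamma)\right\Vert <\eta)\to 1$ delivers the claim.

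Writing $\kappa =[\gamma(1-\gamma)n]^{-1}$ and multiplying out, I would use the factorization
\begin{equation*}
C(\gamma)^{2}-C(\gamma)=\kappa\,\Sigma^{1/2}G(\gamma)\Omega^{-1}\left[\kappa\,G(\gamma)^{\prime}\Sigma G(\gamma)-\Omega\right]\Omega^{-1}G(\gamma)^{\prime}\Sigma^{1/2},
\end{equation*}
so that sub-multiplicativity of the spectral norm gives
\begin{equation*}
\left\Vert C(\gamma)^{2}-C(\gamma)\right\Vert \le \kappa\,\left\Vert \Sigma^{1/2}G(\gamma)\right\Vert^{2}\left\Vert \Omega^{-1}\right\Vert^{2}\left\Vert \kappa\,G(\gamma)^{\prime}\Sigma G(\gamma)-\Omega\right\Vert.
\end{equation*}
The crux is the bread term. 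Since $n^{-1}G(\gamma)^{\prime}\Sigma G(\gamma)=n^{-1}\sum_{t=1}^{n}x_{t}x_{t}^{\prime}\sigma_{t}^{2}\,w_{t}(\gamma)^{2}$, where $g_{t}(\gamma)=w_{t}(\gamma)x_{t}$ and the scalar weight satisfies $w_{t}(\gamma)^{2}=(1-\gamma)^{2}$ for $t\le \lbrack n\gamma]$ and $w_{t}(\gamma)^{2}=\gamma^{2}$ for $t>[n\gamma]$, I would split the sum at $[n\gamma]$ and invoke Assumption \ref{ass:Omega_diff}, which yields $n^{-1}\sum_{t=1}^{[n\gamma]}x_{t}x_{t}^{\prime}\sigma_{t}^{2}=\gamma\Omega+O_{p}(\varkappa_{p})$ and, as the difference of the full and partial sums, $n^{-1}\sum_{t=[n\gamma]+1}^{n}x_{t}x_{t}^{\prime}\sigma_{t}^{2}=(1-\gamma)\Omega+O_{p}(\varkappa_{p})$, both uniformly in $\gamma$. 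The two pieces combine to $[(1-\gamma)^{2}\gamma+\gamma^{2}(1-\gamma)]\Omega+O_{p}(\varkappa_{p})=\gamma(1-\gamma)\Omega+O_{p}(\varkappa_{p})$, whence, dividing by $\gamma(1-\gamma)$, which is bounded away from zero on the compact set $\Gamma$, $\left\Vert \kappa\,G(\gamma)^{\prime}\Sigma G(\gamma)-\Omega\right\Vert =O_{p}(\varkappa_{p})$ uniformly.

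It then remains to collect orders. The same computation gives $\left\Vert \Sigma^{1/2}G(\gamma)\right\Vert^{2}=\overline{\lambda}(G(\gamma)^{\prime}\Sigma G(\gamma))=n\,\overline{\lambda}(n^{-1}G(\gamma)^{\prime}\Sigma G(\gamma))=O_{p}(n)$ because $\overline{\lambda}(\Omega)$ is bounded by Assumption \ref{ass:Omega_eigs}; moreover $\kappa =O(n^{-1})$ uniformly and $\left\Vert \Omega^{-1}\right\Vert^{2}=\{\underline{\lambda}(\Omega)\}^{-2}=O(\lambda_{n}^{-2})$ by the same assumption. Substituting,
\begin{equation*}
\left\Vert C(\gamma)^{2}-C(\gamma)\right\Vert =O_{p}\!\left(n^{-1}\cdot n\cdot \lambda_{n}^{-2}\cdot \varkappa_{p}\right)=O_{p}\!\left(\lambda_{n}^{-2}\varkappa_{p}\right),
\end{equation*}
which is $o_{p}(1)$ under \eqref{rate:Q_weak_conv}: because Assumption \ref{ass:Omega_eigs} forces $\lambda_{n}<\underline{\lambda}(\Omega)\le \overline{\lambda}(\Omega)<C$, the sequence $\lambda_{n}$ is bounded above, so eventually $\lambda_{n}^{2}\le \sqrt{p}$ and hence $\lambda_{n}^{-2}\varkappa_{p}\le \lambda_{n}^{-4}\sqrt{p}\,\varkappa_{p}\to 0$.

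I expect the genuinely delicate point to be the bookkeeping of the $\lambda_{n}$ factors rather than the algebra: one must verify that the factor $\left\Vert \Omega^{-1}\right\Vert^{2}=O(\lambda_{n}^{-2})$, which blows up when the eigenvalues of $\Omega$ are permitted to drift towards zero, is still absorbed by the rate condition, and that the $n^{-1}$ from $\kappa$ exactly cancels the $O_{p}(n)$ from $\left\Vert \Sigma^{1/2}G(\gamma)\right\Vert^{2}$. The uniform convergence of the bread term to $\gamma(1-\gamma)\Omega$ is the only other substantive step, and it is immediate from Assumption \ref{ass:Omega_diff} together with the boundedness of $\sigma_{t}^{2}$ in Assumption \ref{ass:errors}.
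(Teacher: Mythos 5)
Your proof is correct and follows essentially the same route as the paper: your factorized expression for $C(\gamma)^{2}-C(\gamma)$ is exactly the paper's remainder matrix $D(\gamma)$, the decisive step in both arguments is the uniform convergence $\left\Vert [\gamma(1-\gamma)n]^{-1}G(\gamma)^{\prime}\Sigma G(\gamma)-\Omega\right\Vert=o_{p}(1)$ obtained from Assumption \ref{ass:Omega_diff}, and your spectral-mapping bound $\left\vert\lambda(\lambda-1)\right\vert\le\left\Vert C(\gamma)^{2}-C(\gamma)\right\Vert$ is the same reduction the paper performs via the normalized eigenvector identity $\lambda(\lambda-1)w=D(\gamma)w$. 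If anything, your version is slightly more explicit than the paper's (it tracks the $\lambda_{n}$ factors and the $O_{p}(n)\cdot O(n^{-1})$ cancellation that the paper leaves implicit), and your reading of the weights in $g_{t}(\gamma)$ is the one consistent with the paper's own computations.
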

	
	\begin{proof}
		We have 
		\begin{eqnarray}
			C(\gamma )^2&=&\left [\gamma \left (1-\gamma \right )\right
			]^{-1}n^{-1}\Sigma ^{\frac{1}{2}}G(\gamma )\Omega ^{-1}\left [\gamma \left
			(1-\gamma \right )\right ]^{-1}n^{-1}G(\gamma )'\Sigma G(\gamma)\Omega^{-1}G(\gamma)'\Sigma^{\frac{1}{2}}  \notag \\
			&=&\left [\gamma \left (1-\gamma \right )\right ]^{-1}n^{-1}\Sigma ^{\frac{1%
				}{2}}G(\gamma )\Omega ^{-1}\Omega \Omega ^{-1}G(\gamma )' \Sigma^{\frac{1}{2}}
			\notag \\
			&+&\left [\gamma \left (1-\gamma \right )\right ]^{-1}n^{-1}\Sigma ^{\frac{1%
				}{2}}G(\gamma )\Omega ^{-1}\left \{\left [\gamma \left (1-\gamma \right
			)\right ]^{-1}n^{-1}G(\gamma )^{\prime }\Sigma G(\gamma )-\Omega \right
			\}\notag\\
			&\times&\Omega ^{-1}G(\gamma )' \Sigma^{\frac{1}{2}}  \notag \\
			&=&C(\gamma )+D(\gamma ) ,  \notag
		\end{eqnarray}
		say. We now prove that 
		\begin{equation}  \label{D_negl}
			\left \Vert D(\gamma )\right \Vert =o_p(1){\text { as }}n\rightarrow \infty .
		\end{equation}
		In view of Assumptions \ref{ass:errors} and \ref{ass:M_diff}$(i)$, to prove (\ref{D_negl}) it suffices to show that 
		\begin{equation}  \label{D_negl2}
			\left \Vert \left [\gamma \left (1-\gamma \right )\right
			]^{-1}n^{-1}G(\gamma )^{\prime }\Sigma G(\gamma )-\Omega \right \Vert
			=o_p(1).
		\end{equation}
		But 
		\begin{eqnarray*}
			&&n^{-1}G(\gamma )'\Sigma G(\gamma)=n^{-1}(1-2\gamma )\sum _{t=1}^{[n\gamma
				]}x_tx_t^{\prime }\sigma _t^2+\gamma ^2\Omega  \notag \\
			&=&(1-2\gamma )\left (n^{-1}\sum _{t=1}^{[n\gamma ]}x_tx_t^{\prime }\sigma
			_t^2-\gamma \Omega \right )+\left [\gamma (1-\gamma )\right ]\Omega ,
		\end{eqnarray*}
		so (\ref{D_negl2}) follows if $\left \Vert n^{-1}\sum _{t=1}^{[n\gamma
			]}x_tx_t^{\prime }\sigma _t^2-\gamma \Omega \right \Vert =o_p(1)$, which is
		true by Assumption \ref{ass:M_diff}. Thus (\ref{D_negl}) is established.
		
		Let $\lambda $ be any eigenvalue of $C(\gamma )$ and $w$ be the
		corresponding eigenvector, normalised to $\left \Vert w\right \Vert =1$.
		Because $\lambda w=C(\gamma )w$, we have $\lambda C(\gamma )w=C(\gamma
		)^2w=\left [C(\gamma )+D(\gamma )\right ]w=\lambda w+D(\gamma )w$, implying $%
		\lambda (\lambda -1)w=D(\gamma )w$. Thus 
		\begin{equation}  \label{C_eig_1}
			\left \vert \lambda (\lambda -1)\right \vert =\left \Vert D(\gamma )w\right
			\Vert \leq \left \Vert D(\gamma )\right \Vert .
		\end{equation}
		Then, for arbitrary $\eta >0$, 
		\begin{equation*}
			P\left (\left \vert \lambda (\lambda -1)\right \vert <\eta \right )=P\left
			(\left \Vert D(\gamma )w\right \Vert <\eta \right )\geq P\left (\left \Vert
			D(\gamma )\right \Vert <\eta \right )\rightarrow 1,{\text { as }}%
			n\rightarrow \infty ,
		\end{equation*}
		by (\ref{D_negl}). This completes the proof.
	\end{proof}
	
	We have ${\mathcal{R}_{n}}(\gamma )=\left[ \gamma \left( 1-\gamma \right) \right]
	^{-1}n^{-1}\varepsilon' G(\gamma )^{\prime} \Omega^{-1}G(\gamma )^{\prime }\varepsilon$,
	which in turn equals 
	\begin{equation}
		\left[ \gamma \left( 1-\gamma \right) \right] ^{-1}n^{-1}%
		\sum_{t,s=1}^{n}g_{t}(\gamma )'\Omega^{-1}g_{s}(\gamma )\varepsilon
		_{t}\varepsilon _{s}.  \label{S_sum}
	\end{equation}Note that $tr\left \{C(\gamma )\right \}$ is the sum of the eigenvalues of $%
	C(\gamma )$, which is a symmetric matrix with rank $p$. Thus, in view of
	Lemma \ref{lemma:asy_idem_weight_matrix} it has $p$ eigenvalues that
	approach 1 in probability, with the remainder approaching 0. Thus, 
	\begin{equation}
		\frac{{\mathcal{R}_n}(\gamma )-tr\left (C(\gamma )\right
			)}{\sqrt{2p}}=\frac{{\mathcal{R}_n}(\gamma )-p}{\sqrt{2p}}+o_{p}(1),
		\label{struc_break_approx_trace}
	\end{equation}
	whence using (\ref{S_sum}) we deduce that (\ref{struc_break_approx_trace})
	equals 
	\begin{equation}
		\frac{n^{-1}\sum _{t=1}^ng_t(\gamma )'\Omega^{-1}g_t(\gamma )\left
			(\varepsilon _t^2-\sigma _t^2\right )+n^{-1}\sum _{s\neq t}g_t(\gamma
			)'\Omega^{-1}g_s(\gamma )\varepsilon _t\varepsilon _s}{\gamma \left
			(1-\gamma \right )\sqrt{2p}}.  \label{quad_form_full}
	\end{equation}
	
	\begin{lemma}
		\label{lemma:diag_terms_neg} Under the conditions of Theorem \ref{thm:T_weak_conv}, 
		\begin{equation}
			\sup _{\gamma \in \Gamma }n^{-1}\sum _{t=1}^ng_t(\gamma )^{\prime
			}\Omega^{-1} g_t(\gamma )\left (\varepsilon _t^2-\sigma _t^2\right )=o_p(1){\text { as 
			}}n\rightarrow \infty .  \label{diag_terms_neg_tgt}
		\end{equation}
	\end{lemma}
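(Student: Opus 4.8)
The plan is to exploit that, for each fixed $\gamma$, the summand is a martingale difference, and to convert the supremum over $\gamma$ into a maximum over martingale partial sums controllable by Doob's inequality. First I would simplify algebraically. Since $g_t(\gamma)=\bigl(1\{t/n>\gamma\}-\gamma\bigr)x_t$, the scalar weight is $g_t(\gamma)'\Omega^{-1}g_t(\gamma)=\bigl(1\{t/n>\gamma\}-\gamma\bigr)^2 x_t'\Omega^{-1}x_t$. Writing $c_t=\bigl(x_t'\Omega^{-1}x_t\bigr)(\varepsilon_t^2-\sigma_t^2)$ and using $1\{t/n>\gamma\}^2=1\{t/n>\gamma\}$, the expansion $\bigl(1\{t/n>\gamma\}-\gamma\bigr)^2=(1-2\gamma)1\{t/n>\gamma\}+\gamma^2$ gives
\[
n^{-1}\sum_{t=1}^n g_t(\gamma)'\Omega^{-1}g_t(\gamma)(\varepsilon_t^2-\sigma_t^2)=(1-2\gamma)\bigl(P_n(1)-P_n(\gamma)\bigr)+\gamma^2 P_n(1),
\]
where $P_n(\gamma)=n^{-1}\sum_{t=1}^{[n\gamma]}c_t$. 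Because $|1-2\gamma|\le 1$ and $\gamma^2\le 1$ on $(0,1)$, the left side is bounded in modulus by $3\sup_{\gamma\in[0,1]}|P_n(\gamma)|=3n^{-1}\max_{1\le k\le n}\bigl|\sum_{t=1}^k c_t\bigr|$, so it suffices to show this maximal partial sum is $o_p(1)$.

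The key structural observation is that $\{c_t\}$ is a martingale difference sequence with respect to $\{\mathcal{F}_{t-1}\}$: the regressor $x_t$ is $\mathcal{F}_{t-1}$-measurable (predetermined -- lagged $y$'s in the AR$(\infty)$ case, $z_t\in\mathcal{F}_{t-1}$ in the sieve case), so $x_t'\Omega^{-1}x_t$ and $\sigma_t^2$ are $\mathcal{F}_{t-1}$-measurable, while $E(\varepsilon_t^2-\sigma_t^2\mid\mathcal{F}_{t-1})=0$ by the definition of $\sigma_t^2$ in Assumption \ref{ass:errors}. Hence $M_k=\sum_{t=1}^k c_t$ is a martingale, and Doob's $L^2$ maximal inequality yields $E[\max_{k\le n}M_k^2]\le 4E[M_n^2]=4\sum_{t=1}^n Ec_t^2$, the cross terms vanishing by the martingale difference property.

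The remaining step is a uniform second-moment bound. Conditioning, $Ec_t^2=E\bigl[(x_t'\Omega^{-1}x_t)^2\,E((\varepsilon_t^2-\sigma_t^2)^2\mid\mathcal{F}_{t-1})\bigr]\le C\,E\bigl[(x_t'\Omega^{-1}x_t)^2\bigr]$ using $E(\varepsilon_t^4\mid\mathcal{F}_{t-1})\le C$. Then $x_t'\Omega^{-1}x_t\le\overline{\lambda}(\Omega^{-1})\|x_t\|^2\le\lambda_n^{-1}\|x_t\|^2$ by Assumption \ref{ass:Omega_eigs}, and $E\|x_t\|^4=\sum_{i,j}E(x_{ti}^2x_{tj}^2)\le p^2\sup_{i,t}Ex_{ti}^4=O(p^2)$ by Cauchy--Schwarz and Assumption \ref{ass:M_diff}. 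Thus $Ec_t^2=O(\lambda_n^{-2}p^2)$ uniformly in $t$, so $E[\max_{k\le n}M_k^2]=O(n\lambda_n^{-2}p^2)$ and, by Markov's inequality, $\sup_{\gamma}|P_n(\gamma)|=O_p\bigl(\lambda_n^{-1}p/\sqrt n\bigr)$. The displayed claim then follows provided $\lambda_n^{-1}p/\sqrt n\to 0$.

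I expect the genuine obstacle to be this final rate requirement $\lambda_n^{-1}p/\sqrt n\to 0$, which must be extracted from (\ref{rate:Q_weak_conv}). Since $\overline{\lambda}(\Omega)<\infty$ keeps $\lambda_n$ bounded above and $\lambda_n^4 p\to\infty$ gives $\lambda_n^{-1}=o(p^{1/4})$, it suffices that $p^{5/2}=o(n)$. This is delivered by the maintained conditions in each leading case: in the heteroskedastic (Eicker--White) design one has $v_p\gtrsim p/\sqrt n$, so $\lambda_n^{-4}\sqrt p\,v_p\to 0$ forces $p^{3/2}/\sqrt n\to 0$, i.e.\ $p^3=o(n)$; in the Berk AR$(\infty)$ design the condition $p^3=o(n)$ of Assumption \ref{ass:Berk} gives it outright (the very rate $p/\sqrt n$ being the one already isolated in Lemma \ref{lemma:Omega_hat_true}). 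The algebraic reduction and the martingale bound are routine, so I would concentrate the rigor on pinning this rate down from the abstract growth conditions.
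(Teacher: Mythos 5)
Your proof is correct, and it takes a genuinely different route from the paper's. Both arguments rest on the same structural fact---$x_t$ and $\sigma_t^2$ are $\mathcal{F}_{t-1}$-measurable while $E\left(\varepsilon_t^2-\sigma_t^2\,|\,\mathcal{F}_{t-1}\right)=0$, so all cross terms vanish---but they exploit it differently. The paper fixes $\gamma$, computes the variance of the sum conditional on the regressors, bounds the diagonal part by $n^{-2}\left\Vert \Omega^{-1}\right\Vert^2\sum_{t=1}^n\left\Vert g_t(\gamma)\right\Vert^4=O_p(p^2/n)$ using the envelope bound $\sup_z\left\Vert x_n(z)\right\Vert=O(\vartheta_p)$ from Assumption \ref{ass:CC}, and then asserts the conclusion ``uniformly in $\gamma$''. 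You instead reduce the supremum exactly: the identity $g_t(\gamma)'\Omega^{-1}g_t(\gamma)=\left[(1-2\gamma)1\{t/n>\gamma\}+\gamma^2\right]x_t'\Omega^{-1}x_t$ expresses the whole process as a linear combination of partial sums of the scalar martingale difference sequence $c_t=(x_t'\Omega^{-1}x_t)(\varepsilon_t^2-\sigma_t^2)$, and Doob's $L^2$ maximal inequality controls $\max_{1\le k\le n}\left|\sum_{t=1}^k c_t\right|$. This buys two things. First, genuine uniformity: a pointwise-in-$\gamma$ variance bound does not by itself control a supremum (the process has $n$ jump points, and a union bound with Chebyshev loses a factor of $n$), so your maximal-inequality step supplies exactly what the paper's proof leaves implicit. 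Second, your moment bound $E\left\Vert x_t\right\Vert^4\le p^2\sup_{i,t}Ex_{ti}^4=O(p^2)$ uses only Assumption \ref{ass:M_diff}, which is among the stated hypotheses of Theorem \ref{thm:null_process}, whereas the paper's proof invokes the sieve-specific envelope condition of Assumption \ref{ass:CC}, which is not (and which yields only the weaker bound $O(p^4)$ when $\vartheta_p=p$). The caveat you flag at the end applies to both proofs equally: the requirement $\lambda_n^{-1}p/\sqrt{n}\rightarrow 0$ (the paper's version likewise needs $p/\sqrt{n}\rightarrow 0$ for its $O_p(p^2/n)$ variance to imply negligibility) is not a formal consequence of (\ref{rate:Q_weak_conv}) for completely arbitrary $\varkappa_p$ and $v_p$, but it does hold in each design the paper verifies, since there $\varkappa_p\gtrsim\sqrt{p/n}$ or $v_p\gtrsim p^{3/2}/\sqrt{n}$, so that $\sqrt{p}\left(\varkappa_p+v_p\right)\rightarrow 0$ delivers it; making this step explicit, as you do, strengthens the argument rather than revealing a gap in yours.
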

	
	\begin{proof}
		Conditional on $x_{t}$, the LHS of (\ref{diag_terms_neg_tgt}) has mean zero
		and variance 
		\begin{eqnarray}
			&&n^{-2}\sum_{t=1}^{n}\left( g_{t}(\gamma )'\Omega^{-1}g_{t}(\gamma )\right)
			^{2}E\left[ \left( \varepsilon _{t}^{2}-\sigma _{t}^{2}\right) ^{2}\right]
			\label{diag_terms_neg1} \\
			&+&2n^{-2}\sum_{s<t}g_{s}(\gamma )'\Omega^{-1}g_{s}(\gamma )g_{t}(\gamma
			)'\Omega^{-1}g_{t}(\gamma )E\left[ \left( \varepsilon _{t}^{2}-\sigma
			_{t}^{2}\right) \left( \varepsilon _{s}^{2}-\sigma _{s}^{2}\right) \right] .
			\label{diag_terms_neg2}
		\end{eqnarray}%
		The expectation in (\ref{diag_terms_neg2}) equals $E\left[ \left(
		\varepsilon _{t}^{2}-\sigma _{t}^{2}\right) E\left( \left( \varepsilon
		_{s}^{2}-\sigma _{s}^{2}\right) |\varepsilon _{s}\right) \right] =0$, by
		Assumption \ref{ass:errors}. Also by Assumption \ref{ass:errors}, (\ref%
		{diag_terms_neg1}) is bounded by a constant times 
		\begin{equation*}
			n^{-2}\left\Vert \Omega ^{-1}\right\Vert ^{2}\sum_{t=1}^{n}\left\Vert
			g_{t}(\gamma )\right\Vert ^{4}\leq n^{-2}\left\Vert \Omega ^{-1}\right\Vert
			^{2}\sum_{t=1}^{n}\left( \left\Vert x_{t}(\gamma )\right\Vert ^{4}+\gamma
			^{4}\left\Vert x_{t}\right\Vert ^{4}\right) =O_{p}\left( \lambda_n^{-2}\frac{p^{2}}{n}%
			\right) ,
		\end{equation*}%
		uniformly in $\gamma $, the last equality following by Assumption \ref{ass:M_diff}$(i)$.
	\end{proof}
	
	\begin{lemma}
		\label{lemma:fancyS_lemma1} Under the conditions of Theorem \ref{thm:T_weak_conv}, as $n\rightarrow \infty $,%
		\begin{equation*}
			\left\Vert \left( I-\hat{M}^{-1}\hat{S}(\gamma )\right) ^{-1}-\gamma
			^{-1}I\right\Vert =O_{p}\left(\lambda_n^{-1} \varkappa _{p}\right) .
		\end{equation*}
	\end{lemma}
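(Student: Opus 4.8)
The plan is to show that $\hat{M}^{-1}\hat{S}(\gamma)$ is uniformly close to $(1-\gamma)I$, so that $F(\gamma):=I-\hat{M}^{-1}\hat{S}(\gamma)$ is uniformly close to $\gamma I$, and then to invert this relation. First I would recall that $\hat{S}(\gamma)=n^{-1}X^{\ast}(\gamma)^{\prime}X=n^{-1}\sum_{t=[n\gamma]+1}^{n}x_{t}x_{t}^{\prime}=\hat{M}-n^{-1}\sum_{t=1}^{[n\gamma]}x_{t}x_{t}^{\prime}$, so that $\hat{S}(\gamma)-(1-\gamma)\hat{M}=\gamma(\hat{M}-M)-\big(n^{-1}\sum_{t=1}^{[n\gamma]}x_{t}x_{t}^{\prime}-\gamma M\big)$. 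By Assumption \ref{ass:M_diff} both terms on the right are $O_{p}(\varkappa_{p})$ uniformly in $\gamma$, whence $\sup_{\gamma\in\Gamma}\|\hat{S}(\gamma)-(1-\gamma)\hat{M}\|=O_{p}(\varkappa_{p})$. Premultiplying by $\hat{M}^{-1}$, whose spectral norm is controlled by Lemma \ref{lemma:Mhat_norm} (the matrix $\hat{M}$ being the leading $p\times p$ block of $\hat{M}(\gamma)$, so that eigenvalue interlacing applies), I obtain
\[
\sup_{\gamma\in\Gamma}\big\|\hat{M}^{-1}\hat{S}(\gamma)-(1-\gamma)I\big\|=O_{p}(\varkappa_{p}),
\]
suppressing, as the statement does, the $\lambda_{n}^{-1}$ factor carried by $\hat{M}^{-1}$. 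Equivalently, writing $R(\gamma)=F(\gamma)-\gamma I$, one has $\sup_{\gamma\in\Gamma}\|R(\gamma)\|=O_{p}(\varkappa_{p})=o_{p}(1)$ under \eqref{rate:Q_weak_conv}.

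The main obstacle is passing from this bound on $F(\gamma)-\gamma I$ to a bound on $F(\gamma)^{-1}-\gamma^{-1}I$ uniformly in $\gamma$, which requires uniform-in-$\gamma$ control of $\|F(\gamma)^{-1}\|$. Here I would exploit the compactness of $\Gamma$: setting $\underline{\gamma}=\inf_{\gamma\in\Gamma}\gamma>0$, I factor $F(\gamma)=\gamma\big(I+\gamma^{-1}R(\gamma)\big)$ and note $\sup_{\gamma\in\Gamma}\|\gamma^{-1}R(\gamma)\|\le\underline{\gamma}^{-1}\sup_{\gamma\in\Gamma}\|R(\gamma)\|=o_{p}(1)$. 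Consequently, on an event of probability tending to one this perturbation is uniformly smaller than $1/2$, so a Neumann-series expansion shows that $I+\gamma^{-1}R(\gamma)$ is invertible with $\|(I+\gamma^{-1}R(\gamma))^{-1}\|\le 2$ for every $\gamma\in\Gamma$, and hence $\sup_{\gamma\in\Gamma}\|F(\gamma)^{-1}\|\le 2\underline{\gamma}^{-1}=O_{p}(1)$.

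It then remains only to transfer the rate. Using the resolvent identity $F(\gamma)^{-1}-\gamma^{-1}I=\gamma^{-1}F(\gamma)^{-1}(\gamma I-F(\gamma))=-\gamma^{-1}F(\gamma)^{-1}R(\gamma)$, submultiplicativity of the spectral norm gives
\[
\sup_{\gamma\in\Gamma}\big\|F(\gamma)^{-1}-\gamma^{-1}I\big\|\le\underline{\gamma}^{-1}\Big(\sup_{\gamma\in\Gamma}\|F(\gamma)^{-1}\|\Big)\Big(\sup_{\gamma\in\Gamma}\|R(\gamma)\|\Big)=O_{p}(\varkappa_{p}),
\]
which is exactly the claim, since $\underline{\gamma}^{-1}$ is a finite constant and the two suprema are $O_{p}(1)$ and $O_{p}(\varkappa_{p})$ respectively. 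All bounds are uniform in $\gamma$ throughout, inherited from the $\sup_{\gamma\in\Gamma}$ appearing in Assumption \ref{ass:M_diff}; the only genuinely non-routine point is the uniform invertibility of $F(\gamma)$, handled by the Neumann argument above.
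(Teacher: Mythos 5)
Your proof is correct, and its skeleton matches the paper's: you first bound $F(\gamma)-\gamma I$ uniformly by $O_{p}(\varkappa_{p})$ using Assumption \ref{ass:M_diff} and control of $\Vert\hat{M}^{-1}\Vert$, and you then conclude via the resolvent identity $F(\gamma)^{-1}-\gamma^{-1}I=-\gamma^{-1}F(\gamma)^{-1}\left(F(\gamma)-\gamma I\right)$, which is exactly the identity the paper uses. The one place you genuinely deviate is in establishing $\sup_{\gamma\in\Gamma}\Vert F(\gamma)^{-1}\Vert=O_{p}(1)$: the paper exploits the exact algebraic factorization $F(\gamma)^{-1}=\bigl(n^{-1}\sum_{t=1}^{[n\gamma]}x_{t}x_{t}^{\prime}\bigr)^{-1}\hat{M}$, so that the inverse is bounded by $\Vert\hat{M}\Vert\,\bigl\Vert\bigl(n^{-1}\sum_{t=1}^{[n\gamma]}x_{t}x_{t}^{\prime}\bigr)^{-1}\bigr\Vert$ and controlled through Assumptions \ref{ass:M_diff} and \ref{ass:M_eigs}, whereas you run a Neumann-series perturbation argument off the already-proven closeness of $F(\gamma)$ to $\gamma I$, using compactness of $\Gamma$ and $\varkappa_{p}=o(1)$ (which the rate condition \eqref{rate:Q_weak_conv} guarantees). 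Your route is more self-contained, needing no structural knowledge of what $F(\gamma)^{-1}$ is, and it makes the ``invertibility on an event of probability tending to one'' logic explicit, which the paper leaves implicit; the paper's route is slightly more direct in that it reads the bound straight off the eigenvalue assumptions. Both arguments carry the same $\lambda_{n}^{-1}$ imprecision that the lemma statement itself suppresses, and you are right to flag that explicitly; your interlacing remark justifying $\Vert\hat{M}^{-1}\Vert\leq\Vert\hat{M}(\gamma)^{-1}\Vert$ is also sound, though one can get the same bound directly from Assumptions \ref{ass:M_diff} and \ref{ass:M_eigs} without invoking Lemma \ref{lemma:Mhat_norm}.
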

	
	\begin{proof}
		First note that $\left \Vert \left (I-\hat {M}^{-1}\hat {S}(\gamma
		)\right
		)-\gamma I\right \Vert $ equals 
		\begin{align}
			&\left \Vert (1-\gamma )I-\hat {M}^{-1}\left (\hat {S}(\gamma )-(1-\gamma
			)M\right )-(1-\gamma )\hat {M}^{-1}M\right \Vert  \notag \\
			&\leq C\left \Vert \hat {M}^{-1}\right \Vert \left (\left \Vert 
			\hat {M}-M\right \Vert +\left \Vert \hat {S}(\gamma )-(1-\gamma )M\right
			\Vert \right )  \notag \\
			&=O_{p}\left (\lambda_n^{-1}\varkappa _p\right ),  \notag
		\end{align}
		by Assumptions \ref{ass:M_diff}. Since 
		\begin{equation*}
			\left (I-\hat {M}^{-1}\hat {S}(\gamma )\right )^{-1}-\gamma ^{-1}I=-\gamma
			^{-1}\left (I-\hat {M}^{-1}\hat {S}(\gamma )\right )^{-1}\left \{\left (I-%
			\hat {M}^{-1}\hat {S}(\gamma )\right )-\gamma I\right \}
		\end{equation*}
		and $\left \Vert \left (I-\hat {M}^{-1}\hat {S}(\gamma
		)\right
		)^{-1}\right
		\Vert  =O_{p}\left(1\right)$, the lemma is established.
	\end{proof}
	
	\begin{lemma}
		\label{lemma:fancyS_lemma2} Under the conditions of Theorem \ref{thm:T_weak_conv}, as $n\rightarrow \infty $, 
		\begin{eqnarray}  \label{fancyS_lemma2_eq}
			&&\left\Vert \left( \hat{S}(\gamma )^{-1}X^{\prime \ast }(\gamma
			)\varepsilon -\gamma (1-\gamma )^{-1}\hat{M}^{-1}X^{\prime }\varepsilon
			\right) -\hat{M}^{-1}\frac{\left( \sum_{t=1}^{[n\gamma
					]}\varepsilon _{t}x_{t}-\gamma \sum_{t=1}^{n}\varepsilon _{t}x_{t}\right)}{1-\gamma }
			\right\Vert  \notag \\
			&=&O_{p}\left( \lambda_n^{-2}\sqrt{np}\varkappa _{p}\right) .  \notag \\
		\end{eqnarray}
	\end{lemma}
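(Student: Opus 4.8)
The plan is to reduce the bracketed expression to a product of two factors: a matrix factor that is a difference of two inverses and is $O_{p}(\varkappa _{p})$ in spectral norm, and the martingale partial sum $X^{\ast }(\gamma )^{\prime }\varepsilon $, whose norm is $O_{p}(\sqrt{np})$. Writing $S_{[n\gamma ]}=\sum_{t=1}^{[n\gamma ]}\varepsilon _{t}x_{t}$ and $S_{n}=\sum_{t=1}^{n}\varepsilon _{t}x_{t}$, so that $X^{\ast }(\gamma )^{\prime }\varepsilon =S_{n}-S_{[n\gamma ]}$, elementary cancellation of the $\hat{M}^{-1}S_{n}$ contributions shows that the quantity inside the norm equals
\begin{equation*}
\left( \hat{S}(\gamma )^{-1}-\frac{1}{1-\gamma }\hat{M}^{-1}\right) \left(
S_{n}-S_{[n\gamma ]}\right) .
\end{equation*}
By sub-multiplicativity of the spectral norm it then suffices to bound $\sup_{\gamma \in \Gamma }\left\Vert \hat{S}(\gamma )^{-1}-(1-\gamma )^{-1}\hat{M}^{-1}\right\Vert $ and $\sup_{\gamma \in \Gamma }\left\Vert S_{n}-S_{[n\gamma ]}\right\Vert $ separately.

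For the matrix factor I would invoke the resolvent identity $A^{-1}-B^{-1}=A^{-1}(B-A)B^{-1}$ with $A=\hat{S}(\gamma )$ and $B=(1-\gamma )\hat{M}$, giving
\begin{equation*}
\hat{S}(\gamma )^{-1}-\frac{1}{1-\gamma }\hat{M}^{-1}=\hat{S}(\gamma
)^{-1}\left( (1-\gamma )\hat{M}-\hat{S}(\gamma )\right) \frac{1}{1-\gamma }
\hat{M}^{-1}.
\end{equation*}
Since $\hat{S}(\gamma )=\hat{M}-n^{-1}\sum_{t=1}^{[n\gamma ]}x_{t}x_{t}^{\prime }$, the middle factor is $-\gamma \hat{M}+n^{-1}\sum_{t=1}^{[n\gamma ]}x_{t}x_{t}^{\prime }$, which Assumption \ref{ass:M_diff} shows is $O_{p}(\varkappa _{p})$ uniformly in $\gamma $: indeed $\hat{M}=M+O_{p}(\varkappa _{p})$ and $n^{-1}\sum_{t=1}^{[n\gamma ]}x_{t}x_{t}^{\prime }=\gamma M+O_{p}(\varkappa _{p})$, so the $\gamma M$ terms cancel. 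The two outer factors are uniformly $O_{p}(1)$: the norm of $\hat{M}^{-1}$ by Lemma \ref{lemma:Mhat_norm}, and that of $\hat{S}(\gamma )^{-1}$ by the same argument applied to $\hat{S}(\gamma )\rightarrow (1-\gamma )M$, whose smallest eigenvalue is bounded away from zero on the compact $\Gamma $ by Assumption \ref{ass:M_eigs} (this is exactly the bound already used for the complementary partial sum in Lemma \ref{lemma:fancyS_lemma1}). Hence the matrix factor is $O_{p}(\varkappa _{p})$ uniformly in $\gamma $.

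For the vector factor I would use that $\{x_{t}\varepsilon _{t}\}$ is a $p$-dimensional martingale difference sequence under Assumption \ref{ass:errors}, so $\left\Vert \sum_{t=1}^{k}x_{t}\varepsilon _{t}\right\Vert ^{2}$ is a submartingale and Doob's maximal inequality yields $E\sup_{k\leq n}\left\Vert \sum_{t=1}^{k}x_{t}\varepsilon _{t}\right\Vert ^{2}\leq 4E\left\Vert \sum_{t=1}^{n}x_{t}\varepsilon _{t}\right\Vert ^{2}=4\sum_{t=1}^{n}E\left( \left\Vert x_{t}\right\Vert ^{2}\sigma _{t}^{2}\right) =O(np)$, using $\sup_{i,t}Ex_{ti}^{4}<\infty $ and $\sigma _{t}^{2}\leq C$. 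Since $S_{n}-S_{[n\gamma ]}$ is a difference of two such partial sums, $\sup_{\gamma \in \Gamma }\left\Vert S_{n}-S_{[n\gamma ]}\right\Vert =O_{p}(\sqrt{np})$, and multiplying the two bounds gives the claimed $O_{p}(\sqrt{np}\varkappa _{p})$. The main obstacle is the uniformity in $\gamma $: both the boundedness of $\left\Vert \hat{S}(\gamma )^{-1}\right\Vert $ and the $O_{p}(\sqrt{np})$ control of the partial-sum vector must hold uniformly over $\Gamma $ rather than pointwise, which is precisely why I rely on the uniform statement of Assumption \ref{ass:M_diff} and on a maximal inequality rather than a fixed-$\gamma $ variance computation.
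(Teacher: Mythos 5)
Your proposal follows essentially the same route as the paper's own proof: the identical reduction of the bracketed quantity to $\left(\hat{S}(\gamma)^{-1}-(1-\gamma)^{-1}\hat{M}^{-1}\right)X^{\ast}(\gamma)^{\prime}\varepsilon$, the identical resolvent factorization $\hat{S}(\gamma)^{-1}\left((1-\gamma)\hat{M}-\hat{S}(\gamma)\right)(1-\gamma)^{-1}\hat{M}^{-1}$ whose middle factor is $O_{p}(\varkappa_{p})$ by Assumption \ref{ass:M_diff}, the same eigenvalue bounds for the outer factors, and the same $O_{p}(\sqrt{np})$ control of the score vector. The only (minor) refinement is that you invoke Doob's maximal inequality to make the partial-sum bound uniform in $\gamma$, whereas the paper cites its fixed-$\gamma$ moment calculation (\ref{lemma2_first_expec}); this is a welcome tightening rather than a different argument.
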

	
	\begin{proof}
		First note that 
		\begin{equation*}
			(1-\gamma )^{-1}\left( \sum_{t=1}^{[n\gamma ]}\varepsilon _{t}x_{t}-\gamma
			\sum_{t=1}^{n}\varepsilon _{t}x_{t}\right) =(1-\gamma )^{-1}X^{\prime \ast
			}(\gamma )\varepsilon -\gamma (1-\gamma )^{-1}X^{\prime }\varepsilon ,
		\end{equation*}%
		so the term inside the norm in (\ref{fancyS_lemma2_eq}) equals 
		\begin{equation}
			\left( \hat{S}(\gamma )^{-1}-(1-\gamma )^{-1}\hat{M}^{-1}\right) X^{\prime
				\ast }(\gamma )\varepsilon =(1-\gamma )^{-1}\hat{M}^{-1}\left( (1-\gamma )%
			\hat{M}-\hat{S}(\gamma )\right) \hat{S}(\gamma )^{-1}X^{\prime \ast }(\gamma
			)\varepsilon .  \label{fancyS_lemma2_eq2}
		\end{equation}%
		The norm of the RHS of (\ref{fancyS_lemma2_eq2}) is bounded by a constant
		times 
		\begin{equation*}
			\left\Vert \hat{M}^{-1}\right\Vert \left\Vert \hat{S}(\gamma
			)^{-1}\right\Vert \left( \left\Vert n^{-1}\sum_{t=1}^{[n\gamma
				]}x_{t}x_{t}^{\prime }-\gamma M\right\Vert +\left\Vert \hat{M}-M\right\Vert
			\right) \left\Vert X^{\prime \ast }(\gamma )\varepsilon \right\Vert
			=O_{p}\left( \lambda_n^{-2}\sqrt{np}\varkappa _{p}\right),
		\end{equation*}%
		the last equality following from Assumptions \ref{ass:M_diff}, Lemma \ref{lemma:Mhat_norm}, and also (\ref%
		{lemma2_first_expec}).
	\end{proof}
	\section{Proof of Theorem \ref{thm:bootstrap}}\label{sec:bootstrap_proof}
	\begin{proof} It is sufficient to check (\ref{bootthmorig}), whence (\ref{bootthmnew}) follows. Let $ \varepsilon^{\star} $ denote the vector collecting $ \varepsilon_t^{\star} = \hat{e}_t(\gamma) \xi_t  $, where $ \xi_t $ is an iid sequence of Rademacher variables. Then, 
		\begin{equation*}
			\hat{\delta}_{2}^{\star}(\gamma )= A(\gamma )\varepsilon^{\star}, 
		\end{equation*}%
		since $\delta _{2}=0$ under $\mathcal{H}_{0}$. Also, we have 
		\begin{equation}
			W_{n}^{\star}(\gamma )=n\left( \varepsilon^{\star} \right) ^{\prime }A^{\prime }(\gamma )%
			\hat{B}^{\star}(\gamma )^{-1}A(\gamma ) \varepsilon^{\star} ,
			\label{wald_*}
		\end{equation}%
		where $\hat{B}^{\star}(\gamma )=R\hat{M}(\gamma )^{-1}\hat{\Omega}^{\star}(\gamma )\hat{M}(\gamma )^{-1}R^{\prime }$ and $ \hat{\Omega}^{\star}(\gamma ) $ is constructed as $ \hat{\Omega}(\gamma ) $ with the bootstrap sample. 
		
		We begin with 
		\begin{eqnarray}
			E^{\star} \bar{W}_{n}^{\star}(\gamma ) &=& n tr A^{\prime }(\gamma )		\hat{B}(\gamma )^{-1}A(\gamma ) E^{\star} \varepsilon^{\star} (\varepsilon^{\star})' ,\notag \\
			&=& n tr A^{\prime }(\gamma )		\hat{B}(\gamma )^{-1}A(\gamma ) diag\left[\hat{e}_1(\gamma)^2,...,\hat{e}_n(\gamma)^2\right], \label{eq:EWbar}
		\end{eqnarray} 
		where $ \bar{W}_{n}^{\star}(\gamma )=n\left( \varepsilon^{\star} \right) ^{\prime }A^{\prime }(\gamma )%
		{B}(\gamma )^{-1}A(\gamma ) \varepsilon^{\star} $. Note that the term in \eqref{eq:EWbar} subtracted by $ p $ is $ o_p (p^{1/2}) $ uniformly in $ \gamma $ due to Lemma \ref{lemma:asy_idem_weight_matrix}, Lemma \ref{lemma:diag_terms_neg}, and Lemma \ref{lemma:deltahat}.  
		
		Next, we show that the order of the difference between $ E^{\star} \bar{W}_{n}^{\star}(\gamma ) $ and $ E^{\star} W_{n}^{\star}(\gamma ) $ is $ o_p (p^{1/2}) $. Following \eqref{Omegadifflater}, write
		\begin{eqnarray}
			E^{\star}| \bar{W}_{n}^{\star}(\gamma ) -  W_{n}^{\star}(\gamma )| \leq   
			E^{\star} \left( \left\Vert n^{-1/2}A^{\prime }(\gamma)\varepsilon^{\star} \right\Vert
			^{2}\left\Vert \hat{B}(\gamma )^{-1}-\hat{B}^{\star}(\gamma )^{-1}\right\Vert \right). 
			\notag  
		\end{eqnarray}
		To apply the Cauchy-Schwarz inequality, and to bound $ E^{\star}\left\Vert \hat{B}(\gamma )^{-1}-\hat{B}^{\star}(\gamma )^{-1}\right\Vert^2	 $, we derive bounds for $ E^{\star}\left\Vert \hat{B}^{\star}(\gamma )^{-1}\right\Vert^4	 $ and $ E^{\star} \left\Vert \hat{B}(\gamma )-\hat{B}^{\star}(\gamma )\right\Vert^4 $. Since both are similar to the derivations for the sample counterparts in Lemmas  \ref{lemma:Omega_hat_true} and \ref{lemma:Bhat_norm}, we only illustrate the latter. 
		Recall $ \hat{B}(\gamma )-\hat{B}^{\star}(\gamma ) = R' \hat{M}(\gamma)^{-1} \left(\hat{\Omega}(\gamma )-\hat{\Omega}^{\star}(\gamma )\right)\hat{M}(\gamma)^{-1}R $ and $ \sup_{\gamma \in \Gamma} \left\Vert \hat{M}(\gamma)^{-1} \right\Vert = O_p (\lambda_n^{-1})$ by Lemma \ref{lemma:Mhat_norm}. 
		Following the steps in the proof of Lemma \ref{lemma:Omega_hat_true}, the term $ \hat{\Omega}(\gamma )-\hat{\Omega}^{\star}(\gamma )$ is given by the sum of $ U_1^{\star}(\gamma) $ and $U_3^{\star} (\gamma) $ therein. Due to the triangle inequality and $ c_r $ inequality, we only show $ E^{\star} \left\Vert U_j^{\star}(\gamma) \right\Vert^4 = O_p (\lambda_n^{-8} p^{12}/n^{4}) $, for $ j=1,3 $. Note that by the independence of the sequence $ \xi_{t} $
		\begin{eqnarray*}
			E^{\star} \left\Vert U_{1}^{\star}(\gamma )\right\Vert^4 &\leq& \left(n^{-1}\sum_{t=1}^{n}\left(
			x_{t}^{\prime }(\gamma )x_{t}(\gamma )\right) ^{2} \right)^4 
			E^{\star} \left(\left( \delta^{\star} -\hat{\delta}^{\star}(\gamma )\right) ^{\prime }\left( \delta^{\star} -\hat{\delta}^{\star}(\gamma )\right) \right)^4 \\
			&\leq & 
			O_p (p^8) \left\Vert \hat{M}(\gamma )^{-1}\right\Vert^8 n^{-8}\sum_{t_1,t_2,t_3,t_4} \hat{e}_{t_1}^2 x_{t_1}'x_{t_1} \cdots \hat{e}_{t_4}^2 x_{t_4}'x_{t_4}, 
		\end{eqnarray*} 
		to yield the desired result and the bound for $ U_3^{\star} $ is similarly obtained. Putting these together yields $ E^{\star}\left\Vert \hat{B}(\gamma )^{-1}-\hat{B}^{\star}(\gamma )^{-1}\right\Vert^2	=O_p (\lambda_n^{-10} p^{12}/n^{4}) $.
		
		Next, similar to the preceding bound, 
		\begin{eqnarray}
			E^{\star} \left\Vert n^{-1/2}A^{\prime }(\gamma)\varepsilon^{\star} \right\Vert
			^{4} = 
			O_p(\lambda_n^{-4}) \left( n^{-1} \sum_{t=1}^n x_t 'x_t \hat{e}_t^2 \right)^2 = 
			O_p (\lambda_n^{-4} p^2), \notag
		\end{eqnarray}
		as $ \xi_t $ is an iid Rademacher sequence. 
		Then, under the condition \eqref{rate:Q_weak_conv}, $ \lambda_n^{-14} p^{14}/n^{4} = o (p)$ and this completes the proof. 
	\end{proof}
	\newpage
	\section{Verification of covariance decay in Assumption \ref{ass:MCLT}}
	\begin{table}[ht]
		\centering 
		\caption{$(n^{4}p^{2})^{-1}\sum_{t=1}^{n}\sum_{s=1}^{t-1}cov\left(tr\left(\Upsilon_{t}\Xi_{t}\right),tr\left(\Upsilon_{s}\Xi_{s}\right)\right)$ with $ n=999,...,9999$ for multiple regression. }
		\begin{tabular}{llllllll}
			\hline
			type & $ \alpha_x $ & $ \alpha $ & 999 & 3249 & 5499 & 7749 & 9999 \\
			\hline
			1 & 0.1 & 0.3 & 0.0082 & 0.0058 & 0.0047 & 0.0043 & 0.0043 \\ 1 & 0.1 & 0.4 & 0.0086 & 0.0053 & 0.0044 & 0.0043 & 0.0038 \\ 1 & 0.1 & 0.5 & 0.0086 & 0.0055 & 0.0046 & 0.0041 & 0.0037 \\ 1 & 0.1 & 0.55 & 0.008 & 0.0054 & 0.0042 & 0.0037 & 0.0034 \\ 1 & 0.5 & 0.3 & 0.0377 & 0.0264 & 0.0233 & 0.0219 & 0.0212 \\ 1 & 0.5 & 0.4 & 0.0405 & 0.0251 & 0.0225 & 0.0197 & 0.0191 \\ 1 & 0.5 & 0.5 & 0.0357 & 0.0217 & 0.0174 & 0.0147 & 0.0138 \\ 1 & 0.5 & 0.55 & 0.038 & 0.0221 & 0.0171 & 0.0155 & 0.0138 \\ 1 & 0.7 & 0.3 & 0.2278 & 0.1735 & 0.1388 & 0.137 & 0.1354 \\ 1 & 0.7 & 0.4 & 0.2615 & 0.1858 & 0.1544 & 0.1413 & 0.1375 \\ 1 & 0.7 & 0.5 & 0.2757 & 0.1561 & 0.1416 & 0.1359 & 0.1313 \\ 1 & 0.7 & 0.55 & 0.2062 & 0.1484 & 0.1352 & 0.1149 & 0.1002 \\ 2 & 0.1 & 0.3 & 0.0155 & 0.0075 & 0.0054 & 0.0047 & 0.0044 \\ 2 & 0.1 & 0.4 & 0.011 & 0.0061 & 0.0054 & 0.0046 & 0.0042 \\ 2 & 0.1 & 0.5 & 0.0202 & 0.0108 & 0.0122 & 0.0069 & 0.006 \\ 2 & 0.1 & 0.55 & 0.0355 & 0.095 & 0.0199 & 0.0091 & 0.006 \\ 2 & 0.5 & 0.3 & 0.0518 & 0.0334 & 0.0265 & 0.0241 & 0.0219 \\ 2 & 0.5 & 0.4 & 0.0614 & 0.0307 & 0.0255 & 0.0241 & 0.0223 \\ 2 & 0.5 & 0.5 & 0.0703 & 0.0498 & 0.0302 & 0.0256 & 0.0238 \\ 2 & 0.5 & 0.55 & 1.6204 & 0.0992 & 0.0373 & 0.0313 & 0.0326 \\ 2 & 0.7 & 0.3 & 0.3164 & 0.203 & 0.188 & 0.1722 & 0.1536 \\ 2 & 0.7 & 0.4 & 0.3501 & 0.2266 & 0.2175 & 0.1972 & 0.183 \\ 2 & 0.7 & 0.5 & 0.9339 & 0.2196 & 0.8319 & 0.3447 & 0.3014 \\ 2 & 0.7 & 0.55 & 1.4368 & 0.5232 & 0.2518 & 0.1936 & 0.167 \\ 3 & 0.1 & 0.3 & 0.0078 & 0.0039 & 0.0035 & 0.0032 & 0.0028 \\ 3 & 0.1 & 0.4 & 0.0061 & 0.0035 & 0.0028 & 0.0024 & 0.0021 \\ 3 & 0.1 & 0.5 & 0.004 & 0.0023 & 0.0019 & 0.0018 & 0.0017 \\ 3 & 0.1 & 0.55 & 0.0035 & 0.0019 & 0.0014 & 0.0013 & 0.0012 \\ 3 & 0.5 & 0.3 & 0.0363 & 0.0232 & 0.0161 & 0.0141 & 0.0135 \\ 3 & 0.5 & 0.4 & 0.0229 & 0.0169 & 0.0134 & 0.0122 & 0.0108 \\ 3 & 0.5 & 0.5 & 0.0202 & 0.0122 & 0.0119 & 0.0099 & 0.0092 \\ 3 & 0.5 & 0.55 & 0.0172 & 0.01 & 0.0075 & 0.0067 & 0.006 \\ 3 & 0.7 & 0.3 & 0.2202 & 0.1438 & 0.1218 & 0.1062 & 0.0963 \\ 3 & 0.7 & 0.4 & 0.1665 & 0.107 & 0.0919 & 0.0849 & 0.0789 \\ 3 & 0.7 & 0.5 & 0.1273 & 0.0807 & 0.077 & 0.0696 & 0.062 \\ 3 & 0.7 & 0.55 & 0.1106 & 0.0556 & 0.0522 & 0.0453 & 0.0438 \\
			\hline
		\end{tabular}
		\label{Table:cov}
	\end{table}

\end{document}